\pgfplotsset{compat=1.12}
\definecolor{DarkGreen}{rgb}{0.1,0.5,0.1}
\definecolor{DarkRed}{rgb}{0.5,0.1,0.1}
\definecolor{DarkBlue}{rgb}{0.1,0.1,0.5}
\definecolor{Gray}{rgb}{0.2,0.2,0.2}
\definecolor{c1}{RGB}{38, 70, 83}
\definecolor{c2}{RGB}{42, 157, 143}
\definecolor{c3}{RGB}{233, 196, 106}
\definecolor{c5}{RGB}{231, 111, 81}
\definecolor{c4}{RGB}{244, 162, 97}
\definecolor{c1}{RGB}{38, 70, 83}
\definecolor{c2}{RGB}{42, 157, 143}
\definecolor{c3}{RGB}{233, 196, 106}
\definecolor{c5}{RGB}{231, 111, 81}
\definecolor{c4}{RGB}{244, 162, 97}
\newcommand\blfootnote[1]{%
  \begingroup
  \renewcommand\thefootnote{}\footnote{#1}%
  \addtocounter{footnote}{-1}%
  \endgroup
}
\lstdefinestyle{mystyle}{
    commentstyle=\color{DarkBlue},
    keywordstyle=\color{DarkRed},
    numberstyle=\tiny\color{Gray},
    stringstyle=\color{DarkGreen},
    basicstyle=\footnotesize,
    breakatwhitespace=false,         
    breaklines=true,                 
    captionpos=b,                    
    keepspaces=true,                 
    numbers=left,                    
    numbersep=5pt,                  
    showspaces=false,                
    showstringspaces=false,
    showtabs=false,                  
    tabsize=2
}
\def\draft{1}
\def\submit{0}
    \def\ShowAuthNotes{1}
    \def\ShowAuthNotes{0}
\newcommand{\forsubmit}[1]{#1}
\newcommand{\forreals}[1]{}
\newcommand{\forreals}[1]{#1}
\newcommand{\forsubmit}[1]{}
\newcommand{\authnote}[2]{{ \footnotesize \bf{\color{DarkRed}[#1's Note:
{\color{DarkBlue}#2}]}}}
\newcommand{\authnote}[2]{}
\newtheorem{theorem}{Theorem}[section]
\newtheorem{lemma}[theorem]{Lemma}
\newtheorem{proposition}[theorem]{Proposition}
\newtheorem{definition}[theorem]{Definition}
\newtheorem*{definition*}{Definition}
\newtheorem*{proposition*}{Proposition}
\theoremstyle{definition}
\newtheorem*{example*}{Example}
\newtheoremstyle{example_contd}
{\topsep} {\topsep}%
{}
{}
{\bfseries}
{.}
{1em}
{\thmname{#1} \thmnumber{ #2}\thmnote{#3} (continued)}
\theoremstyle{example_contd}
\newcommand{\chapterref}[1]{\hyperref[ch:#1]{Chapter~\ref{ch:#1}}}
\newcommand{\claimref}[1]{\hyperref[claim:#1]{Claim~\ref{claim:#1}}}
\newcommand{\corollaryref}[1]{\hyperref[cor:#1]{Corollary~\ref{cor:#1}}}
\newcommand{\definitionlabel}[1]{\label{def:#1}}
\newcommand{\definitionref}[1]{\hyperref[def:#1]{Definition~\ref{def:#1}}}
\newcommand{\equationlabel}[1]{\label{eq:#1}}
\newcommand{\equationref}[1]{\hyperref[eq:#1]{Equation~\ref{eq:#1}}}
\newcommand{\factref}[1]{\hyperref[fact:#1]{Fact~\ref{fact:#1}}}
\newcommand{\figurelabel}[1]{\label{fig:#1}}
\newcommand{\figureref}[1]{\hyperref[fig:#1]{Figure~\ref{fig:#1}}}
\newcommand{\tableref}[1]{\hyperref[tab:#1]{Table~\ref{tab:#1}}}
\newcommand{\itemref}[1]{\hyperref[item:#1]{Item~(\ref{item:#1})}}
\newcommand{\lemmalabel}[1]{\label{lem:#1}}
\newcommand{\lemmaref}[1]{\hyperref[lem:#1]{Lemma~\ref{lem:#1}}}
\newcommand{\proplabel}[1]{\label{prop:#1}}
\newcommand{\propref}[1]{\hyperref[prop:#1]{Proposition~\ref{prop:#1}}}
\newcommand{\propositionref}[1]{\hyperref[prop:#1]{Proposition~\ref{prop:#1}}}
\newcommand{\remarkref}[1]{\hyperref[rem:#1]{Remark~\ref{rem:#1}}}
\newcommand{\sectionlabel}[1]{\label{sec:#1}}
\newcommand{\sectionref}[1]{\hyperref[sec:#1]{Section~\ref{sec:#1}}}
\newcommand{\theoremlabel}[1]{\label{thm:#1}}
\newcommand{\theoremref}[1]{\hyperref[thm:#1]{Theorem~\ref{thm:#1}}}
\newcommand{\assumptionref}[1]{\hyperref[ass:#1]{Assumption~\ref{ass:#1}}}
\newcommand{\Esymb}{\mathbb{E}}
\DeclareMathOperator*{\E}{\Esymb}
\renewcommand{\Pr}{\mathrm{Pr}}
\newcommand{\Prob}[1]{\Pr[ #1 ]}
\renewcommand{\hat}{\widehat}
\newcommand{\cD}{{\cal D}}
\newcommand{\cN}{{\cal N}}
\newcommand{\cO}{{\cal O}}
\newcommand{\cX}{{\cal X}}
\renewcommand{\leq}{\leqslant}
\renewcommand{\geq}{\geqslant}
\newcommand{\norm}[1]{\lVert#1\rVert_2}
\newcommand{\R}{\mathbb{R}}
\newcommand{\ignore}[1]{}
\DeclareMathOperator*{\argmax}{arg\,max}
\renewcommand{\epsilon}{\varepsilon}
\newcommand{\eps}{\epsilon}
\newcommand{\remove}[1]{}
\newcommand{\Ber}{\mathrm{Ber}}
\newcommand{\ind}{\mathbf{1}}
\newcommand{\1}{\ind}
\renewcommand{\cX}{\mathcal{X}}
\newcommand{\rand}{\mathrm{rand}}
\newcommand{\randpol}{\pi_{\rand}}
\newcommand{\obs}{s}
\newcommand{\uobs}{t}
\newcommand{\Vlin}{V_{*}^{\mathrm{lin}}}
\newcommand{\Vprob}{V_{*}^{\mathrm{pr}}}
\newcommand{\parlin}{\mathrm{PAR}^{\mathrm{lin}}(\alpha, \gamma_\obs, \Delta_\alpha, \Delta_{r^2})}
\newcommand{\parprob}{\mathrm{PAR}^{\mathrm{pr}}(\alpha, \gamma_\obs, \Delta_\alpha, \Delta_{r^2})}
\title{The Relative Value of Prediction \\ in Algorithmic Decision Making} 
\author{Juan Carlos Perdomo \\ Harvard University}
\date{\today}
\begin{document}

\maketitle
\begin{abstract}
Algorithmic predictions are increasingly used to inform the allocations of goods and interventions in the public sphere.
In these domains, predictions serve as a means to an end. 
They provide stakeholders with insights into likelihood of future events as a means to improve decision making quality, and enhance social welfare. 
However, if maximizing welfare is the ultimate goal, prediction is only a small piece of the puzzle.
There are various other policy levers a social planner might pursue in order to improve bottom-line outcomes, such as expanding access to available goods, or increasing the effect sizes of interventions.

Given this broad range of design decisions, a basic question to ask is: What is the relative value of prediction in algorithmic decision making?
 How do the improvements in welfare arising from better predictions compare to those of other policy levers? 
The goal of our work is to initiate the formal study of these questions. 
Our main results are theoretical in nature. 
We identify simple, sharp conditions determining the relative value of prediction vis-\`{a}-vis expanding access, within several statistical models that are popular amongst quantitative social scientists. Furthermore, we illustrate how these theoretical insights can guide the design of algorithmic decision making systems in practice. 
\end{abstract}
\pagenumbering{gobble}
\pagenumbering{arabic}

\section{Introduction}
Algorithmic predictions are increasingly used in the public sphere to improve the allocation of scarce resources. 
At their core, these prediction algorithms aim to provide decision makers with valuable information regarding the impacts of particular interventions, or the likelihood of future events, in order to determine who will receive a social good.
\blfootnote{Email: jcperdomo@g.harvard.edu}

For instance, in education, over half of US public schools use early warning systems that predict which students are likely to drop out, as a means to target their limited counseling resources to those that need them the most \citep{balfanz2019early, survey}. Similarly, in Israel, officials build risk predictors to determine the likelihood that individuals will develop serious pulmonary disease in order to prioritize people for vaccines \citep{barda2020developing}. 
Beyond these specific examples, this idea that better predictions implies better decisions, and hence higher welfare, is pervasive, and underlies the design of risk prediction tools used in numerous domains (see \sectionref{sec:related_work}).

From a design perspective, however, we see that prediction is really a means to an end, rather than an end in of itself. 
Predictions help inform decisions with the ultimate goal of increasing some context-dependent notion of social welfare (e.g. the number of on-time high school graduates). And, if improving social welfare truly is the goal, there are of course many different policy levers one could experiment with. 
To name a few, apart from optimizing the quality of decision making via better predictions, one could keep the current predictor fixed and focus efforts on expanding access to the available resources (e.g., invest in more counselors so as to intervene on more students). Alternatively, one could also focus on improving the quality of the interventions themselves (e.g., boost the quality of tutoring).

Those tasked with solving these resource allocation problems do not always have the luxury to simultaneously pursue all these various avenues of improvement. Given the broad space of design decisions, we inevitably need to identify the most cost efficient ways of improving overall welfare, and understand: 
\begin{center}
\emph{To what extent is prediction truly the highest order bit in algorithmic decision making?} 
\end{center}
In this work, we initiate the formal study of the relative value of prediction. We aim to develop a principled understanding of how improvements in welfare arising from better prediction compare to those induced through other policy levers such expanding access or enhancing quality.
As a byproduct of our inquiry into the relative value of prediction, we also aim to shed light on a related question: What does it mean for a predictor to be ``good enough'' for decision making? 
At what point do we decide that we no longer need to collect more features or train a more accurate predictor?

Summarizing briefly,  we argue that predictions are sufficient to enable reliable decisions if the relative value of prediction is \emph{small}. That is, if it is relatively more cost efficient to improve welfare by keeping the current predictor fixed and instead focusing efforts on expanding access or pursuing other policy levers. Our theoretical results establish simple and precise criteria for when these conditions hold true.

\subsection{Overview of Framework \& Contributions}

In this paper, we study problems where a social planner is tasked with making a single binary decision for every member of a population, subject to a resource constraint. 
The reader may think about the decision as whether or not an intervention, or social good, is allocated to a particular individual unit. 
Constraints reflect limits on the total amount of goods or interventions that can be allocated. 
While there is a single decision being made per unit, crucially, the effects of these decisions differ across units (alternatively, individuals). The goal of the planner is to efficiently select which units are allocated goods in order to maximize overall welfare.

More formally, we assume that there is a distribution $\cD$ over units represented as pairs $(x_i,w_i)$. Here, $x_i \in \cX$ is the feature representation of unit $i$ and $w_i \in \R$ is the \emph{welfare improvement} achieved by making a positive decision (intervening) on unit $i$.  The planner uses these features to decide on an assignment policy $\pi:\cX \rightarrow \{0,1\}$, where $\pi(x_i)=1$ if unit $i$ is intervened on and is 0 otherwise. 
We formalize the idea that resources are scarce by requiring that $\E[\pi(x)] \leq \alpha$ for some $\alpha \in (0,1)$.
Writing this out as an optimization problem, the planner solves:
\begin{align}
\equationlabel{eq:main_opt_problem}
	\max_{\pi \in \{\cX \rightarrow \{0,1\}\}} \;\E_{(x_i, w_i) \sim \cD}[w_i \pi(x_i)] \;
	\text{s.t} \;  \E_{\cD}[\pi(x)] \leq \alpha.  
\end{align}
Learning plays a central role in these problems since the optimal solution is to first predict the expected welfare improvement $\E[w \mid x]$ and then
allocate positive decisions to the top $\alpha$ fraction of units, $\pi_*(x) = \ind\{ \E[w \mid x] > t(\alpha)\}$, where $t(\alpha)$ is some threshold. In practice, $\E[w \mid x]$ is of course not known and we approximate the solution by finding a predictor $f$ such that $f(x) \approx \E[w \mid x]$.

\paragraph{The Prediction-Access Ratio.} We formalize the relative value of prediction through a technical notion we call the \emph{prediction-access ratio}, or PAR. Put simply, this is the ratio between the marginal improvement in social welfare that comes from improving the predictor $f$, and the improvement achieved by keeping prediction fixed and expanding access: 
\begin{align*}
\emph{\text{PAR}} = \frac{\emph{\text{Marginal Improvement from Expanding Access}}}{\emph{\text{Marginal Improvement from Improving Prediction}}}. 
\end{align*}
The reader can think of the marginal improvement from prediction as the increase in the objective function, $\E[w_i \pi(x_i)]$, that comes from finding a predictor with smaller error,  $\E(w_i - f(x_i))^2$. On the other hand, the improvement from expanding access corresponds to the change in the optimal value of $\E[w_i \pi(x_i)]$ that comes from increasing $\alpha$ in the constraint $\E [\pi(x_i)] \leq \alpha$ and being able to assign goods or interventions to more people.

Calculating this ratio is a fundamental step in deciding which system changes are (locally) the most cost-effective. In particular, to decide whether small improvements in access or prediction are more effective, we simply calculate the following cost benefit ratio:
\begin{align}
\frac{\emph{\text{Marginal Cost of Improving Prediction}}}{\emph{\text{Marginal Cost of Expanding Access}}} \times \emph{\text{PAR}}. 
\equationlabel{eq:cost_benefit_ratio}
\end{align}
The cost-efficiency of expanding access or improve prediction hinges on whether this expression is smaller or larger than 1. 
While costs are relatively simple to calculate across different settings (e.g., it costs $y$ dollars to hire an extra teacher and expand access to tutoring), the prediction-access ratio is more subtle and requires counterfactual analysis (e.g., what would the world look like if my decisions were slightly better versus if I could just treat more people?). 

The main technical contribution of our work is to calculate this ratio in the context of workhorse statistical models that are commonly used amongst quantitative social scientists. 
We start by stating our main result for the case where the effects of interventions are real-valued (e.g., increases in life expectancy from receiving a hip replacement). 
\begin{theorem}[Informal]
Assume that welfare improvements follow a linear regression model where $w$ is real-valued and normally distributed. If the predictor $f$ explains a $\gamma_\obs^2$ fraction of the variance in outcomes $w$, and at most an $\alpha$ fraction of the population can be intervened on, then the prediction-access ratio is equal to $$\frac{\gamma_\obs}{\alpha}.$$
\end{theorem}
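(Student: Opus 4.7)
The plan is to leverage the joint Gaussian structure of the linear regression model to reduce the optimal value $V(\alpha, \gamma_\obs) := \E_\cD[w\,\pi_*(x)]$ to a one-line closed form, and then compute PAR by differentiating. Under the Gaussian assumption, the Bayes-optimal predictor $f(x) = \E[w\mid x]$ (which achieves $R^2 = \gamma_\obs^2$) is itself mean-zero Gaussian with variance $\gamma_\obs^2 \sigma_w^2$, and we can decompose $w = f(x) + \eta$ with $\eta \sim \mathcal{N}(0, (1-\gamma_\obs^2)\sigma_w^2)$ independent of $x$. A standard interchange argument (swap any pair with $f(x_1) > f(x_2)$ but $\pi(x_1) = 0, \pi(x_2) = 1$) shows the optimizer of \eqref{eq:main_opt_problem} is the threshold policy $\pi_*(x) = \ind\{f(x) > t_\alpha\}$ with $t_\alpha = \gamma_\obs \sigma_w q_\alpha$ and $q_\alpha = \Phi^{-1}(1-\alpha)$ the standard normal $(1-\alpha)$-quantile.

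Using independence of $\eta$ from $\pi_*(x)$ together with the truncated-mean identity $\E[Z\,\ind\{Z > c\}] = \phi(c)$ for a standard normal $Z$ with density $\phi$, the value function collapses to
\[ V(\alpha, \gamma_\obs) \;=\; \E[f(x)\,\ind\{f(x) > t_\alpha\}] \;=\; \gamma_\obs\, \sigma_w\, \phi(q_\alpha). \]
Differentiating, and using the implicit-differentiation identities $dq_\alpha/d\alpha = -1/\phi(q_\alpha)$ (from $1 - \Phi(q_\alpha) = \alpha$) together with $\phi'(q) = -q\phi(q)$, yields $\partial_\alpha V = \gamma_\obs \sigma_w q_\alpha$ and $\partial_{\gamma_\obs} V = \sigma_w \phi(q_\alpha)$. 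The prediction-access ratio is therefore the exact expression $\mathrm{PAR} = \gamma_\obs\, q_\alpha/\phi(q_\alpha)$, and invoking Mills' ratio $1 - \Phi(q) = (1+o(1))\phi(q)/q$ as $q \to \infty$ gives $q_\alpha/\phi(q_\alpha) = (1+o(1))/\alpha$, hence $\mathrm{PAR} \sim \gamma_\obs/\alpha$ in the small-$\alpha$ (scarce-resource) regime of interest.

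The step I expect to be the main obstacle is reconciling the exact derivative ratio $\gamma_\obs q_\alpha/\phi(q_\alpha)$ with the clean stated form $\gamma_\obs/\alpha$: the two coincide only asymptotically---the sharper Mills' bound gives a multiplicative error of $1 - O(q_\alpha^{-2})$---which is presumably why the theorem is marked \emph{informal}. The formal statement likely either passes to the $\alpha \to 0$ limit, or uses the discrete finite-difference PAR parameterized by $(\Delta_\alpha, \Delta_{r^2})$ hinted at by the paper's notation, tuned so that the identity holds on the nose. A secondary subtlety worth flagging is the convention for ``improving prediction'': differentiating in $\gamma_\obs$, in $r^2 = \gamma_\obs^2$, or in the MSE $(1-\gamma_\obs^2)\sigma_w^2$ changes the answer by chain-rule factors of $2\gamma_\obs$ and $\sigma_w^2$, and only the $\gamma_\obs$-parameterization produces the clean form $\gamma_\obs/\alpha$ in the theorem statement; any formal proof must pin this convention down explicitly.
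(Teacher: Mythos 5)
Your proposal is correct and follows essentially the same route as the paper: obtain the closed-form value function from the truncated-Gaussian mean (the paper's \lemmaref{lemma:mills_ratio} and \propositionref{prop:value_f_linear}), differentiate in $\alpha$ and $\gamma_\obs$ via the identities $\tfrac{d}{d\alpha}\Phi^{-1}(1-\alpha) = -1/\phi(\Phi^{-1}(1-\alpha))$ and $\phi'(q)=-q\phi(q)$, and close with the Mills-ratio approximation $\phi(\Phi^{-1}(1-\alpha))\approx \alpha\,\Phi^{-1}(1-\alpha)$; you also correctly anticipated that the formal version (\theoremref{theorem:par_linear}) is a finite-difference statement carrying a $\Delta_\alpha/\Delta_{r^2}$ factor and parameterized in $\gamma_\obs$ rather than $r^2$. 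The one simplification you make is taking $w$ mean-zero: the paper's model has $\E[w]=\mu>0$, which adds $\alpha\mu$ to the value function and hence a term $\mu/(\norm{\beta}\Phi^{-1}(1-\alpha))$ to the exact ratio, dropped in the informal statement under the standing assumption that $\norm{\beta}/\mu$ is not small.
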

Essentially, the theorem states that if resources are very scarce in the sense that $\alpha$ is very small, and the predictor explains a non-trivial fraction of the variance (say $1\%$), then the relative value of prediction is \emph{small}: The gains in social welfare that arise from expanding access far outweigh the gains that arise from improving prediction.\footnote{Note that explaining $1\%$ of the variance means $\gamma_\obs \geq .1$. Simple predictors of life outcomes that use basic features typically explain at least a single digit fraction of the variance \citep{salganik2020measuring}. Resources are typically, but not always, much more heavily constrained. See \sectionref{sec:related_work} for further discussion.} In particular, as long as the costs of expanding access are not $1/\alpha$ times larger than those of collecting more data or more features, then the cost-efficient way of making progress is to focus efforts on treating more people.

Another way to interpret this result is to plug in the value of the prediction-access ratio into the cost-benefit analysis from \equationref{eq:cost_benefit_ratio}. A direct corollary of first theorem is that as long as, 
\begin{align*}
\equationlabel{eq:linear_conditions_english}
	\sqrt{\emph{\text{Current Variance Explained}}} \geq 
\frac{\emph{\text{Marginal Cost of Expanding Access}}}{\emph{\text{Marginal Cost of Expanding Prediction}}} \times \emph{\text{Current Access}}, 
\end{align*}
then the cost-benefit ration from \equationref{eq:cost_benefit_ratio} is greater than 1, implying that the relative value of prediction is small and that expanding access is locally optimal.

As per the previous discussion, this also constitutes a sufficient condition for a predictor to be ``good enough'' for algorithmic decision making: From a welfare perspective, it is efficient to leave the predictor ``as is'' and focus on other aspects of the broader system.
 
To get a better sense of how these insights extend to other settings, we also calculate the prediction-access ratio for a problem where welfare improvements are discrete (e.g., changes in unemployment status), rather than real-valued. 

\begin{theorem}[Informal]
Assume welfare improvements follow a probit regression model where $w = \ind\{z >0 \}$, and $z$ is a real-valued and normally distributed score. If the predictor $f$ explains a $\gamma_\obs^2$ fraction of the variance in $z$, and at most an $\alpha$ fraction of the population can be intervened on, then the prediction-access ratio is equal to, $$\frac{\gamma_\uobs}{\alpha^{1 / \gamma_\uobs^2}},$$
where $\gamma_\uobs^2 = 1  - \gamma_\obs^2$ is the fraction of \underline{unexplained} variance in the underlying score $z$.
\end{theorem}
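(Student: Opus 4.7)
The plan is to reduce the probit objective to a one-dimensional integral against the standard Gaussian, compute the two partial derivatives behind the PAR in closed form, and extract the $\gamma_\uobs/\alpha^{1/\gamma_\uobs^2}$ scaling using a Mills-ratio tail estimate.

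First I would normalize so that $z \sim N(0,1)$, decompose $z = f_*(x) + \eta$ with $f_*(x) \defeq \E[z\mid x] \sim N(0,\gamma_\obs^2)$ and independent $\eta \sim N(0,\gamma_\uobs^2)$, and note that $\E[w\mid x] = \Phi(f_*(x)/\gamma_\uobs)$. Since $\Phi$ is monotone, the Bayes-optimal policy thresholds $f_*$ at its $(1-\alpha)$-quantile $\gamma_\obs\tau_\alpha$, where $\tau_\alpha \defeq \Phi^{-1}(1-\alpha)$. Changing variable to $u = f_*(x)/\gamma_\obs$ then yields the clean representation
\[
V_*^{\mathrm{pr}}(\alpha,\gamma_\obs) \;=\; \int_{\tau_\alpha}^{\infty}\Phi\!\Paren{\tfrac{\gamma_\obs}{\gamma_\uobs}\,u}\phi(u)\,du,
\]
in which $\alpha$ enters only through the lower limit and $\gamma_\obs$ only through the integrand. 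This mirrors the structure that drove the linear case, so the same playbook should apply.

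Applying Leibniz's rule together with $d\tau_\alpha/d\alpha = -1/\phi(\tau_\alpha)$ immediately gives $\partial V_*^{\mathrm{pr}}/\partial \alpha = \Phi(\gamma_\obs\tau_\alpha/\gamma_\uobs)$. For the prediction derivative I differentiate under the integral with $r \defeq \gamma_\obs/\gamma_\uobs$, verify $dr/d\gamma_\obs = 1/\gamma_\uobs^3$, and collapse the two Gaussian factors using $\phi(ru)\phi(u) = (2\pi)^{-1}e^{-u^2/(2\gamma_\uobs^2)}$ (this identity, coming from $r^2+1=1/\gamma_\uobs^2$, is where joint Gaussianity earns its keep). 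The remaining elementary integral $\int_{\tau_\alpha}^\infty u\,e^{-u^2/(2\gamma_\uobs^2)}\,du = \gamma_\uobs^2 e^{-\tau_\alpha^2/(2\gamma_\uobs^2)}$ then produces $\partial V_*^{\mathrm{pr}}/\partial \gamma_\obs = (2\pi\gamma_\uobs)^{-1} e^{-\tau_\alpha^2/(2\gamma_\uobs^2)}$.

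Forming the ratio leaves $\mathrm{PAR}$ proportional to $\gamma_\uobs\,\Phi(\gamma_\obs\tau_\alpha/\gamma_\uobs)\,e^{\tau_\alpha^2/(2\gamma_\uobs^2)}$. To recover the $\alpha^{1/\gamma_\uobs^2}$ factor I would invoke the Mills-ratio estimate $\alpha = 1-\Phi(\tau_\alpha) \sim \phi(\tau_\alpha)/\tau_\alpha$, which rearranges into $e^{-\tau_\alpha^2/2} \asymp \alpha\,\tau_\alpha\sqrt{2\pi}$. Raising to the $1/\gamma_\uobs^2$ power converts $e^{-\tau_\alpha^2/(2\gamma_\uobs^2)}$ into $\alpha^{1/\gamma_\uobs^2}$ up to polynomial factors in $\tau_\alpha \sim \sqrt{2\log(1/\alpha)}$, and together with $\Phi(\gamma_\obs\tau_\alpha/\gamma_\uobs)\to 1$ as $\alpha\to 0$ this yields the advertised leading-order behavior. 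I expect the main obstacle to be precisely this last step: the Mills-ratio correction and the tail factor $\Phi(\gamma_\obs\tau_\alpha/\gamma_\uobs)$ must be shown to be lower order than the exponential, which is only safely true in the small-$\alpha$ regime. If the formal statement is meant as an identity rather than an asymptotic, the natural fix is to measure prediction improvement in a parametrization tailored to the probit model (for instance a multiplicative rescaling of $\gamma_\uobs$, or the $\log$ of the MSE) under which the $\alpha^{1/\gamma_\uobs^2}$ factor appears exactly.
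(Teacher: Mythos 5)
Your proposal follows essentially the same route as the paper's proof: the same one-dimensional integral representation of $\Vprob$, Leibniz's rule for the access derivative, the complete-the-square identity $\phi(ru)\phi(u)=\phi(\mu/\norm{\beta})\,\phi(\cdot)$ for the prediction derivative, and the Mills-ratio estimate converting $e^{-\Phi^{-1}(1-\alpha)^2/(2\gamma_\uobs^2)}$ into $\alpha^{1/\gamma_\uobs^2}$ up to polylogarithmic factors. The only substantive simplifications are that you normalize $\mu=0$ (base rate $b=1/2$), which suppresses the $b\,\Phi^{-1}(1-b)$ prefactor and the requirement $\alpha\ll b$ appearing in the paper's formal version, and you defer the uniform error control that the paper carries out via Taylor intermediate points and bounds on $\gamma_{\uobs,c}$ --- a limitation you correctly flag yourself.
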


Since $1 / \gamma_\uobs^2 > 1$ for all $\gamma_\obs \in (0,1)$, the prediction-access ratio is almost always \emph{larger} in this discrete case than in the previous, real-valued setting. Hence, the relative value of prediction is smaller and one should be more willing to expand access vis-\`{a}-vis improving prediction. As before, one can similarly factor costs into the equation, and again invert the bound to identify simple conditions under which it is cost-efficient to expand access vs improve prediction, or vice versa (see \equationref{eq:probit_cutoff} in \sectionref{sec:probit}).

\subsection{Implications \& Limitations} 

At a high level, our main finding is that if a predictor explains even a  small fraction of the variance in target outcomes and resources are significantly constrained, the benefits of expanding access far outweigh the welfare benefits induced by improving prediction.

We believe that these results regarding the relative value of prediction are counter-narrative, if not quite counter-intuitive. It makes sense that if interventions have mostly positive effects, we are better off intervening on two people, rather than painstakingly improving prediction to identify the single person who will benefit the most from an intervention. While prediction is a means to an end, expanding access is an end in and of itself.  

Furthermore, our results show that you do not necessarily need to predict something very accurately, in order to act upon it effectively. Within the probit model, for instance, the prediction-access ratio can be larger than 1 for small values of $\alpha$, \emph{regardless} of the predictive value ($r^2$ value) of the model (see \sectionref{sec:probit} and \figureref{fig:par_probit}). Structural improvements like expanding access can have significantly larger ``bang for buck'' relative to the gains achieved from building a better prediction system, even if predictions only explain a vanishingly small fraction of outcomes. 

Crucially, however, our results and implications are limited to domains where: $(a)$ the interventions allocated can be truly though of as social \emph{goods} (and not harms), $(b)$ the social planner has an accurate sense of costs, and $(c)$  prediction problems are well-specified.

\paragraph{Allocating Goods vs Harms.} The models we study in this paper satisfy the property that there is a large fraction of the population for which welfare improvements are positive, $\Prob{w_i > 0} = b \gg 0$. Furthermore, all our bounds for the prediction-access ratio only hold in the regime where $\alpha < b$ ($\alpha$ = level of access). And, if $\alpha < b$, one is (mostly) expanding access to interventions for people who would benefit from them.
We believe that several important resource allocation problems like vaccination campaigns in public health, cash transfers in development economics, and targeted tutoring campaigns in K-12 education fall into this paradigm where treatment effects are mostly positive. 
However, our results are by no means universal and they do not apply to problems where the effects of interventions are less clear.
For instance, if predictions are used to prioritize patients for invasive medical exams, these may prevent future disease, but also cause harm through a misdiagnosis. 
The relevant theoretical analysis in these settings likely differs significantly from ours, and we hope it will be the subject of future work.\footnote{One major difference here is that predictions are valuable not just because they identify people from whom interventions are welfare-enhancing, but also rule out people for whom the intervention is harmful.}

\paragraph{Understanding Costs.} Our analysis only applies to domains where the planner has an accurate sense of the costs of the relevant policy levers. Note that while expanding access is expensive, so is improving prediction. 
In particular, the salient way to improve a predictor in social settings is to collect more features for the people in your population (i.e, perform additional in-person surveys, medical tests) rather than changing the algorithm, or spending money on additional compute. These measurement endeavors come with significant privacy and financial costs.

Here, we treat the costs as known, domain-specific functions (see discussion in \sectionref{subsec:par_linear}) and focus our efforts on analyzing the relative \emph{welfare} impacts of prediction and access. 
This decision is largely motivated by the fact that policy analysts in charge of designing targeting programs often have extensive experience understanding costs, but may lack the tools necessary to evaluate the counterfactual impacts of access versus prediction. 
Nevertheless, we believe it's an important direction for future work to develop new machinery to better estimate costs in domains where these may be less clear. 

\paragraph{Models (Mis)Specification.} Lastly, our insights regarding the limits to prediction are shown in a simple, well-specified setting --favorable to prediction-- where a planner is able to directly estimate the effects of interventions. In the language of potential outcomes \citep{neyman1923applications, rubin1974estimating}, the welfare improvements $w_i= y_i(1) - y_i(0)$ are treatment effects: the difference between outcomes under treatment, $y_i(1)$, and no treatment, $y_i(0)$. 
Directly predicting heterogenous treatment effects is in general a difficult problem that requires strong assumptions both on the data \emph{generating} process $\cD$ as well as the prediction model class $\mathcal{F}$ \citep{pearl2009causality,hardtrecht2022patterns}. 
In practice, one should be cautious when applying these insights to domains that may not strictly follow the causal models we study.

Given the lack of previous work in this area, we view these simple and foundational statistical models as a natural starting point to establish a new, formal language around prediction in social systems. We hope it may enable future research that develops a more general theory that analyzes quantities like the prediction-access ratio under broader assumptions on the underlying population, as well as under different choices of social objective functions.\footnote{In particular, as per \equationref{eq:main_opt_problem}, we adopt average outcomes as the main objective. However, this choice is by no means the only one. It is natural to also study other objectives such as maximizing the minimum outcome.}


\section{Motivating Applications and Related Work}
\sectionlabel{sec:related_work}

Our work lies at the intersection of various threads of research from computer science, economics, and related fields. We briefly discuss some of these connections and list examples of resource allocation problems from the applied literature that motivate our framework. 

\paragraph{Economics.} This idea of using predictions to guide the allocation of public resources has been studied extensively within the \emph{prediction policy problems} literature. 
Introduced by \cite{kleinberg2015prediction}, the term refers to policy questions, which do not directly require causal inference, and can be solved via pure prediction techniques. 
This framing has led to several impactful analyses of problems in health policy \citep{obermeyer2019dissecting}, legal decision making \citep{kleinberg2018human}, and hiring \citep{chalfin2016productivity}. 
A parallel line of work by Susan Athey and collaborators \citep{athey2017beyond,athey2018impact,athey2019machine,athey2021policy} aims to establish further connections between machine learning, causality, and applied policy problems. 
On the methods side, \cite{bhattacharya2012inferring}, as well as \cite{manski2004statistical,manski2013diagnostic}, and \cite{kitagawa2018should}, study related statististical questions regarding the design of optimal targeting and testing mechanisms under budget constraints. And,~\cite{viviano2023policy} analyzes the design of targeting policies under network effects.
Our work contributes to this  agenda by introducing a formal framework studying how improvements in prediction compare to other structural avenues for improving welfare. 

\paragraph{Computer Science.} The statistical targeting problem we consider is related to contextual bandits with knapsack (resource) constraints: a decision maker sees features describing a unit, makes a binary decision, and observes a reward (welfare improvement) for the limited units that were intervened on. \cite{agrawal2016linear}, \cite{slivkins2023contextual}, and \cite{verma2021censored} design sublinear regret algorithms for versions of this problem. 
These papers treat the design parameters (resource constraints) as fixed, whereas we view them as a design decision. 
Our work is also related to the literature on online calibration in that it explicitly aims to design predictions that have strong guarantees with respect to the goals of downstream decision makers \citep{foster1998asymptotic,foster2021forecast,noarov2023high}.
Recently,~\cite{liu2023actionability} analyzed the extent to which accurate predictions of future outcomes can help a decision maker looking to choose from multiple different interventions. Relative to their work, we consider a different setting where there is just one treatment, the prediction task is well-specified, and there are a limits on who may be intervened on.  

\subsection{Applications \& Examples of Targeting Systems}

\paragraph{Education.} Apart from the early warning systems example mentioned previously, there are a number of other resource allocation problems in education that follow our framework including: predicting teacher value added in order to use limited school funds most efficiently \citep{chalfin2016productivity}, and predicting youth gun violence to allocate limited spots in mentoring programs \citep{rockoff2011can}. 

\paragraph{Healthcare.} The use of individual risk predictors to determine treatment effects and prioritize patients for interventions is ubiquitous throughout healthcare.
In Israel, one of the major public health agencies regularly predicts outcomes such as COVID-19 mortality risk \citep{barda2020developing}, or the likelihood of hepatitis C \citep{leventer2022using}, to prioritize patients for medical attention. 
\cite{bhattacharya2012inferring} study predicting the prevalence of malaria in order to distribute bug nets amongst households in Kenya. 
In some of these settings, like bug nets or vaccination campaigns, resources are significantly constrained relative to the total size of the population, especially during the initial rollout of the program \citep{unicef}. 

\paragraph{Government.}
The 1993 Unemployment Compensation Amendments Act \citep{clinton} \emph{requires} that U.S. states use profiling tools to predict the likelihood someone will remain unemployed after exhausting regular benefits in order to decide who will receive additional job training resources. Similar ideas are applied within development initiatives. For instance, \cite{aiken2022machine} study the possibility of targeting cash transfers to the poorest members of society by using satellite imagery to predict poverty. 
\cite{black2022algorithmic} analyze issues of fairness and efficiency when using algorithmic predictions to target tax audits. 
Researchers have also studied the use of machine learning to efficiently target building safety inspections \citep{johnson2023improving, hino2018machine, glaeser2016crowdsourcing}. 

Throughout these domains, there is often a single good or intervention being allocated, predictions are used to figure out which units would experience the largest welfare improvements if acted upon, and the decision maker allocates interventions to those with the highest predicted effects as per the targeting problem formalized in \equationref{eq:main_opt_problem}.

\section{Linear Regression Model \& Real Valued Outcomes}
\sectionlabel{sec:linear}
In this section, we present our first set of main results analyzing the prediction-access ratio for the case where welfare improvements follow a linear regression model. 

 Linear regression is the workhorse method of analysis amongst economists, sociologists and other quantitative social scientists. It has been applied, with varying degrees of success \citep{freedman1999association}, to model causal effects (welfare improvements) in consequential domains such as education \citep{angrist1999using,yule1899investigation} and public health \citep{chandler2011predicting}. Furthermore, amongst theoreticians, linear models are the \emph{drosophilia} of formal inquiry into more complex mathematical questions. According to  Pearl, ``many concepts and phenomena in causal analysis were first detected, quantified, and exemplified in linear
structural equation models before they were understood in full generality and applied to nonparametric problems'' \citep{pearl2013linear}. 
As such, they are a natural starting point to study the prediction-access ratio.

\subsection{Model Definition and Technical Preliminaries}

\begin{definition}[Linear Regression Model]
\definitionlabel{def:model_linear}
We say that welfare improvements follow a linear regression model if
\footnote{Note that \definitionref{def:model_linear} is property of the data \emph{generating} process, not just the model class used for prediction. As per the earlier discussion, welfare improvements $w_i$ are treatment effects (causal quantities) and prediction of causal quantities requires assumptions on the underlying data generating process \citep{pearl2009causality}.}

\begin{align}
	w_i  = \langle x_i,\; \beta \rangle + \mu \quad \text{ where } x_i \sim \cN(0, I),
\end{align}
and $\beta \in \R^d$ is a vector of unknown coefficients. 
Here, $\mu = \E[w_i]$ is the average welfare improvement and $\norm{\beta}^2 = \E[(w_i - \mu)^2]$ captures the heterogeneity in outcomes.\footnote{The condition that $x_i \sim \cN(0, I)$ is purely for notational convenience and comes with no further loss of generality.
In particular, our results also apply to the case where $x \sim \cN(0, \Sigma)$, (the features are now correlated) by replacing $\beta$ with $\Sigma^{1/2} \beta$. Since if $x \sim \cN(0, \Sigma)$ then $\langle x,\beta \rangle \stackrel{d}{=} \langle x', \beta' \rangle$ where $x' \sim \cN(0, I)$ and $\beta' = \Sigma^{1/2}\beta.$}\end{definition}

 While outcomes $w_i$ are deterministic functions of $x_i$, the social planner does not observe the full vector of features, but rather only a subset. In particular, we assume that features $x$ are partitioned into $x = (x_\obs, x_\uobs) \in \cX_\obs \times \cX_{\uobs}$: a set of observed features $x_\obs$ and unobserved features $x_\uobs$. 
For example, the impacts of educational interventions on the likelihood of on-time graduation are the function of many variables. Some of these variables (e.g., student demographics, test scores, attendance rate) are observed, and some are typically unobserved (e.g., level of at-home instruction and parental support).
 
 Using the available data, the social planner aims to maximize average welfare, subject to a resource constraint that at most an $\alpha$ fraction of the population can be treated. We formally define the planner's targeting problem below:

 \begin{definition}[Planner's Targeting Problem]
 \definitionlabel{def:targeting_problem}
  Given access to the observable features $x_s$, the planner solves the following optimization problem: 
 \begin{align}
	\equationlabel{eq:policy_lr}
	\max_{\pi \in \{\cX_\obs \rightarrow \{0,1\}\}} \quad &\E_{(x_{i,\obs}, w_i) \sim \cD}[w_i \cdot \pi(x_{i, \obs})] \\
	\text{subject to}\quad & \E_\cD[\pi(x_{i, \obs})] \leq \alpha  \notag.
\end{align}
 \end{definition}

In this regression setting, we assume that given the partition of features into observables and unobservables,
\begin{align*}
w_i  = \langle x_i,\; \beta \rangle + \mu =  \underbrace{\langle x_{i, \obs},\; \beta_\obs \rangle + \mu}_{\emph{\text{Observable}}} + \underbrace{\langle x_{i, \uobs},\; \beta_\uobs \rangle}_{\emph{\text{Unobservable}}} ,
\end{align*}
the planner has enough data that they can learn the coefficients $\beta_\obs$, corresponding to the observable components $x_\obs$ exactly. In practice, one would estimate $\beta_\obs$ in finite samples through procedures like ordinary least squares and find an estimate $\hat{\beta_{\obs}}$ such that $\norm{\beta_s - \hat{\beta_\obs}} \leq \cO(\sqrt{\mathrm{dim}(\beta_s)/n})$. 

However, for the class of problems we are interested in, data typically comes from administrative databases collected by some centralized government authority,  
where the number of samples $n$ is typically much larger than the number of features, $\dim(x_\obs)$. For example, in the Wisconsin early warning system studied in \cite{perdomo2023difficult}, there are about 40 features and $n$ is over 300k (all public school students in Wisconsin). 
Given these relevant scales, the salient axis for improving prediction in these settings is to collect more features (e.g conduct more tests, increase $\dim(x_\obs)$) rather than collecting more samples, since finite sample errors are essentially negligible. We reflect this reality in our model to streamline our presentation.


\paragraph{Technical Preliminaries.} We adopt the following notation. Let,
\begin{align*}
	\phi(z) = \frac{1}{\sqrt{2\pi}} \exp\left(-\frac{z^2}{2}\right) \text{ and } \Phi(t) = \int_{-\infty}^t \phi(z) dz,
\end{align*}
be the PDF and CDF for a standard Normal random variable, $\cN(0,1)$. Lastly, define $\Phi^{-1}(\alpha) = \inf\{t \in \R: \alpha \leq \Phi(t)\}$ to be the associated quantile function. 
The optimal assignment policy $\pi^*_{\obs}$ for the planner's targeting problem can be easily expressed in terms of these quantities. 
\begin{lemma}
\lemmalabel{lemma:opt_policy_linear}
Assume $w_i$ satisfy the linear regression model (\definitionref{def:model_linear}) and that $x = (x_\obs, x_\uobs)$. The optimal assignment policy $\pi^*_\obs$ solving the planner's targeting problem (\definitionref{def:targeting_problem}) is equal to:
\begin{align}
\label{eq:regression_opt_policy}
	\pi^*_\obs(x_{i, \obs}) = \1\{ \langle x_{i,\obs},\; \beta_s \rangle \geq \Phi^{-1}(1-\alpha)\norm{\beta_\obs}\} 	\cdot \1 \{ \langle x_{i,\obs},\; \beta_s \rangle + \mu \geq 0\}.
\end{align}
\end{lemma}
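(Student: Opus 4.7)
The plan is to reduce the problem to a standard one-dimensional thresholding question. Since the policy $\pi$ depends only on $x_\obs$, and since $x_\obs$ and $x_\uobs$ are independent under $\cN(0,I)$ with $\E[x_\uobs]=0$, tower gives
\begin{align*}
\E[w_i \pi(x_{i,\obs})] = \E_{x_\obs}\!\Brac{\pi(x_\obs) \cdot \E[w_i \mid x_\obs]} = \E_{x_\obs}\!\Brac{\pi(x_\obs)\,\paren{\langle x_\obs,\beta_\obs\rangle + \mu}}.
\end{align*}
So the planner's problem collapses to: maximize $\E[\pi(x_\obs)\,g(x_\obs)]$ over $\{0,1\}$-valued $\pi$ subject to $\E[\pi(x_\obs)]\leq\alpha$, where $g(x_\obs)=\langle x_\obs,\beta_\obs\rangle+\mu$ is a univariate Gaussian score with $g\sim \cN(\mu,\norm{\beta_\obs}^2)$.

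Next, I would invoke the standard Neyman--Pearson / ``bang-per-buck'' argument for this knapsack-type problem: among all feasible $\{0,1\}$ rules, the maximum of $\E[\pi g]$ is attained by a threshold policy of the form $\pi^*(x_\obs)=\1\{g(x_\obs)\geq t^*\}$, where $t^*$ is the \emph{larger} of (i) the budget-feasibility threshold $t_\alpha$ satisfying $\Pr(g(x_\obs)\geq t_\alpha)=\alpha$, and (ii) zero --- because we should never assign $\pi=1$ on a point whose contribution $g(x_\obs)$ is negative, even when budget remains. (If $g$ had a positive atom at $t_\alpha$ one would need a randomized tiebreaker, but for the continuous Gaussian the strict/weak inequality does not matter.) A brief exchange argument handles this: swapping treatment from a unit with higher $g$ to one with lower $g$, or dropping a unit with $g<0$, can only weakly decrease the objective.

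To evaluate $t_\alpha$, I would standardize: since $\langle x_\obs,\beta_\obs\rangle/\norm{\beta_\obs}\sim \cN(0,1)$, the condition $\Pr\{\langle x_\obs,\beta_\obs\rangle + \mu \geq t_\alpha\}=\alpha$ yields $t_\alpha = \Phi^{-1}(1-\alpha)\norm{\beta_\obs}+\mu$. Substituting, $\pi^*_\obs(x_{i,\obs}) = \1\{\langle x_{i,\obs},\beta_\obs\rangle + \mu \geq \max(0,\,t_\alpha)\}$, and by splitting into the two cases I can rewrite this maximum as the product of indicators claimed in the lemma: when $t_\alpha\geq 0$ the threshold indicator dominates and the second factor is automatically 1; when $t_\alpha<0$ we have $-\mu\geq \Phi^{-1}(1-\alpha)\norm{\beta_\obs}$, so the ``$g\geq 0$'' indicator dominates and the first factor is automatically 1. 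Either way, both sides agree, yielding exactly \eqref{eq:regression_opt_policy}.

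The main (mild) obstacle is the case analysis above: being careful that the two-indicator product correctly encodes both the budget-binding and non-binding regimes, and that the argument handles the unconstrained sub-budget case where one should not treat negative-effect units even though slack remains. Beyond that, the proof is essentially a Neyman--Pearson lemma plus a Gaussian quantile calculation.
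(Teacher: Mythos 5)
Your proposal is correct and follows essentially the same route as the paper: the paper also reduces to thresholding the conditional mean $\E[w_i\mid x_{i,\obs}]=\langle x_{i,\obs},\beta_\obs\rangle+\mu$ via a general greedy/knapsack argument (its Proposition on the optimal targeting policy) and then computes the Gaussian quantile $F_\obs^{-1}(1-\alpha)=\Phi^{-1}(1-\alpha)\norm{\beta_\obs}+\mu$. Your explicit case analysis showing the two-indicator product equals $\1\{g\geq\max(0,t_\alpha)\}$ is a slightly more careful rendering of a step the paper leaves implicit, but the substance is identical.
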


\begin{proof}
In general, the optimal solution to the planner's targeting problem is to intervene on the top $\alpha$ fraction of the population, conditional on those expected welfare effects being positive (see \propref{prop:general_opt_policy}):
\begin{align}
\label{eq:general_opt_policy}
	\pi^*_\obs(x_{i,\obs})  = \1\{E[w_i \mid x_{i,\obs}] > F^{-1}_{\obs}(1-\alpha)\} \cdot \1\{E[w_i \mid x_{i,\obs}] >0\}.
\end{align}
Here, $F_\obs$ is the CDF of the random variable $\E[w_i \mid X_\obs]$, and $F_{\obs}^{-1}$ is its quantile function. In the linear model (\definitionref{def:model_linear}), both of these functions have tractable expressions. Since,
\begin{align*}
w_i = \langle x_{i, \obs},\; \beta_\obs \rangle + \langle x_{i, \uobs},\; \beta_\uobs \rangle + \mu,
\end{align*}
where $\langle x_{i, \obs},\; \beta_\obs \rangle \sim \cN(0, \norm{\beta_\obs}^2)$ and $\langle x_{i, \uobs},\; \beta_\uobs \rangle \sim \cN(0, \norm{\beta_\uobs}^2)$ are independent, then: 
\begin{align*}
	E[w_i \mid x_{i,\obs}] = \langle x_{i,\obs},\; \beta_s \rangle + \mu, \text{ and }
	F_\obs^{-1}(1-\alpha)  =\Phi^{-1}(1-\alpha) \norm{\beta_\obs} + \mu. 
\end{align*}
The result follows from plugging these expressions into \eqref{eq:general_opt_policy}.
\end{proof}

The value $\E[w_i \pi^*_{\obs}(x_{i, \obs})]$ achieved by the policy $\pi^*_\obs$ is intimately tied to the variance explained by the observable features $x_\obs$. We say that the observable features $x_\obs$ have an $r^2$ value of $\gamma_\obs^2$ if the associated predictor, $f(x_\obs) = \langle x_{\obs}, \; \beta_\obs \rangle$, explains a $\gamma_\obs^2$ fraction of the variance in the outcomes $w$.\footnote{Due to the tight correspondence between features $x_\obs$ and predictors $\langle x_\obs, \beta_\obs \rangle$ in this linear model, we will use the phrases, the features (or predictor) have an $r^2$ value of $\gamma_\obs^2$ interchangeably to refer to the condition in \equationref{eq:r2_linear}. For clarity, we also abuse notation and write $r^2(x_\obs)$ instead of the usual $r^2(f)$ for a function $f$.}

\begin{definition}[$r^2$ - Linear Regression]
\definitionlabel{def:def_r2_linear}
For $w_i$ satisfying the linear regression model (\definitionref{def:model_linear}), 
we say that observable features $x_\obs$ have an $r^2$ value of $\gamma^2_\obs$, $r^2(x_\obs)= \gamma^2_\obs \in [0,1]$, if
\begin{align}
\equationlabel{eq:r2_linear}
r^2(x_\obs) = 1 - \frac{\E(w_i - \mu - \langle x_{i,\obs}, \; \beta_\obs \rangle)^2}{\E(w_i - \mu)^2} = \frac{\norm{\beta_\obs}^2}{\norm{\beta}^2} =\gamma_\obs^2.
\end{align} 
\end{definition}

Note that the predictor that uses all the available features $x_\obs = x$, has an $r^2$ value of 1, while a model that uses no features has an $r^2$ value of 0.
With this notation, we can define the value of a policy: the overall welfare improvement achieved as a function of the distribution of the $w_i$, the level of access $\alpha$, and the variance explained by the observed features $x_i$.

\begin{proposition}[Value Function, Linear Regression]
\proplabel{prop:value_f_linear}
For $w_i$ satisfying the linear regression model (\definitionref{def:model_linear}), define $\Vlin(\alpha, \gamma_s)$ to be,\footnote{We parametrize the value function in terms of $\gamma_\obs$ instead of the set of observed features $x_\obs$ since the value of the features in this model is exactly captured by $\gamma_\obs$.}
\begin{align}
\equationlabel{eq:max_linear_value}
	\Vlin(\alpha, \gamma_s) =
		\max_{\pi \in \{\cX_\obs \rightarrow \{0,1\}\}} \E_{(x_{i,\obs}, w_i) \sim \cD}[w_i \cdot \pi(x_{i,\obs})]  \quad \mathrm{ s.t }\quad & \E_\cD[\pi(x_{i, \obs})] \leq \alpha \text{ and } r^2(x_\obs) = \gamma_\obs^2,
\end{align}
 the value achieved by the optimal policy $\pi^*_\obs$ \eqref{eq:regression_opt_policy} that uses features $x_s$ and has an associated $r^2$ value equal to $\gamma^2_\obs$.
Then, for $\alpha < 1/2$ and $\norm{\beta},\mu > 0$, 
\begin{align}
\equationlabel{eq:linear_value}
	\Vlin(\alpha, \gamma_s) = \alpha \mu + \gamma_\obs \norm{\beta} \phi(\Phi^{-1}(1-\alpha)),
\end{align}
where $\phi(\cdot)$ and $\Phi^{-1}()$ are again the PDF and quantile functions for a standard normal $\cN(0,1)$.
\end{proposition}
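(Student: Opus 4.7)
The plan is to leverage the closed-form characterization of $\pi^*_\obs$ from \lemmaref{lemma:opt_policy_linear} and explicitly compute $\E[w_i \cdot \pi^*_\obs(x_{i,\obs})]$. The first move is to argue that, under the hypotheses $\alpha < 1/2$ and $\mu > 0$, the second indicator in \eqref{eq:regression_opt_policy} is redundant. Since $\alpha < 1/2$ forces $\Phi^{-1}(1-\alpha) > 0$, whenever $\langle x_{i,\obs}, \beta_\obs\rangle \geq \Phi^{-1}(1-\alpha)\norm{\beta_\obs}$ the inner product is already nonnegative, and adding $\mu > 0$ preserves positivity. So effectively $\pi^*_\obs(x_{i,\obs}) = \1\{\langle x_{i,\obs},\beta_\obs\rangle \geq \Phi^{-1}(1-\alpha)\norm{\beta_\obs}\}$ (the degenerate $\gamma_\obs = 0$ case is handled separately: any $\alpha$-measure policy collects $\alpha\mu$, matching the formula at $\gamma_\obs = 0$).

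Next I would decompose $w_i = \langle x_{i,\obs},\beta_\obs\rangle + \langle x_{i,\uobs},\beta_\uobs\rangle + \mu$ and split $\E[w_i \pi^*_\obs(x_{i,\obs})]$ into three pieces. Because $x_{i,\obs}$ and $x_{i,\uobs}$ are independent standard Gaussian blocks, the indicator is independent of $\langle x_{i,\uobs},\beta_\uobs\rangle$, which has zero mean, so that cross term vanishes. The $\mu$-piece yields $\mu \cdot \Pr\bigl(\langle x_{i,\obs},\beta_\obs\rangle \geq \Phi^{-1}(1-\alpha)\norm{\beta_\obs}\bigr) = \alpha\mu$, since $\langle x_{i,\obs},\beta_\obs\rangle/\norm{\beta_\obs}$ is a standard normal.

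For the remaining piece, I would change variables to $Z = \langle x_{i,\obs},\beta_\obs\rangle/\norm{\beta_\obs} \sim \cN(0,1)$, obtaining
\[
\E\bigl[\langle x_{i,\obs},\beta_\obs\rangle \cdot \1\{\langle x_{i,\obs},\beta_\obs\rangle \geq \Phi^{-1}(1-\alpha)\norm{\beta_\obs}\}\bigr] = \norm{\beta_\obs}\cdot \E\bigl[Z \1\{Z \geq \Phi^{-1}(1-\alpha)\}\bigr],
\]
then apply the classical Mills-type identity $\E[Z \1\{Z \geq t\}] = \phi(t)$, which is a one-liner from $\phi'(z) = -z\phi(z)$. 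Finally, substituting $\norm{\beta_\obs} = \gamma_\obs\norm{\beta}$ from \definitionref{def:def_r2_linear} produces the second summand $\gamma_\obs\norm{\beta}\phi(\Phi^{-1}(1-\alpha))$, and adding the $\alpha\mu$ term gives \eqref{eq:linear_value}.

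There is no serious obstacle here; the proof is essentially bookkeeping. The only subtle points are (i) checking that the positivity indicator is inactive so that $\pi^*_\obs$ reduces to a clean threshold rule on the linear score, and (ii) invoking independence of $x_{i,\obs}$ and $x_{i,\uobs}$ to discard the unobservable component. Both follow immediately from $\alpha < 1/2$, $\mu > 0$, and the product structure of the Gaussian covariance.
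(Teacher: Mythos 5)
Your proposal is correct and follows essentially the same route as the paper's proof: drop the redundant positivity indicator using $\alpha<1/2$ and $\mu>0$, kill the unobservable component by independence and zero mean, extract $\alpha\mu$ from the mean term, and evaluate the truncated Gaussian first moment (the paper phrases this via the inverse Mills ratio, $\E[Z\mid Z\geq t]\Pr[Z\geq t]$, while you use the equivalent identity $\E[Z\,\1\{Z\geq t\}]=\phi(t)$ directly). The two are the same calculation, so there is nothing substantive to distinguish.
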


In this linear setting, the value function for the optimal policy admits a simple closed-form solution (\equationref{eq:linear_value}) and has an intuitive interpretation. In particular, $\Vlin(\alpha, \gamma_s)$ consists of two terms. The first term, $\alpha \mu$, is exactly the value achieved by a random assignment policy. 
For $w_i \sim \cN(\mu, \norm{\beta}^2)$, if you select who gets treated on the basis of a random coin toss, $\pi_{\rand}(x_{i, \obs}) \sim \Ber(\alpha)$, the expected welfare 
is equal to $\E[w_i \randpol(x_{i, \obs})] = \E [w_i] \E[\randpol(x_{i, \obs})] = \alpha \mu$.
\footnote{Here, $\Ber(\alpha)$ is a Bernoulli random variable with parameter $\alpha$. Under $\pi_{\rand}$, each unit $i$ receives the good, or intervention, independently with probability $\alpha$.}

On the other hand, the second term in \equationref{eq:linear_value} captures the value of predictions that are better than random (and hence have $\gamma_\obs > 0$). Recall that for $w_i = \langle \beta,\; x_i \rangle$ and $x_i \sim \cN(0, I)$, the total variance in the outcomes is $\norm{\beta}^2$ and the variance captured by the features is $\norm{\beta_s}^2$ (since $\langle x_{i,\obs},\; \beta_\obs \rangle \sim \cN(0, \norm{\beta_\obs}^2)$. Therefore, the marginal value of using prediction, versus random assignment, scales with the square root of the variance explained, $\norm{\beta_\obs} = \gamma_\obs \norm{\beta}$, and a quantity that depends on the level of access: $\phi(\Phi^{-1}(1-\alpha))$.
	
\paragraph{When is Prediction Even Necessary.} As a final note before discussing our main result for this section, we pause to point out that if there is very little variance in the outcome variable (i.e., $\norm{\beta}^2$ is small) then, random assignment (i.e., deciding allocations according to $\pi_{\rand}(x_{i, \obs}) \sim \Ber(\alpha)$) is near optimal. Statistical targeting, above and beyond simple random allocations, only makes sense if there is significant variance in the outcomes $w_i$. 

To see this, let $V_\mathrm{random}(\alpha) = \E[w_i \randpol(x_{i, \obs})]$ for $\randpol(x_{i, \obs}) = \Ber(\alpha)$. Then, for $\alpha, \mu, \norm{\beta} > 0$,
\begin{align*}
	\frac{V_\mathrm{random}(\alpha)}{\Vlin(\alpha, 1)} = \frac{1}{1 + \frac{\norm{\beta}}{\mu} \alpha^{-1} \phi(\Phi^{-1}(1-\alpha))}, 
	\end{align*}
where $\Vlin(\alpha,1)$ from \equationref{eq:linear_value} is the value of the policy $\pi^*(x)$ with $r^2(x) =1$ that observes \underline{all} features.
As $\norm{\beta} / \mu \rightarrow 0$,
then $V_\mathrm{random}(\alpha) / \Vlin(\alpha, 1) \rightarrow 1$ and $\randpol$ is optimal. 
While somewhat trite, it is nevertheless an important, common-sense sanity check. Prediction-enabled targeting only makes sense if the variance in the target outcomes $w$ is large relative to the mean.
Since the goal of our work is to study the relative value of prediction -- in settings where prediction makes sense to begin with -- we will assume for the remainder that $\norm{\beta} / \mu$ is not too small.

\subsection{The Prediction-Access Ratio for Linear Regression}
\sectionlabel{subsec:par_linear}

\paragraph{Defining Policy Levers.} In the context of the linear regression model, various policy levers such as improving prediction and expanding access have simple, formal  counterparts. 

\begin{itemize}
	\item \emph{Improving Prediction}: Optimizing prediction in this setting equates to enlarging the set of measured features. This yields a new partition of observed and unobserved covariates $(x_{\obs}, x_{\uobs}) \rightarrow (x_{\obs'}, x_{\uobs'})$ such that if $r^2(x_{\obs}) = \gamma_\obs^2$ then  $r^2(x_{\obs'}) = (\gamma_\obs + \Delta_{r^2})^2$ for some $\Delta_{r^2} >0$. 

	As per our notation from \equationref{eq:max_linear_value}, the relevant improvement in welfare is  equal to:
	\begin{align*}
		\Vlin(\alpha, \gamma_s + \Delta_{r^2}) - \Vlin(\alpha, \gamma_s) \quad \emph{\text{(Marginal Improvement from Improving Prediction})}.
	\end{align*}
	\item \emph{Expanding Access}. This corresponds to keeping the current policy $\pi^*_s$ fixed and slackening the resource constraint  $\alpha$ to $\alpha + \Delta_{\alpha}$. The increase in average welfare is equal to 
		\begin{align*}
		\Vlin(\alpha + \Delta_\alpha, \gamma_s) - \Vlin(\alpha, \gamma_s) \quad \emph{\text{(Marginal Improvement from Expanding Access})}.
	\end{align*}
\end{itemize}
Given current system parameters $(\alpha, \gamma_\obs)$, and proposed improvements $\Delta_{r^2}, \Delta_\alpha \in (0,1)$, the prediction-access ratio, $\parlin$, defined formally below, determines the relative impact of these changes on  welfare: 
\begin{align}
\label{eq:par_linear}
 \parlin=\frac{\Vlin(\alpha + \Delta_\alpha, \gamma_s) - \Vlin(\alpha, \gamma_s)}{\Vlin(\alpha, \gamma_s + \Delta_{r^2}) - \Vlin(\alpha, \gamma_s)}.  
\end{align}
Values of the ratio larger than 1 indicate that a $\Delta_\alpha$ increase in access yields a larger increase in welfare than a $\Delta_{r^2}$ improvement in prediction. The reverse is true if the ratio is smaller than 1.

The following theorem is the main result of this section, effectively identifying this ratio (up to a small constant) for the linear regression model.

\begin{theorem}[Prediction-Access Ratio, Linear Regression]
\theoremlabel{theorem:par_linear}
Assume that $w_i$ satisfy the linear regression model (\definitionref{def:model_linear}) with $\norm{\beta},\mu > 0$. For any $\gamma_\obs, \Delta_{r^2} \in (0,1)$ and $\Delta_\alpha$ satisfying $\alpha + \Delta_\alpha < .05, \Delta_\alpha \in [0, 4 \alpha]$, the following inequalities hold:\footnote{In this theorem statement, we assume upper bounds on $\alpha$ to simplify the resulting analytic expressions. From our analysis in \sectionref{sec:lr_appendix}, we can prove a more general result that holds for larger values of $\alpha$, yet the resulting expressions are significantly less interpretable due to the nature of the quantities involved. Likewise, the constants appearing the bounds can be improved as the expense of assuming tighter bounds on $\alpha$ and $\Delta_\alpha$.} 
\begin{align*}
\frac{1}{4}  \cdot \frac{1}{\alpha}  \left( \frac{\mu}{\norm{\beta} \Phi^{-1}(1-\alpha)}+ \gamma_\obs\right)
  \frac{\Delta_\alpha}{\Delta_{r^2}} 
  \leq \parlin  \leq \frac{1}{\alpha} \left( \frac{\mu}{\norm{\beta} \Phi^{-1}(1-\alpha)}+ \gamma_\obs\right) \frac{\Delta_\alpha}{\Delta_{r^2}}.
\end{align*}
\end{theorem}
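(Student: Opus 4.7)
The plan is to plug the closed form from \propref{prop:value_f_linear} directly into the definition \eqref{eq:par_linear} and then reduce everything to bounding a single one-dimensional quantity, $\phi(\Phi^{-1}(1-\alpha-\Delta_\alpha)) - \phi(\Phi^{-1}(1-\alpha))$, in terms of elementary functions of $\alpha$. Since $\Vlin(\alpha,\gamma_\obs) = \alpha\mu + \gamma_\obs\|\beta\|\phi(\Phi^{-1}(1-\alpha))$ is \emph{linear} in $\gamma_\obs$, the denominator is immediate:
\[
 \Vlin(\alpha,\gamma_\obs+\Delta_{r^2}) - \Vlin(\alpha,\gamma_\obs) = \Delta_{r^2}\,\|\beta\|\,\phi\!\left(\Phi^{-1}(1-\alpha)\right).
\]
The numerator splits as $\Delta_\alpha \mu + \gamma_\obs \|\beta\|\,[\phi(\Phi^{-1}(1-\alpha-\Delta_\alpha)) - \phi(\Phi^{-1}(1-\alpha))]$, so the ratio becomes
\[
 \parlin = \frac{\Delta_\alpha}{\Delta_{r^2}}\cdot\frac{1}{\phi(\Phi^{-1}(1-\alpha))}\left(\frac{\mu}{\|\beta\|} + \gamma_\obs\cdot\frac{\phi(\Phi^{-1}(1-\alpha-\Delta_\alpha))-\phi(\Phi^{-1}(1-\alpha))}{\Delta_\alpha}\right).
\]

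Next I would compute $\tfrac{d}{d\alpha}\phi(\Phi^{-1}(1-\alpha))$. Writing $u(\alpha)=\Phi^{-1}(1-\alpha)$, a quick chain-rule calculation gives $u'(\alpha) = -1/\phi(u)$ and $\phi'(u) = -u\phi(u)$, hence $\tfrac{d}{d\alpha}\phi(u(\alpha)) = u(\alpha) = \Phi^{-1}(1-\alpha)$. Expressing the finite difference as $\int_{\alpha}^{\alpha+\Delta_\alpha}\Phi^{-1}(1-t)\,dt$ and using that $\Phi^{-1}(1-\cdot)$ is decreasing, I get the two-sided sandwich
\[
 \Phi^{-1}(1-\alpha-\Delta_\alpha)\,\Delta_\alpha \;\leq\; \phi(\Phi^{-1}(1-\alpha-\Delta_\alpha)) - \phi(\Phi^{-1}(1-\alpha)) \;\leq\; \Phi^{-1}(1-\alpha)\,\Delta_\alpha.
\]
Plugging these into the ratio yields expressions of the form $\tfrac{\Delta_\alpha}{\Delta_{r^2}}\cdot\tfrac{1}{\phi(\Phi^{-1}(1-\alpha))}\bigl(\tfrac{\mu}{\|\beta\|} + \gamma_\obs\Phi^{-1}(1-\alpha')\bigr)$ for $\alpha'\in\{\alpha,\alpha+\Delta_\alpha\}$.

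The final step is to eliminate $\phi(\Phi^{-1}(1-\alpha))$ in favor of $\alpha$ itself via Mills' ratio: for $u>0$,
\[
 \left(1-\tfrac{1}{u^2}\right)\frac{\phi(u)}{u} \;\leq\; 1-\Phi(u) \;\leq\; \frac{\phi(u)}{u},
\]
so that $\phi(\Phi^{-1}(1-\alpha)) = \alpha\,\Phi^{-1}(1-\alpha)\cdot(1+o(1))$. Substituting this produces the leading factor $\tfrac{1}{\alpha}\bigl(\tfrac{\mu}{\|\beta\|\Phi^{-1}(1-\alpha)} + \gamma_\obs\bigr)\tfrac{\Delta_\alpha}{\Delta_{r^2}}$ predicted by the theorem, and the upper bound follows from the easy side of the sandwich. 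For the lower bound, I need to absorb two slippage factors into the constant $1/4$: (i) the gap between $\Phi^{-1}(1-\alpha-\Delta_\alpha)$ and $\Phi^{-1}(1-\alpha)$, and (ii) the gap between $\phi(\Phi^{-1}(1-\alpha))$ and $\alpha\Phi^{-1}(1-\alpha)$ in Mills' ratio. This is where the regime hypotheses $\alpha+\Delta_\alpha < 0.05$ and $\Delta_\alpha\le 4\alpha$ enter: they let me control each Mills' correction term $(1-1/u^2)$ and each quantile ratio $\Phi^{-1}(1-\alpha-\Delta_\alpha)/\Phi^{-1}(1-\alpha)$ by explicit constants close to $1$.

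\textbf{Main obstacle.} The arithmetic is routine, but the delicate part is the lower bound: it requires simultaneous, quantitative control of Mills' ratio and of the quantile ratio across the window $[\alpha,\alpha+\Delta_\alpha]$, and these two corrections conspire multiplicatively in the denominator. The cleanest route I foresee is to prove two lemmas of the form $\Phi^{-1}(1-5\alpha) \geq \tfrac{1}{2}\Phi^{-1}(1-\alpha)$ and $\phi(\Phi^{-1}(1-\alpha)) \leq 2\alpha\Phi^{-1}(1-\alpha)$ valid for $\alpha < 0.05$, after which combining them with the sandwich gives the advertised factor of $1/4$. Everything else reduces to plug-and-chug.
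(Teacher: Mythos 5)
Your proposal is correct and follows essentially the same route as the paper: linearity of $\Vlin$ in $\gamma_\obs$ for the denominator, the sandwich $\tfrac12\Delta_\alpha\Phi^{-1}(1-\alpha)\leq\phi(\Phi^{-1}(1-\alpha-\Delta_\alpha))-\phi(\Phi^{-1}(1-\alpha))\leq\Delta_\alpha\Phi^{-1}(1-\alpha)$ for the numerator, and Mills' ratio to trade $\phi(\Phi^{-1}(1-\alpha))$ for $\alpha\Phi^{-1}(1-\alpha)$, with the hypotheses $\alpha+\Delta_\alpha<.05$ and $\Delta_\alpha\leq 4\alpha$ absorbing the two slippage factors into the constant $\tfrac14$ exactly as you describe. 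The only (immaterial) difference is that the paper proves the numerator sandwich via a second-order Taylor expansion with the remainder controlled by Mills' ratio, whereas you use the integral representation $\int_\alpha^{\alpha+\Delta_\alpha}\Phi^{-1}(1-t)\,dt$ plus monotonicity and a quantile-comparison lemma of the form $\Phi^{-1}(1-5\alpha)\geq\tfrac12\Phi^{-1}(1-\alpha)$, which indeed holds in the stated regime.
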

\begin{proof}[Proof Sketch]
The proof is a direct calculation using Taylor's theorem and facts about Gaussians. In particular, from Taylor's theorem, we know that 
\begin{align*}
	\Vlin(\alpha + \Delta_\alpha, \gamma_s) - \Vlin(\alpha, \gamma_\obs) & = \frac{\partial}{\partial \alpha} \Vlin(\alpha, \gamma_\obs) \cdot \Delta_\alpha + \cO(\Delta_\alpha^2)\\
	\Vlin(\alpha, \gamma_s + \Delta_{r^2}) - \Vlin(\alpha, \gamma_s) &= \frac{\partial}{\partial \gamma_s} \Vlin(\alpha, \gamma_\obs) \cdot \Delta_{r^2} + \cO(\Delta^2_{r^2}).
\end{align*}
Recall from \propref{prop:value_f_linear} that $\Vlin(\alpha, \gamma_\obs) = \alpha \mu + \gamma_\obs \cdot \norm{\beta} \phi(\Phi^{-1}(1-\alpha))$. Hence, we can directly calculate,
\begin{align*}
	\frac{\partial}{\partial \alpha} \Vlin(\alpha, \gamma_\obs) =  \mu + \gamma_{\obs} \norm{\beta} \Phi^{-1}(1-\alpha) \text{ and }  \frac{\partial}{\partial \gamma_s} \Vlin(\alpha, \gamma_\obs) =  \norm{\beta}\phi(\Phi^{-1}(1-\alpha)).
\end{align*}
The key insight is that $\phi(\Phi^{-1}(1-\alpha))$ is well-approximated by $\alpha \Phi^{-1}(1-\alpha)$ for small values of $\alpha$. Therefore, local changes in $\gamma_\obs$ are a factor of $\alpha$ smaller than local changes in $\alpha$. More specifically, 
\begin{align}
	\Vlin(\alpha + \Delta_\alpha, \gamma_s) - \Vlin(\alpha, \gamma_\obs) &\approx \Delta_\alpha (\mu  + \gamma_\obs \norm{\beta} \Phi^{-1}(1-\alpha)) = \widetilde{\Theta}(1) \notag \\
	\Vlin(\alpha, \gamma_s + \Delta_{r^2}) - \Vlin(\alpha, \gamma_s) & \approx \Delta_{r^2} \norm{\beta} \Phi^{-1}(1-\alpha) \alpha =  \widetilde{\Theta}(\alpha).
	\equationlabel{eq:order_prediction_linear}
\end{align}
 The remainder of the argument deals with making the Taylor approximations precise and controlling the quadratic terms. 
\end{proof}

\paragraph{Discussion.} The main message of this theorem is that, if resources are scarce in the sense that $\alpha$ is small, then the local improvements induced by expanding access are far larger than the improvements in welfare that come from better targeting. 
More specifically, if we fix $\Delta_\alpha= \Delta_{r^2}$ and take $\gamma_\obs$ to at least a constant (say 1\%), the prediction-access ratio is order $1 / \alpha$.\footnote{The prediction-access ratio is order $1/\alpha$ since $\Phi^{-1}(1-\alpha)$ is $\Theta(\sqrt{\log(1/\alpha)})$ for small values of $\alpha$, see \propositionref{prop:Phi-inverse}.} Hence, the amount we should be willing to pay to expand access \emph{increases} the smaller $\alpha$ is. Said otherwise, the relative value of prediction in these targeting problems \emph{decreases} with the level of scarcity.   

Recall from the introduction that to decide which policy lever makes sense,  we only need to factor costs into the equation. Here, we encourage the reader to think of costs in a broad sense. The costs of improving prediction are monetary (e.g., compute bills, labor) as well as \emph{social} (e.g., the privacy costs of people releasing personal information to a central authority, or subjecting individuals to a  means test). A similar comment applies to the costs of other policy levers. However, for the sake of modeling, we can imagine functions $C_\alpha$ and $C_{r^2}$ from $[0,1] \rightarrow \R_{\geq 0}$ that give costs numeric values.\footnote{Formally, these functions return the marginal costs of increasing access from $\alpha$ to $\alpha + \Delta_{\alpha}$ and prediction ($r^2$) from $\gamma_\obs^2$ to $(\gamma_\obs + \Delta_{r^2})^2$. We could thus write them as $C_{\alpha}(\Delta_\alpha; \alpha)$ and $C_{r^2}(\Delta_{r^2}; \gamma_\obs)$ but omit the second argument as it is clear from context.} The decision to expand access vs improve prediction hinges on whether the cost-benefit ratio, formally defined for this linear regression model as, 
\begin{align}
\equationlabel{eq:cost_benefit_linear_regression}
\parlin \times \frac{C_{r^2}(\Delta_{r^2})}{C_\alpha(\Delta_\alpha)},
\end{align}
is larger than 1. \theoremref{theorem:par_linear} essentially states that as long as $C_\alpha(\Delta_\alpha) < \gamma_\obs / \alpha C_{r^2}(\Delta_{r^2})$, expanding access is the cost efficient decision.
To see this in more detail, we can reinterpret the result as establishing a \emph{threshold} for when a set of features is sufficiently ``good'' for targeting. As long as,
\begin{align}
\equationlabel{eq:linear_threshold}
	\gamma_\obs \gtrsim \frac{\Delta_{r^2} C_\alpha(\Delta_\alpha)}{\Delta_\alpha C_{r^2}(\Delta_{r^2})} \cdot \alpha -  \frac{\mu}{\norm{\beta} \Phi^{-1}(1-\alpha)},
\end{align} 
the cost-benefit ratio from \equationref{eq:cost_benefit_linear_regression} is $>$ 1 and expanding access makes economic sense. \footnote{\equationref{eq:linear_threshold} formalizes the condition previously stated informally in the introduction regarding when a predictor is sufficiently good to expand access. In \theoremref{theorem:par_linear}, we assume bounds on $\alpha$ to simplify the presentation of the main result. From our analysis, we can prove a more general result that holds for more general values $\alpha$, yet the resulting expressions are unfortunately not as clean. As per our earlier discussion, we largely neglect the second term in \equationref{eq:linear_threshold} since we typically view $\norm{\beta}$ as much larger than $\mu$, which implies this term is close to 0.}

\paragraph{Visualization.} We examine this threshold visually in \figureref{fig:par_linear}. In particular, in each plot we illustrate the cost benefit ratio (\equationref{eq:cost_benefit_linear_regression}), for various choices of the ratio $C_{r^2}(\Delta_{r^2}) / C_{\alpha}(\Delta_{\alpha})$ where in each case $\Delta_\alpha = \Delta_{r^2} = .01$. 

As predicted by our theory (\theoremref{theorem:par_linear} and \equationref{eq:linear_threshold}), we see that, for small values of $\alpha$, the threshold at which one is indifferent between expanding access vs improving prediction is of the form $\gamma_\obs \propto \alpha$.
 That is, given any pairs of level of access $\alpha$ and prediction $\gamma_\obs$, the cost benefit ratio is above 1 if $\alpha \lesssim \gamma_\obs$ and below 1 if $\alpha \gtrsim \gamma_\obs$. The exact constant depends on the relative costs of prediction and expanding access. As per \equationref{eq:linear_threshold}, changing the relative costs only changes the slope of the line.

A different way to interpret this finding is that in the linear regression model, the optimal values of $(\alpha, \gamma_\obs)$ go ``hand in hand''. That is, if a predictor explains a $\gamma_\obs^2$ fraction of the the variance, then the \emph{efficient} level of access is to set $\alpha \propto \gamma_\obs$: One should increase access until it matches the (square root of) the explained variance. This in effect establishes priorities for how a decision making system should be built out.

\begin{figure*}[t!]
\begin{center}
\includegraphics[width=.47\textwidth]{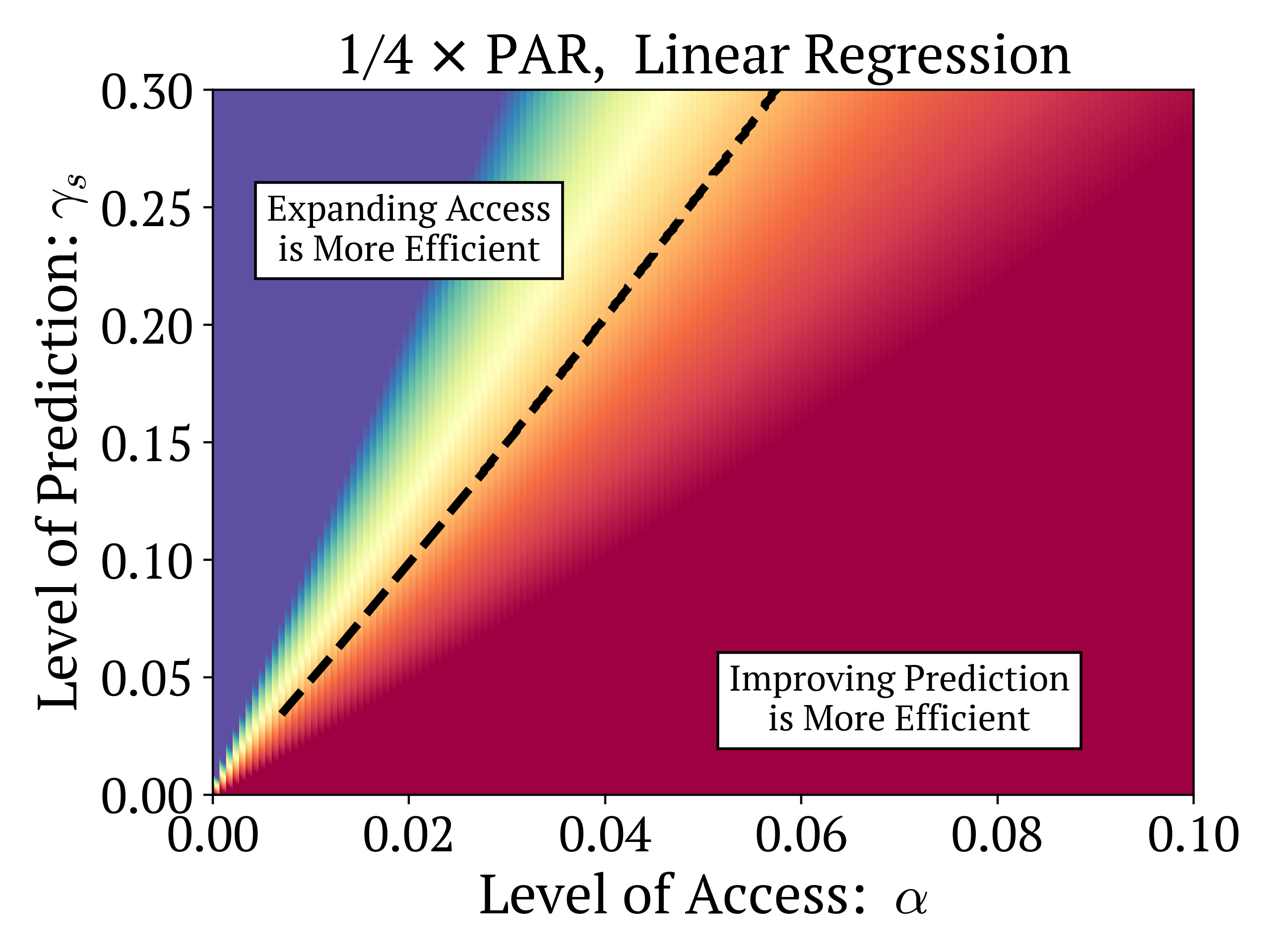}
\includegraphics[width=.47\textwidth]{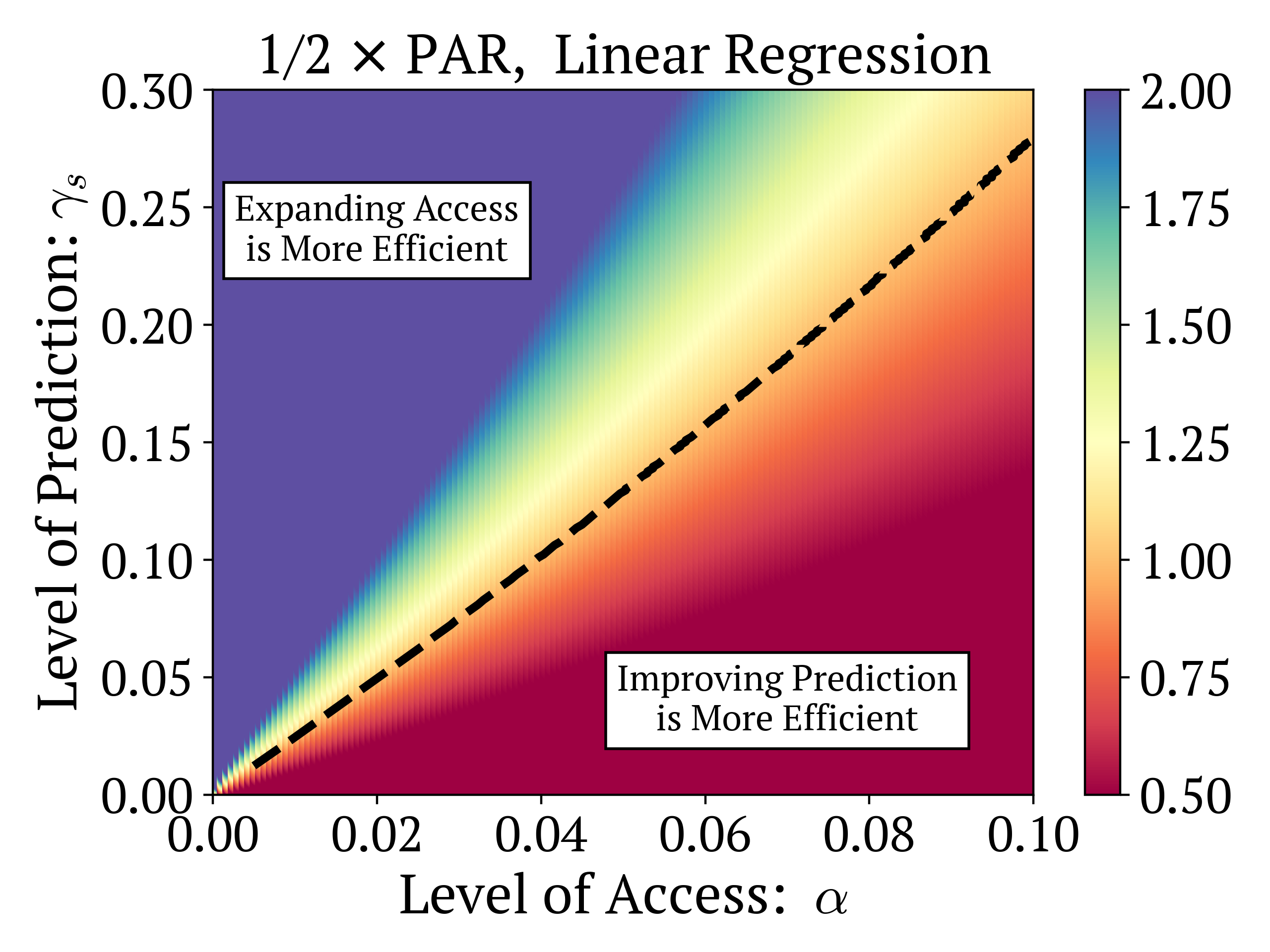}
\end{center}
\vspace{-15pt}
\caption{Visualization of the cost benefit ratio, \equationref{eq:cost_benefit_linear_regression}, for the linear regression model. We compute the ratio for each value of $\alpha$ ($x$-axis) and $\gamma_\obs$ ($y$-axis), exactly via numerical simulation with $\Delta_\alpha = \Delta_{r^2}$. We display its value, clipped to [1/2, 2], via the color bar. We set $ C_{r^2}(\Delta_{r^2}) / C_\alpha(\Delta_\alpha) = 1/4$ on the left and $1/2$ on the right. The black line indicates the set of points for which the ratio is equal to 1. As per \equationref{eq:linear_threshold}, the cutoff is approximate of the form $\gamma_\obs \propto \alpha$, where the slope is determined by the cost ratio. For values $(\alpha, \gamma_\obs)$ above the line, expanding access is relatively cost efficient, whereas improving prediction is efficient for points below the line.}
\figurelabel{fig:par_linear}
\end{figure*}

\paragraph{The Relative Value of Improving Quality.}
So far, we have focused on the relative impacts of prediction and access on welfare. However, there are other policy levers one might experiment with, for instance, improving the quality of the allocated goods. 
One way to formalize this, in this regression model it to consider increasing $\mu = \E[w]$, the average welfare improvement. 

Recall from \propositionref{prop:value_f_linear}	 that the value function of the optimal policy is equal to:
\begin{align*}
	\mu \alpha + \gamma_\obs \norm{\beta} \phi(\Phi^{-1}(1-\alpha)).
\end{align*}
Since the value function is linear in the average quality $\mu$, increasing $\mu$ to $\mu + \Delta_{\mu}$ increases the value function by an additive factor of $\Delta_\mu \alpha$. Hence, in this regression model, increasing the effect size of the average outcome yields an improvement in welfare that is on the same order as improving in prediction. Both are $\widetilde{\Theta}(\alpha)$ (see \equationref{eq:order_prediction_linear}), and significantly smaller than improvements in access, which are $\widetilde{\Theta}(1)$.

\paragraph{Local vs Global Decision Making.} The value of the prediction-access ratio is that it helps guide which kind of \emph{local} improvement is most effective. That is, it answers questions of the form: Given that my current system treats an $\alpha$ fraction of the population and explains a $\gamma_\obs^2$ fraction of the variance in treatment effects, is it better to expand access by 1\% or improve prediction by 1\%? This is motivated by the fact that that there is often significant inertia in the overall design of algorithmic decision making systems.  Large government-run programs, like vaccine rollouts, cannot always be ``restarted from scratch''. Pragmatically speaking, only local changes are viable. 

On a different note, evaluating the impact of local changes is also more mathematically interesting. It is not hard to see that the globally optimal values of $\alpha$ and $\gamma_\obs$ are to set $\gamma_\obs=1$ and $\alpha = \Prob{w_i \geq 0}$ (i.e., predict perfectly and treat anyone who would benefit).

\section{Probit Regression Model \& Discrete Outcomes}
\sectionlabel{sec:probit}
Depending on the way in which we measure outcomes, welfare improvements and treatment effects are often binary (e.g., graduating from high school on-time) rather than real-valued as in the linear regression setting analyzed previously. 
Therefore, in this section we study the prediction-access ratio in the context of a discrete model for outcomes: probit regression.
Like linear regression, probit regression is one of the most popular statistical models used in the quantitative social sciences. 
On a technical level, it belongs to the class of generalized linear models that apply a nonlinearity to the real-valued outputs of a regression, $\langle x, \beta \rangle \in \R$. 
\footnote{The other popular model in this class is logistic regression. Both probit and logistic regression transform the real-valued outputs of a regression model into a score in $[0,1]$ (a probability). The exact transformation is technically different, but morally they tend to provide similar ``bottom-line'' insights when applied in practice \citep{stock2003introduction,scott1997regression}. We focus on the probit model since we found it simpler to analyze.} 

\subsection{Model Definition and Technical Preliminaries}

\begin{definition}[Probit Regression Model]
\definitionlabel{def:model_probit}
We say that welfare improvements follow a probit regression model if 
\begin{align}
	w_i  = \1\{\langle x_i,\; \beta \rangle + \mu > 0\} \quad \text{ for } x_i \sim \cN(0, I),
\end{align}
where again $\beta \in \R^d$ is a vector of coefficients. Furthermore, we denote by  $b = \Pr[w_i > 0]$ the base rate of positive welfare improvements in the population. 
\end{definition}

As before, we assume that features $x$ are partitioned into observed and unobserved components, $(x_\obs, x_\uobs)$ and that the planner collects enough data to learn the subset of coefficients $\beta_\obs$ exactly. In the model outlined above, these parameters can be recovered in finite samples via convex programming at similar $\sqrt{\dim(x_\obs)/n}$ rates \citep{fahrmeir1985consistency}. Given the observable components $x_\obs$, the planner again solves the targeting problem (\definitionref{def:targeting_problem}) where the data now follows the probit model (\definitionref{def:model_probit}).

Due to the nonlinearity (thresholding) being applied, parameters like $\mu$ and $\norm{\beta}$ in this setting no longer directly correspond to natural quantities like mean and variance of $w_i$ as in the linear case. Yet, the optimal policy for the probit case is conveniently the same as before.

\begin{lemma}
\lemmalabel{lemma:opt_policy_probit}
Assume $w_i$ satisfy the probit regression model (\definitionref{def:model_probit}) and that $x = (x_\obs, x_\uobs)$. The optimal assignment policy $\pi^*_\obs$ for the planner's targeting problem (\definitionref{def:targeting_problem}) is equal to:
\begin{align}
\label{eq:probit_opt_policy}
	\pi^*_\obs(x_{i, \obs}) = \1\{ \langle x_{i,\obs}, \beta_s \rangle \geq \Phi^{-1}(1-\alpha)\norm{\beta_\obs}\} .
\end{align}
\end{lemma}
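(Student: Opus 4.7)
The plan is to reduce the probit problem to the general optimal-targeting characterization (the same \propref{prop:general_opt_policy} invoked in the proof of \lemmaref{lemma:opt_policy_linear}) by explicitly computing $\E[w_i \mid x_{i,\obs}]$ and then exploiting monotonicity to convert a threshold on this conditional expectation into a threshold on the linear score $\langle x_{i,\obs},\beta_\obs\rangle$. Once the conditional expectation is in closed form, everything else is bookkeeping about a Gaussian quantile.

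Concretely, first I would write $\langle x_i,\beta\rangle + \mu = \langle x_{i,\obs},\beta_\obs\rangle + \langle x_{i,\uobs},\beta_\uobs\rangle + \mu$ and use that $x_{i,\obs} \indep x_{i,\uobs}$ (both are blocks of a standard Gaussian) together with $\langle x_{i,\uobs},\beta_\uobs\rangle \sim \cN(0,\norm{\beta_\uobs}^2)$ to get
\begin{align*}
\E[w_i \mid x_{i,\obs}] = \Pr\!\big[\langle x_{i,\uobs},\beta_\uobs\rangle > -\langle x_{i,\obs},\beta_\obs\rangle - \mu \,\big|\, x_{i,\obs}\big] = \Phi\!\left(\tfrac{\langle x_{i,\obs},\beta_\obs\rangle + \mu}{\norm{\beta_\uobs}}\right).
\end{align*}
Since $\Phi$ is strictly increasing, ranking units by $\E[w_i \mid x_{i,\obs}]$ is equivalent to ranking them by the linear score $\langle x_{i,\obs},\beta_\obs\rangle$. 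By the general optimal policy, $\pi^*_\obs$ intervenes on the top $\alpha$-fraction under this ranking (intersected with those having positive expected welfare), so $\pi^*_\obs(x_{i,\obs}) = \1\{\langle x_{i,\obs},\beta_\obs\rangle \geq T\}$ for an appropriate threshold $T$.

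To pin down $T$, I would use that $x_{i,\obs} \sim \cN(0,I)$ implies $\langle x_{i,\obs},\beta_\obs\rangle \sim \cN(0,\norm{\beta_\obs}^2)$, so the resource constraint $\E[\pi(x_{i,\obs})] = \alpha$ is saturated exactly when $\Pr[\langle x_{i,\obs},\beta_\obs\rangle \geq T] = \alpha$, giving $T = \Phi^{-1}(1-\alpha)\norm{\beta_\obs}$. This yields the stated expression \eqref{eq:probit_opt_policy}.

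One subtlety worth pointing out, though not really an obstacle, is why the second indicator $\1\{\E[w_i \mid x_{i,\obs}] > 0\}$ that appeared in the linear case (\lemmaref{lemma:opt_policy_linear}) does not appear here: because $w_i \in \{0,1\}$, the conditional expectation $\E[w_i \mid x_{i,\obs}] = \Phi\big((\langle x_{i,\obs},\beta_\obs\rangle+\mu)/\norm{\beta_\uobs}\big)$ is strictly positive for every $x_{i,\obs}$, so the nonnegativity constraint is vacuous and drops out. The main (minor) work is just the Gaussian-conditioning calculation and the monotonicity argument; no nontrivial obstacle arises.
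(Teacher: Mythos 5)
Your proposal is correct and follows essentially the same route as the paper: compute $\E[w_i \mid x_{i,\obs}] = \Phi\big((\langle x_{i,\obs},\beta_\obs\rangle+\mu)/\norm{\beta_\uobs}\big)$ via Gaussian conditioning, use strict monotonicity of $\Phi$ to reduce the ranking to the linear score, and note that positivity of the conditional expectation makes the nonnegativity indicator vacuous. Your version is in fact slightly more careful than the paper's, which omits the $\norm{\beta_\uobs}$ normalization inside $\Phi$ (immaterial for the monotonicity argument) and does not spell out the quantile calculation pinning down the threshold.
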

\begin{proof}
Since $\E[w_i \mid x_{i,\obs}]$ is always greater than 0, the optimal policy is simply to treat the units with the highest values of $\E[w_i \mid x_{i,\obs}]$. In this case, 
\begin{align*}
	\E[w_i \mid x_{i,\obs}] = \Pr[w_i = 1 \mid x_{i,\obs}] = \Phi(\mu + \langle x_{i,\obs}, \beta_s \rangle).
\end{align*}
Because $\Phi(z)$ is strictly monotonic in its argument, $z=\mu + \langle x_{i,\obs}, \beta_s \rangle$, the top $1-\alpha$ quantile for $\Phi(z)$ is the same as the top $1-\alpha$ quantile for $z$ itself.
\end{proof}

Given the binary nature of the outcome variable, there is no standard, agreed-upon measure for the coefficient of determination as in the continuous case, with various ``pseudo''-$r^2$ definitions proposed. However, we find it convenient to deal with the same measure as before. 

\begin{definition}[$r^2$ - Probit Regression]
\definitionlabel{def:def_r2_probit}
Assume that $(w_i,x_i)$ satisfy the probit regression model (\definitionref{def:model_probit}). 
We say that the features $x_\obs$ have an $r^2$ value of $\gamma^2_\obs$, $r^2(x_\obs)= \gamma^2_\obs$, if
\begin{align}
r^2(x_\obs) = 1 - \frac{\E(z_i - \mu - \langle x_{i,\obs}, \; \beta_\obs \rangle)^2}{\E(z_i - \mu)^2} = \frac{\norm{\beta_\obs}^2}{\norm{\beta}^2} = \gamma_\obs^2 \in [0,1]
\end{align} 
for $z_i = \langle x_i, \beta \rangle + \mu$ and $w_i = \1\{z_i > 0\}$.\footnote{While our results are meant to be more qualitative than quantitative, one could in principle still calculate this $r^2$ statistic from data by first estimating $\beta_\obs$ and $\mu$ via regression and then computing $\norm{\beta}$ by solving the equation $\Prob{w_i >0} = \Phi^{-1}(\mu\norm{\beta}^{-1})$.}
\end{definition}
 We parametrize the value function for the probit case again in terms of $\gamma_\obs$ and $\alpha$.

\begin{proposition}
\proplabel{prop:value_f_probit}
For $w_i$ satisfying the probit regression model (\definitionref{def:model_linear}), let $\Vprob(\alpha, \gamma_s)$ be
\begin{align}
\equationlabel{eq:max_probit_value}
	\Vprob(\alpha, \gamma_s) =
		\max_{\pi \in \{\cX_\obs \rightarrow \{0,1\}\}} \E_{(x_{i,\obs}, w_i) \sim \cD}[w_i \cdot \pi(x_{i,\obs})]  \quad \mathrm{ s.t }\quad & \E_\cD[\pi(x_{i, \obs})] \leq \alpha \text{ and } r^2(x_\obs) = \gamma_\obs^2 
\end{align}
 the value achieved by the optimal policy $\pi^*_\obs$ (\lemmaref{lemma:opt_policy_probit}) that uses features $x_s$ that have an $r^2$ value equal to $\gamma^2_\obs$ (\definitionref{def:def_r2_probit}). 
Then, 
\begin{align}
\equationlabel{eq:probit_value}
	\Vprob(\alpha, \gamma_s) = \int_{\Phi^{-1}(1-\alpha)}^\infty \Phi\left(\frac{\gamma_\obs z_\obs + \mu \norm{\beta}^{-1}}{\gamma_t}\right) \phi(z_\obs) dz_{\obs}
\end{align}
where $\gamma_t^2 = 1 - \gamma_\obs^2$ is the fraction of unexplained variance in the scores $z_i = \langle x_i, \; \beta \rangle + \mu$.
\end{proposition}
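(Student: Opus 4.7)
The plan is to reduce the computation of $\Vprob(\alpha, \gamma_\obs)$ to a one-dimensional Gaussian integral by exploiting the independence between the observable and unobservable components of $x_i$. First I would substitute the optimal policy from \lemmaref{lemma:opt_policy_probit} into the defining expectation, which (since $w_i \in \{0,1\}$) reads
\begin{align*}
\Vprob(\alpha, \gamma_\obs) = \E\Brac{\ind\{\langle x_i, \beta\rangle + \mu > 0\} \cdot \ind\{\langle x_{i,\obs}, \beta_\obs\rangle \geq \Phi^{-1}(1-\alpha)\norm{\beta_\obs}\}}.
\end{align*}

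Next I would reparametrize by setting $z_\obs = \langle x_{i,\obs}, \beta_\obs\rangle/\norm{\beta_\obs}$ and $z_\uobs = \langle x_{i,\uobs}, \beta_\uobs\rangle/\norm{\beta_\uobs}$. Because $x_{i,\obs}$ and $x_{i,\uobs}$ are independent standard Gaussians and each inner product is a linear form in a single Gaussian block, $(z_\obs, z_\uobs)$ is a pair of independent $\cN(0,1)$ variables. Writing $\langle x_i, \beta\rangle = \norm{\beta}(\gamma_\obs z_\obs + \gamma_\uobs z_\uobs)$ and dividing through by $\norm{\beta}$ inside the welfare indicator, the welfare event becomes $\{\gamma_\obs z_\obs + \gamma_\uobs z_\uobs + \mu\norm{\beta}^{-1} > 0\}$, while the policy event simplifies to $\{z_\obs \geq \Phi^{-1}(1-\alpha)\}$.

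The remaining step is to integrate out $z_\uobs$ after conditioning on $z_\obs$. Conditional on $z_\obs$, the welfare event is $\{z_\uobs > -(\gamma_\obs z_\obs + \mu\norm{\beta}^{-1})/\gamma_\uobs\}$, which by symmetry of $\cN(0,1)$ has probability $\Phi((\gamma_\obs z_\obs + \mu\norm{\beta}^{-1})/\gamma_\uobs)$. Applying the tower rule, restricting to the region $\{z_\obs \geq \Phi^{-1}(1-\alpha)\}$ where the policy indicator is nonzero, and observing that $\gamma_t = \gamma_\uobs$ in the paper's notation, yields exactly \equationref{eq:probit_value}.

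There is no real obstacle here; the calculation is a routine Gaussian conditioning argument. What is worth flagging, though, is a qualitative difference from the linear case of \propref{prop:value_f_linear}: the probit nonlinearity couples $z_\obs$ and $z_\uobs$ inside a single indicator, so unlike the linear model the value does \emph{not} split additively into a random-assignment term plus a prediction term. The cleanest form one can hope for is the displayed one-dimensional integral, which is what one would subsequently differentiate in $\alpha$ and $\gamma_\obs$ when computing $\parprob$.
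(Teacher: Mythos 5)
Your proposal is correct and follows essentially the same route as the paper's proof: substitute the optimal policy from \lemmaref{lemma:opt_policy_probit}, reparametrize the two blocks as independent standard normals $z_\obs, z_\uobs$, and integrate out $z_\uobs$ (the paper writes this as an explicit double integral plus the identity $1-\Phi(-a)=\Phi(a)$, which is the same computation as your conditioning/tower-rule step). Your closing remark that the value no longer splits into a random-assignment term plus a prediction term is accurate but is commentary rather than part of the argument.
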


The nonlinear nature of the probit model complicates calculations so that the value functions no longer have simple expressions in terms of analytical functions. Interestingly enough, while $\Vprob$ has no closed-form solution, the prediction-access ratio does. 

\subsection{The Prediction-Access Ratio for Probit Regression}

We define policy levers for the probit model in a similar fashion as before.  

\begin{itemize}
	\item \emph{Improving Prediction}: Expanding the set of measured features $(x_{\obs}, x_{\uobs}) \rightarrow (x_{\obs'}, x_{\uobs'})$ so that if $r^2(x_\obs) = \gamma_\obs^2$ then, $r^2(x_{\obs'}) = (\gamma_\obs + \Delta_{r^2})^2 = \gamma_{\obs'}^2$ for some $\Delta_{r^2} >0$. Formally, this translates to:
	\begin{align*}
		\Vprob(\alpha, \gamma_s + \Delta_{r^2}) - \Vprob(\alpha, \gamma_s) \quad \emph{\text{(Marginal Improvement from Improving Prediction})}.
	\end{align*}
	\item \emph{Expanding Access}. Again corresponds to weakening the resource constraint  $\alpha$ to $\alpha' = \alpha +  \Delta_\alpha$. The change in welfare is equal to: 
		\begin{align*}
		\Vprob(\alpha + \Delta_\alpha, \gamma_s) - \Vprob(\alpha, \gamma_s) \quad \emph{\text{(Marginal Improvement from Expanding Access})}.
	\end{align*}
\end{itemize}
Formally, we define the prediction-access ratio for the probit model to be 
\begin{align*}
	\parprob = \frac{\Vprob(\alpha + \Delta_\alpha, \gamma_s) - \Vprob(\alpha, \gamma_s)}{\Vprob(\alpha, \gamma_s + \Delta_{r^2}) - \Vprob(\alpha, \gamma_s)}.
\end{align*}

\begin{figure*}[t!]
\begin{center}
\includegraphics[width=.47\textwidth]{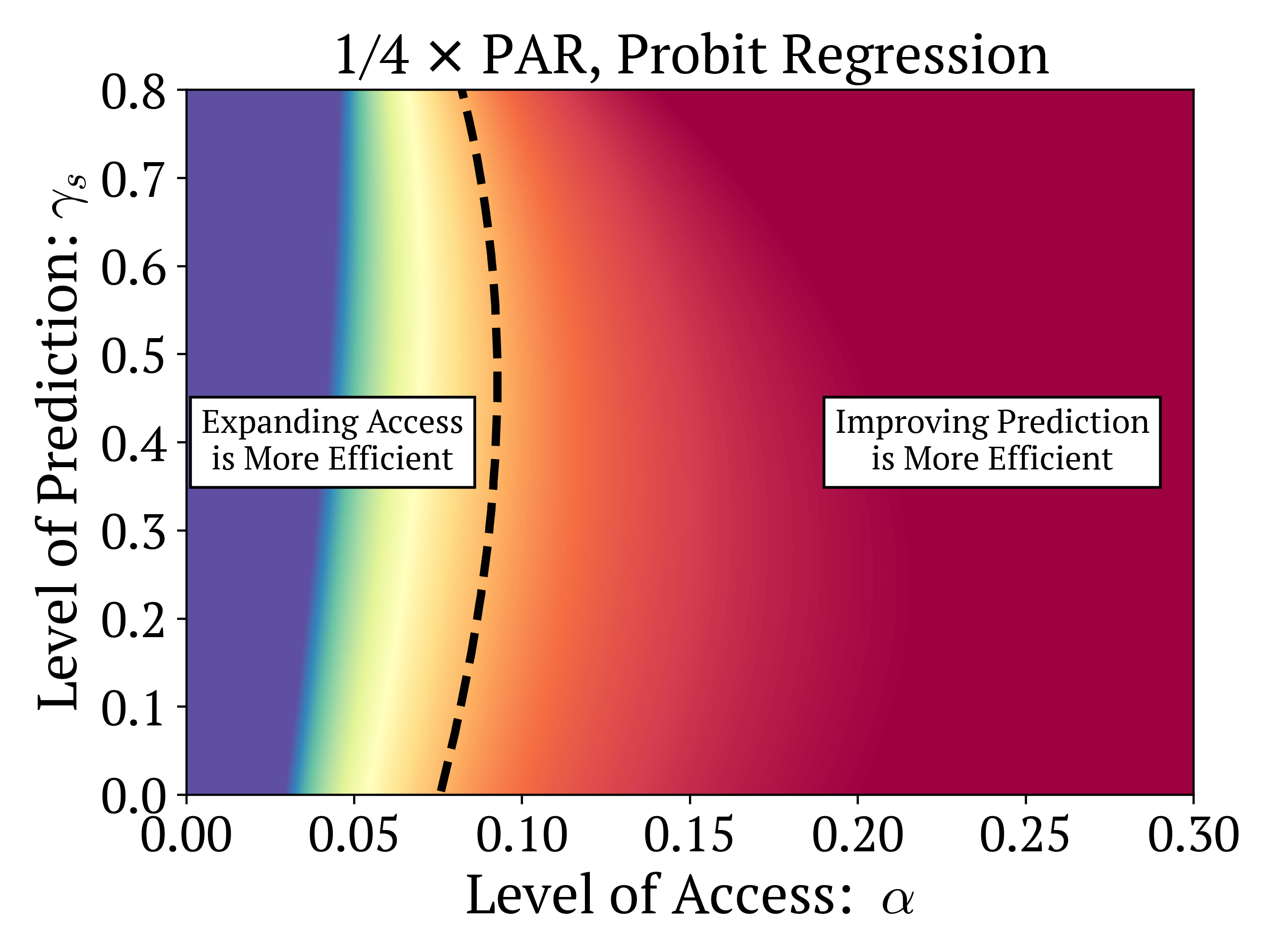}
\includegraphics[width=.47\textwidth]{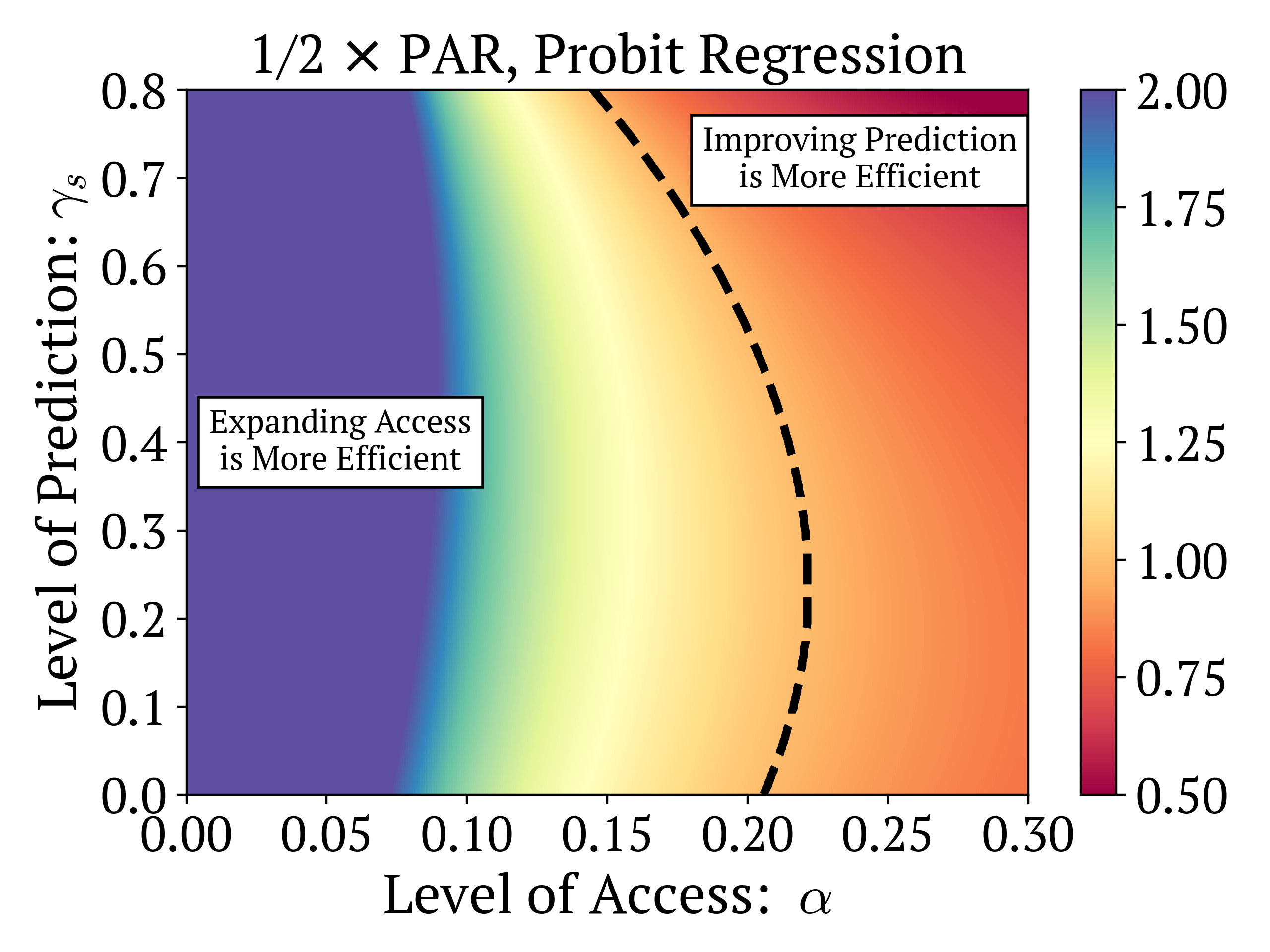}
\end{center}
\caption{Visualization of the cost benefit ratio, \equationref{eq:cost_benefit_probit}, for the probit model. As in \figureref{fig:par_linear}, we compute the ratio numerically with $\Delta_\alpha = \Delta_{r^2}$ and display its value, clipped to [1/2, 2], via the color bar. The black line indicates points for which the ratio is equal to 1. As per \equationref{eq:probit_cutoff}, the threshold  between access and prediction is non-linear, and the ratio is larger than 1 for small $\alpha$, regardless of $\gamma_\obs$.}
\figurelabel{fig:par_probit}
\end{figure*}

\begin{theorem}[Prediction-Access Ratio, Probit Regression]
\theoremlabel{theorem:par_probit}
For any $\gamma_{s} \in (0,1)$, $b = \Pr[w_i = 1] < .1$, and $\eps \in (0,.1)$, let $\epsilon' = \epsilon / (1-\epsilon)$. There exists a threshold $t > 0$, such that for $\max\{\Delta_{r^2}, \alpha\} \leq t$ and $\Delta_{\alpha}\leq \alpha$, the following inequalities hold
\begin{align*}
\parprob &\geq \frac{3}{10} \frac{\Delta_\alpha \gamma_{\uobs}}{\Delta_{r^2}}  \frac{1}{b \Phi^{-1}(1-b)} \left(  \frac{1}{1.01\sqrt{2\pi}}   \frac{1}{\alpha \Phi^{-1}(1-\alpha)}\right)^{1 / \gamma_{\uobs}^2\cdot (1 - \eps)^2}   \\
\parprob &\leq 
 3   \frac{\Delta_\alpha \gamma_{\uobs}}{\Delta_{r^2}} \cdot   \frac{1}{b \Phi^{-1}(1-b)} \left( \frac{1}{\sqrt{2\pi}}  \cdot \frac{1}{\alpha \Phi^{-1}(1-\alpha)}\right)^{1 / \gamma_{\uobs}^2 \cdot (1+ \eps')^2}.  
\end{align*}
\end{theorem}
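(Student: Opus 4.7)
My plan mirrors the linear case (\theoremref{theorem:par_linear}): differentiate $\Vprob$ in closed form, use the mean value theorem to express $\parprob$ as $(\Delta_\alpha/\Delta_{r^2})$ times a ratio of partial derivatives, and then feed in Mill's-ratio asymptotics to recover the form stated. Writing $a = \Phi^{-1}(1-\alpha)$ and $c = \mu\norm{\beta}^{-1}$, the fundamental theorem of calculus applied to \eqref{eq:probit_value} gives
\begin{align*}
\frac{\partial \Vprob}{\partial \alpha} = \Phi\!\left(\frac{\gamma_\obs a + c}{\gamma_\uobs}\right).
\end{align*}
The $\gamma_\obs$-derivative requires more work: differentiating under the integral sign and recognizing, by completing the square in the exponent, that $\phi((\gamma_\obs z + c)/\gamma_\uobs)\,\phi(z) = \phi(c)\cdot (1/\sqrt{2\pi})\,\exp(-(z+\gamma_\obs c)^2/(2\gamma_\uobs^2))$, I can pull a $\phi(c)$ factor out and substitute $v = z + \gamma_\obs c$ to obtain
\begin{align*}
\frac{\partial \Vprob}{\partial \gamma_\obs} = \frac{\phi(c)}{\sqrt{2\pi}\,\gamma_\uobs}\,\exp\!\left(-\frac{(a+\gamma_\obs c)^2}{2\gamma_\uobs^2}\right).
\end{align*}

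With these in hand, the mean value theorem gives $\parprob = (\Delta_\alpha/\Delta_{r^2})\cdot \partial_\alpha \Vprob(\alpha',\gamma_\obs)/\partial_{\gamma_\obs}\Vprob(\alpha,\gamma_\obs')$ for some $\alpha' \in (\alpha,\alpha+\Delta_\alpha)$ and $\gamma_\obs' \in (\gamma_\obs,\gamma_\obs+\Delta_{r^2})$. Provided the threshold $t$ is small enough, both derivatives drift by at most a factor close to 1 across these short intervals (the argument of $\Phi$ shifts by $o(1)$ and the exponent by an $O(\eps)$ fraction), which can be absorbed into the leading numerical constants $3/10$ and $3$. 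I would then invoke three asymptotic estimates: (i) $\Phi((\gamma_\obs a + c)/\gamma_\uobs) \in (1/2, 1)$ whenever $\alpha$ is small enough that $\gamma_\obs a + c > 0$, contributing only an absolute constant; (ii) Mill's ratio applied at $-c = \Phi^{-1}(1-b) > 0$ (together with $\phi(c) = \phi(-c)$) gives $\phi(c) \in [b\Phi^{-1}(1-b),\, b\Phi^{-1}(1-b)(1 + O(1/c^2))]$, with the correction bounded absolutely since $b < 0.1$; and (iii) Mill's ratio applied at $a$ gives $\phi(a) \in [\alpha\Phi^{-1}(1-\alpha),\, 1.01\cdot \alpha\Phi^{-1}(1-\alpha)]$ once $a$ is large enough, so that $\exp(a^2/2) = 1/(\sqrt{2\pi}\phi(a))$ becomes $1/(\sqrt{2\pi}\alpha\Phi^{-1}(1-\alpha))$ up to the stated $1.01$ slack.

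The final move is to pass from $\exp(a^2/(2\gamma_\uobs^2))$ to $\exp((a+\gamma_\obs c)^2/(2\gamma_\uobs^2))$: under the smallness condition on $t$ we have $|\gamma_\obs c/a| \leq \eps$, so $(a+\gamma_\obs c)^2 \in [(1-\eps)^2 a^2,\, (1+\eps)^2 a^2]$, which is precisely where the exponent $1/\gamma_\uobs^2\cdot (1\pm\eps)^2$ in the theorem comes from. Raising the estimate for $\exp(a^2/2)$ to the power $(1\pm\eps)^2/\gamma_\uobs^2$ then yields the factor $(1/(\sqrt{2\pi}\alpha\Phi^{-1}(1-\alpha)))^{1/\gamma_\uobs^2\cdot(1\pm\eps)^2}$ (with the $1.01$ carried along on the lower side).

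The main obstacle I expect is the bookkeeping: each of the MVT slack, the three Mill's-ratio approximations, the $(a+\gamma_\obs c)^2$ vs $a^2$ substitution, and the $\phi(c)$ vs $b\Phi^{-1}(1-b)$ identification contributes a multiplicative factor close to 1, and these must be packaged so that only a single $\eps$ (or $\eps'$) appears in the exponent and only the fixed constants $3/10$, $3$, and $1.01$ appear outside. Choosing $t$ explicitly as a function of $\eps$, $b$, and $\gamma_\obs$ so that $|\gamma_\obs c/a| \leq \eps$, $\phi(a)/(\alpha a) \leq 1.01$, and $\gamma_\obs a + c > 0$ all hold simultaneously is exactly what forces the threshold-based statement; once $t$ is fixed, the upper and lower bounds follow by tracking the directions of the Mill's-ratio inequalities on each side independently.
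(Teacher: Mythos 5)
Your plan is correct and follows essentially the same route as the paper: the same closed-form partial derivatives (the $\alpha$-derivative via Leibniz's rule, the $\gamma_\obs$-derivative via the completed-square identity $\phi((\gamma_\obs z + c)/\gamma_\uobs)\phi(z) = \phi(c)\,\phi((z+\gamma_\obs c)/\gamma_\uobs)$), the same Taylor/MVT reduction, the same observation that the access derivative lies in $[1/2,1]$ once $\gamma_\obs\Phi^{-1}(1-\alpha) \geq \Phi^{-1}(1-b)$, and the same two Mills-ratio estimates at $b$ and at $\alpha$ feeding the exponent $(1\pm\eps)^2/\gamma_\uobs^2$. One small correction to your bookkeeping: the drift of $\gamma_\uobs$ to $\gamma_{\uobs,c}=\sqrt{1-(\gamma_\obs+c\Delta_{r^2})^2}$ under the MVT sits in the denominator of the exponent and therefore cannot be absorbed into the constants $3/10$ and $3$ — it is precisely what produces the $(1+\eps')^2$ on the upper-bound side (since $c<0$ the numerator perturbation only ever shrinks the argument), so it must be packaged into the exponent as your final paragraph correctly anticipates.
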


\paragraph{Discussion.} Ignoring constants and lower order terms, we see that the prediction-access ratio is essentially\footnote{In particular, the $\epsilon$ and $\epsilon'$ are constants that can be chosen to be arbitrarily close to 0. Their exact value only affects the threshold $t$ that upper bounds $\alpha$ and $\Delta_{r^2}$.}
\begin{align*}
	\widetilde{\Theta}\left(\frac{\gamma_\uobs}{b} \alpha^{-1 / \gamma_{\uobs}^2}\right).
\end{align*}  
Hence, the prediction-access ratio in the probit case is significantly \emph{larger} than in the linear case. Since $1 / \gamma_\uobs^2 > 1$, the ratio behaves like $\alpha^{-c}$ for some $c > 1$ rather than $\alpha^{-1}$ as before. 
This implies that the relative value of prediction is \emph{smaller} in this discrete case and one should be relatively \emph{more} willing to expand access (even if the predictor explains only a very small fraction of the outcome variable). 

As motivated previously, the benefits of access outweigh those of prediction if the cost-benefit ratio, defined below for the probit model, is greater than 1. 
\begin{align}
\equationlabel{eq:cost_benefit_probit}
\parprob \times \frac{C_{r^2}(\Delta_{r^2})}{C_\alpha(\Delta_\alpha)}.
\end{align}
 Plugging in our bounds from \theoremref{theorem:par_probit}, and again ignoring lower order terms, the cost-benefit ratio (\equationref{eq:cost_benefit_probit}) is at least 1 if:  
\begin{align}
\equationlabel{eq:probit_cutoff}
	\gamma_{\uobs} \gtrsim \frac{C_\alpha(\Delta_\alpha)}{C_{r^2}(\Delta_{r^2})}  \alpha^{1 / \gamma^2_\uobs} b.
\end{align}
Therefore, if the level of access $\alpha$ is small (and in particular less than $\Pr[w_i > 0] = b$), as long as the costs of expanding access are not several orders of magnitude higher than those of improving prediction, expanding access is almost always the cost-efficient avenue of improvement. 

\paragraph{Visualizations.} As before, we complement these theoretical insights with numerical simulations, now presented in \figureref{fig:par_probit}. The plot displays values of the cost benefit ratio, \equationref{eq:cost_benefit_probit}, for various choices of system parameters $(\alpha, \gamma_\obs)$ and cost ratios  $C_{r^2}(\Delta_{r^2}) / C_\alpha(\Delta_\alpha)$. 

The visualizations further illustrate the technical point made previously: In this probit model the relative value of prediction is \emph{smaller} than in the linear regression case. In particular, if the level of access $\alpha$ is well below the base rate of positive improvements, $\alpha \ll b = \Prob{w_i=1}$, the cost benefit ratio is always significantly larger than 1, \emph{regardless} of the level of prediction $\gamma_\obs$. 

Due to the nonlinear nature of the problem, the threshold at which one is indifferent between expanding prediction versus expanding access is no longer linear (as seen from the relationship deduced in \equationref{eq:probit_cutoff}). Therefore, in this model, prediction and access do not go ``hand in hand'' as before. Rather, as long as $\gamma_\obs >0$, the focus should always be to expand access until $\alpha \approx b$, at which point the focus should switch to improving the predictor.

\section*{Acknowledgments}
 We would like to thank Peter Bartlett, Matthew DosSantos DiSorbo, Cynthia Dwork, Moritz Hardt, Michael P. Kim, Esther Rolf, Aaron Roth, Matt Salganik, and Keyon Vafa, as well as the anonymous reviewers at FORC and ICML, for insightful and constructive comments on our work. Conversations with Nathaniel ver Steeg were instrumental in the initial formulation of the project.
 This research was in part supported by the Harvard Center for Research on Computation and Society.

\bibliographystyle{alpha}
\bibliography{refs}

\appendix
\section{Supporting Arguments for Linear Regression Model}
\sectionlabel{sec:lr_appendix}
The following result is a general statement regarding the optimal policy for targeting problems that aim to maximize the expected welfare in the population. Nearly identical results are well-known and have appeared in numerous places (see e.g. \citet{bhattacharya2012inferring}). We include a derivation here purely for the sake of having a self-contained presentation.  
\begin{proposition}
\proplabel{prop:general_opt_policy}
For all distributions $\cD$ over $(x,w)$ supported on a discrete set of elements, the optimal policy maximizing the social planner's targeting problem, 
 \begin{align*}
	\max_{\pi \in \{\cX_\obs \rightarrow \{0,1\}\}} \quad &\E_{(x_{i,\obs}, w_i) \sim \cD}[w_i \cdot \pi(x_{i, \obs})] \\
	\text{subject to}\quad & \E_\cD[\pi(x_{i, \obs})] \leq \alpha,
\end{align*} 
is equal to, $$\pi^*_\obs(x_{i,\obs})  = \1\{E[w_i \mid x_{i,\obs}] > F^{-1}_{\obs}(1-\alpha)\} \cdot \1\{E[w_i \mid x_{i,\obs}] >0\},$$
 where $F_\obs^{-1}$ is the quantile function for the random variable $\E[w_i\mid X_\obs]$.
\begin{align*}
\end{align*}
\end{proposition}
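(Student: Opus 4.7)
My plan is to reduce the infinite-dimensional policy optimization to a one-dimensional thresholding problem via iterated expectations, and then verify optimality of the resulting rule through a standard pointwise exchange argument. The first step is to apply the tower property: since $\pi$ depends only on $x_\obs$,
\[
\E[w_i \pi(x_{i,\obs})] = \E\!\big[\E[w_i \mid x_{i,\obs}]\, \pi(x_{i,\obs})\big] = \E[g(x_{i,\obs})\, \pi(x_{i,\obs})],
\]
where $g(x_\obs) \defeq \E[w_i \mid x_{i,\obs} = x_\obs]$. This rewrites the planner's problem as choosing a $\{0,1\}$-valued function of $x_\obs$ that maximizes a mass-weighted sum of $g$ subject to $\E[\pi] \leq \alpha$, a fractional-knapsack-flavored problem.

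Next, let $\tau \defeq \max\{F_\obs^{-1}(1-\alpha),\, 0\}$. A case check on the sign of $F_\obs^{-1}(1-\alpha)$ shows that the product of the two indicators in the proposition equals $\1\{g(x_\obs) > \tau\}$, so I would treat this as the compact form of $\pi^*_\obs$. For any feasible competitor $\tilde\pi$ with $\E[\tilde\pi] \leq \alpha$, I would use the algebraic identity
\[
\E[g\,(\pi^*_\obs - \tilde\pi)] = \E[(g - \tau)(\pi^*_\obs - \tilde\pi)] + \tau\,\big(\E[\pi^*_\obs] - \E[\tilde\pi]\big).
\]
The first summand is pointwise nonnegative: $\pi^*_\obs = 1 \geq \tilde\pi$ wherever $g > \tau$ and $\pi^*_\obs = 0 \leq \tilde\pi$ wherever $g \leq \tau$, so the factors $g - \tau$ and $\pi^*_\obs - \tilde\pi$ always share the same sign. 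For the second summand, if $\tau = 0$ it vanishes, while if $\tau > 0$ the quantile is chosen so that $\E[\pi^*_\obs] = \Prob{g > \tau}$ is as close to $\alpha$ from below as possible, which combined with $\E[\tilde\pi] \leq \alpha$ makes the term nonnegative. Putting the two halves together yields $\E[g\pi^*_\obs] \geq \E[g\tilde\pi]$.

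The main obstacle is the discrete support assumption, which allows atoms in the distribution of $g(x_{i,\obs})$. When an atom sits exactly at $\tau$, the strict-inequality indicator can leave part of the budget unused, $\Prob{g > \tau} < \alpha$, and then the second term in the decomposition above is not automatically nonnegative. To close this gap I would invoke the standard convention in the targeting literature: ties at the boundary atom can be broken by any randomized split of the residual mass $\alpha - \Prob{g > \tau}$ without affecting the optimal value, so the deterministic rule stated in the proposition attains the supremum value of the planner's problem (and is the essentially unique deterministic optimizer off of this boundary set).
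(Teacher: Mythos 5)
Your proof is correct and reaches the same reduction as the paper --- both arguments first use the tower property (the paper writes it out as sums over the discrete support) to replace the objective by $\E[g(x_\obs)\pi(x_\obs)]$ with $g=\E[w\mid x_\obs]$ --- but from there the two routes diverge. The paper treats the problem as a knapsack-type linear program with value $v(x_\obs)=\Prob{x_\obs}\E[w\mid x_\obs]$ and cost $c(x_\obs)=\Prob{x_\obs}$, and asserts that the greedy rule (include items in decreasing order of $v/c=\E[w\mid x_\obs]$ ``until the constraint is reached'') is optimal, with the quantile function $F_\obs^{-1}$ translating that greedy selection into the stated threshold. You instead verify optimality directly by an exchange argument: the decomposition $\E[g(\pi^*-\tilde\pi)]=\E[(g-\tau)(\pi^*-\tilde\pi)]+\tau(\E[\pi^*]-\E[\tilde\pi])$ with $\tau=\max\{F_\obs^{-1}(1-\alpha),0\}$ makes the first term pointwise nonnegative by sign-matching, which is essentially a Neyman--Pearson/Lagrangian certificate. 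Your version buys a rigorous proof of the dominance inequality where the paper's greedy step is left informal; the paper's version buys brevity and makes the economic intuition (value-to-cost ratio) explicit. Notably, the atom-at-the-threshold issue you flag is genuine and is \emph{not} addressed in the paper's proof: with a purely atomic distribution of $g$ one typically has $\Prob{g>\tau}<\alpha$, and the strict-inequality policy can leave profitable budget unused, so the proposition as literally stated requires exactly the tie-breaking convention you invoke (or a randomized split of the boundary atom) to attain the optimum. Your handling of that edge case is more careful than the paper's, so there is nothing to fix.
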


\begin{proof}
This proof follows from viewing the optimization problem through the lens of linear programming. We start by expanding out the expectations as sums: 
\begin{align*}
	\E [w \cdot \pi(x_{\obs})] = \sum_{x_\obs, w} w \cdot \pi(x_\obs) \cdot \Prob{x_\obs, w} &= \sum_{x_\obs} \Prob{x_\obs} \pi(x_\obs) \sum_{w} w \cdot \Prob{w \mid x} = \sum_{x_\obs} \Prob{x_\obs} \pi(x_\obs) \E[w \mid x_\obs] \\ 
\E[ \pi(x_s)] &= \sum_{x} \pi(x_\obs) \Prob{x_\obs}.
\end{align*}
Therefore the optimization problem can be equivalently written as maximizing a linear cost function, while satisfying a linear constraint:
 \begin{align*}
	\max_{\pi \in \{\cX_\obs \rightarrow \{0,1\}\}} \quad & \sum_{x_\obs} \pi(x_\obs) v(x_\obs) \\
	\text{subject to}\quad & \sum \pi(x_\obs) c(x_\obs) \leq \alpha.
\end{align*} 
Here, $v(x_\obs)=  \Prob{x_\obs} \E[w \mid x_\obs]$ and $c(x_\obs) = \Pr[x_\obs]$. Without loss of generality, we can assume that $\Pr[x_\obs] > 0$ for all $x_\obs$, otherwise the terms can be removed from both sums. 

The optimal solution to this program is to set $\pi(x_\obs) = 1$, for the $x_\obs$ that maximize the ratio $v(x_\obs) / c(x_\obs)$ until the constraint is reached. Intuitively, setting $\pi(x_\obs)= 1$ ``costs'' $c(x_\obs)$ units and ``returns'' $v(x_\obs)$ units. In this setting, this ratio is equal to the conditional expectation:
\begin{align*}
	\frac{v(x_\obs)}{c(x_\obs)} = \E[w| x_\obs].
\end{align*}
By definition of the quantile function, assigning $\pi(x_\obs)= 1$ to the $x_\obs$ with the top $\alpha$ values of $\E[w\mid x_\obs]$ is equivalent to choosing 
\begin{align*}
	\pi(x_\obs) = \1\{\E[w| X_\obs] \geq F_{\obs}^{-1}(1-\alpha)\}. 
\end{align*}
Since we can always choose to treat fewer than an $\alpha$ fraction of the units, the second indicator function ensure that each term adds a nonnegative amount to the objective.
\end{proof}

\subsection{Value Function for Linear Regression Model: Proof of \propositionref{prop:value_f_linear}}
Since $w$ is Gaussian and we assume that $\alpha < .5$, $\mu > 0$, the condition that $\E[w_i\mid x_{i,\obs}] \geq F_{\obs}^{-1}(1-\alpha)$ implies that $\E[w_i \mid x_{i,\obs}] > 0$. 

Plugging in our expression for $\pi_\obs^*$ (\lemmaref{lemma:opt_policy_linear}) into the objective function, $\E[w_i \pi_\obs^*(x_{i,\obs})]$, and using the functional form for $w_i$ from the linear regression model, we have that, 
\begin{align*}
	\E[w_i \pi_\obs^*(x_{i,\obs})] & = \E[(\langle x_{i, \obs},\; \beta_\obs \rangle + \langle x_{i, \uobs},\; \beta_\uobs \rangle + \mu) \cdot \ind\{\langle x_{i,\obs},\; \beta_s \rangle \geq \Phi^{-1}(1-\alpha)\norm{\beta_\obs}\}] \\ 
	& = \alpha \mu + \E[\langle x_{i, \obs},\; \beta_\obs \rangle \cdot \ind\{\langle x_{i,\obs},\; \beta_s \rangle \geq \Phi^{-1}(1-\alpha)\norm{\beta_\obs}\}].
\end{align*}
Here, we've used the fact that the unobserved $\langle x_{i, \uobs},\; \beta_\uobs \rangle$ and observed components $\langle x_{i, \obs},\; \beta_\obs \rangle$ are independent to conclude that, 
\begin{align*}
\E[\langle x_{i, \uobs},\; \beta_\uobs \rangle \cdot \ind\{\langle x_{i,\obs},\; \beta_s \rangle \geq \Phi^{-1}(1-\alpha)\norm{\beta_\obs}\}] = \E[\langle x_{i, \uobs},\; \beta_\uobs \rangle ] \E[ \ind\{\langle x_{i,\obs},\; \beta_s \rangle \geq \Phi^{-1}(1-\alpha)\norm{\beta_\obs}\}] = 0,
\end{align*}
since $\E[\langle x_{i, \uobs},\; \beta_\uobs \rangle] = 0$. Now, using the general identity, that holds for any random variable $Z$ and event $A$,
\begin{align*}
\E[Z \cdot \ind_A] = \E[Z \mid A] \Pr[A],
\end{align*} 
we have that, for $w_{i,s} = \langle x_{i,\obs}, \; \beta_\obs \rangle$, the expectation $\E[w_{i,\obs} \cdot \ind\{ w_{i,\obs} \geq \Phi^{-1}(1-\alpha)\norm{\beta_\obs}\}]$ is equal to 
\begin{align*}
 \E[w_{i,\obs} \mid w_{i,\obs} \geq \Phi^{-1}(1-\alpha)\norm{\beta_\obs}]  \cdot \Prob{w_{i,\obs} \geq \Phi^{-1}(1-\alpha)\norm{\beta_\obs}}. 
\end{align*}
By the Mills Ratio identity (\lemmaref{lemma:mills_ratio}), since $w_{i,\obs} \sim \cN(0, \norm{\beta_\obs}^2)$, the first term is equal to,
\begin{align*}
 \E[w_{i,\obs} \mid w_{i,\obs} \geq \Phi^{-1}(1-\alpha)\norm{\beta_\obs}] &=  \norm{\beta_\obs} \frac{\phi(\frac{\Phi^{-1}(1-\alpha)\norm{\beta_\obs}}{\norm{\beta_\obs}})}{1 - \Phi(\frac{\Phi^{-1}(1-\alpha)\norm{\beta_\obs}}{\norm{\beta_\obs}})} = \frac{\norm{\beta_\obs} \phi(\Phi^{-1}(1-\alpha))}{\alpha},
\end{align*}
since $1 - \Phi(\Phi^{-1}(1-\alpha)) = \alpha$. And, by definition of the quantile function, 
\begin{align*}
\Prob{w_{i,\obs} \geq \Phi^{-1}(1-\alpha)\norm{\beta_\obs}} = \alpha.
\end{align*}
Therefore, 
\begin{align*}
\E[w_{i,\obs} \mid w_{i,\obs} \geq \Phi^{-1}(1-\alpha)\norm{\beta_\obs}] = \norm{\beta_\obs} \phi(\Phi^{-1}(1-\alpha)),
\end{align*}
and $$\E[w_i \pi_\obs^*(x_{i,\obs})] = \alpha \mu + \norm{\beta_\obs} \phi(\Phi^{-1}(1-\alpha)).$$ 
The exact expression follows by substituting $\norm{\beta_\obs} = \gamma_\obs \norm{\beta}$ (\definitionref{def:def_r2_linear}).

\subsection{Prediction Access Ratio for Linear Regression: Proof of \theoremref{theorem:par_linear}}

Recall that the goal is to prove upper and lower bounds on the ratio, 
\begin{align*}
	\frac{\Vlin(\alpha + \Delta_\alpha, \gamma_s) - \Vlin(\alpha, \gamma_s)}{\Vlin(\alpha, \gamma_s + \Delta_{r^2}) - \Vlin(\alpha, \gamma_s)}.
\end{align*}
To do so, we deal with the numerator and the denominator separately.

\paragraph{Quantifying Prediction Improvement.} We start with the denominator. This part is relatively simple because the value function is linear in $\gamma_\obs$. Recall from \propref{prop:value_f_linear} that: 
\begin{align*}
	\Vlin(\alpha, \gamma_\obs) = \alpha \mu + \gamma_\obs \norm{\beta} \phi(\Phi^{-1}(1-\alpha)). 
\end{align*}
Therefore, 
\begin{align*}
	\Vlin(\alpha, \gamma_\obs + \Delta_{r^2}) - \Vlin(\alpha, \gamma_\obs)   = \Delta_{r^2} \norm{\beta} \phi(\Phi^{-1}(1-\alpha)). 
\end{align*}
Now, if we apply \lemmaref{lemma:phi-of-Phi} and our assumption on $\alpha$, we get that, 
\begin{align}
\equationlabel{eq:beta_bounds}
\Delta_{r^2} \norm{\beta} \alpha \Phi^{-1}(1-\alpha) \leq\Vlin(\alpha, \gamma_\obs + \Delta_{r^2}) - \Vlin(\alpha, \gamma_\obs) \leq \Delta_{r^2} \norm{\beta} \alpha \Phi^{-1}(1-\alpha) (1 + f(\alpha)),
\end{align}
where $f(\alpha) = \Phi^{-1}(1-\alpha)^2 / (\Phi^{-1}(1-\alpha)^2 - 1) - 1$ is $o(1)$ and less than 1 for $\alpha < .05$.

\paragraph{Quantifying Access Improvement.} Using the closed form expression for the value function, we get that: 
\begin{align*}
\Vlin(\alpha + \Delta_{\alpha}, \gamma_\obs) - \Vlin(\alpha, \gamma_\obs)  &= \Delta_\alpha \cdot \mu + \gamma_\obs \norm{\beta} \left(\phi(\Phi^{-1}(1-\alpha - \Delta_\alpha)) - \phi(\Phi^{-1}(1-\alpha))  \right) .
\end{align*}
Applying \lemmaref{lemma:linear_acces_expansion}, we get that: 
\begin{align*}
\frac{1}{2} \Delta_\alpha \Phi^{-1}(1-\alpha) \leq 
\phi(\Phi^{-1}(1-\alpha - \Delta_\alpha)) - \phi(\Phi^{-1}(1-\alpha))
 \leq \Delta_\alpha \Phi^{-1}(1-\alpha).
\end{align*}
Therefore, 
\begin{align}
\equationlabel{eq:linear_a_bounds}
\Delta_\alpha (\mu +  \frac{1}{2} \gamma_\obs \norm{\beta} \Phi^{-1}(1-\alpha)) \leq \Vlin(\alpha + \Delta_{\alpha}, \gamma_\obs) - \Vlin(\alpha, \gamma_\obs) \leq \Delta_\alpha (\mu +  \gamma_\obs \norm{\beta} \Phi^{-1}(1-\alpha)).
\end{align}
The statement follows from combining the bounds in \equationref{eq:linear_a_bounds} and \equationref{eq:beta_bounds}.

\begin{lemma}
\lemmalabel{lemma:linear_acces_expansion}
Define $g(\alpha) = \phi(\Phi^{-1}(1-\alpha))$, then for $\alpha + \Delta < .05$ and $\Delta \leq 4\alpha$,
\begin{align*}
	 \Delta \Phi^{-1}(1-\alpha) \geq g(\alpha + \Delta) - g(\alpha)  \geq  \frac{1}{2}\Phi^{-1}(1-\alpha) \Delta.
\end{align*}
\end{lemma}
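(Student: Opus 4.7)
The plan is to reduce both inequalities to a single derivative computation via the mean value theorem. First I would compute $g'(\alpha)$ by implicit differentiation. Writing $u(\alpha) = \Phi^{-1}(1-\alpha)$, so that $\Phi(u(\alpha)) = 1-\alpha$, differentiation gives $\phi(u)\cdot u'(\alpha) = -1$, hence $u'(\alpha) = -1/\phi(u)$. Since $g(\alpha) = \phi(u(\alpha))$ and $\phi'(u) = -u\phi(u)$, the chain rule yields the clean identity
\begin{align*}
g'(\alpha) \;=\; \phi'(u)\cdot u'(\alpha) \;=\; u(\alpha) \;=\; \Phi^{-1}(1-\alpha).
\end{align*}

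Next I would apply the mean value theorem to write $g(\alpha+\Delta) - g(\alpha) = \Phi^{-1}(1-\xi)\,\Delta$ for some $\xi \in [\alpha, \alpha+\Delta]$. Since $\alpha \mapsto \Phi^{-1}(1-\alpha)$ is strictly decreasing, we immediately obtain the two-sided bound
\begin{align*}
\Phi^{-1}(1-\alpha-\Delta)\,\Delta \;\leq\; g(\alpha+\Delta) - g(\alpha) \;\leq\; \Phi^{-1}(1-\alpha)\,\Delta.
\end{align*}
The upper bound is exactly the desired one. For the lower bound, it remains to verify that $\Phi^{-1}(1-\alpha-\Delta) \geq \tfrac{1}{2}\,\Phi^{-1}(1-\alpha)$ under the assumed constraints $\Delta \leq 4\alpha$ and $\alpha + \Delta < 0.05$.

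This last step is the one place the quantitative hypotheses actually bite, and I expect it to be the main obstacle. The constraint $\Delta \leq 4\alpha$ implies $\alpha+\Delta \leq 5\alpha$, so it suffices to show $\Phi^{-1}(1-5\alpha) \geq \tfrac{1}{2}\Phi^{-1}(1-\alpha)$ for $\alpha \leq 0.01$. I would handle this in two regimes. At the boundary $\alpha = 0.01$ the ratio $\Phi^{-1}(0.95)/\Phi^{-1}(0.99) \approx 1.645/2.326 \approx 0.71 > \tfrac{1}{2}$ by direct numerical check. For smaller $\alpha$, I would use the standard asymptotic $\Phi^{-1}(1-\alpha) = \sqrt{2\log(1/\alpha)}\,(1+o(1))$ (cited elsewhere in the paper as \propositionref{prop:Phi-inverse}), so that
\begin{align*}
\frac{\Phi^{-1}(1-5\alpha)}{\Phi^{-1}(1-\alpha)} \;\approx\; \sqrt{1 - \frac{\log 5}{\log(1/\alpha)}},
\end{align*}
which tends to $1$ as $\alpha \to 0$ and is monotone in $\alpha$ on the relevant range. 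Combining the monotonicity with the boundary check gives the factor-$\tfrac{1}{2}$ bound uniformly, completing the proof. If one wanted to avoid the numerical check, one could alternatively give a crude but rigorous bound $\Phi^{-1}(1-5\alpha) \geq \sqrt{2\log(1/(5\alpha)) - c}$ for an explicit constant $c$ and compare to an upper envelope on $\Phi^{-1}(1-\alpha)$ of the same form, but at the cost of a slightly messier argument.
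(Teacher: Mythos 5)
Your proof is essentially correct but takes a genuinely different route from the paper. The paper expands $g$ to \emph{second} order around $\alpha$: it computes $g'(\alpha)=\Phi^{-1}(1-\alpha)$ and $g''(\alpha)=-1/\phi(\Phi^{-1}(1-\alpha))$, gets the upper bound for free from $g''<0$, and then shows the quadratic remainder is at most half the linear term by invoking \lemmaref{lemma:phi-of-Phi} to lower-bound $\phi(\Phi^{-1}(1-\alpha-c\Delta))$ by roughly $\sqrt{2}\,\alpha$ --- this is exactly where the hypothesis $\Delta\leq 4\alpha$ enters. You instead stop at first order, apply the mean value theorem, and use monotonicity of $\Phi^{-1}(1-\cdot)$ to reduce everything to the claim $\Phi^{-1}(1-\alpha-\Delta)\geq\tfrac12\Phi^{-1}(1-\alpha)$. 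Your route is conceptually cleaner (one derivative, no remainder bookkeeping) and makes transparent why the constant $\tfrac12$ appears; the paper's route stays entirely within its already-proved Gaussian-tail lemmas and avoids any numerical evaluation of $\Phi^{-1}$, which keeps the appendix self-contained.

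Two loose ends in your write-up are worth tightening. First, the reduction to ``$\alpha\leq 0.01$'' does not follow from the hypotheses: $\Delta\leq 4\alpha$ and $\alpha+\Delta<0.05$ permit, e.g., $\alpha=0.04$, $\Delta=0.005$. The regime $0.01<\alpha<0.05$ is easy --- there $\Phi^{-1}(1-\alpha-\Delta)\geq\Phi^{-1}(0.95)\approx 1.645$ while $\tfrac12\Phi^{-1}(1-\alpha)\leq\tfrac12\Phi^{-1}(0.99)\approx 1.16$ --- but it must be stated. Second, the monotonicity of $\alpha\mapsto\Phi^{-1}(1-5\alpha)/\Phi^{-1}(1-\alpha)$ is asserted rather than proved; it is true (and you correctly flag the issue and sketch an alternative), but as written the factor-$\tfrac12$ bound for $\alpha<0.01$ rests on an unverified claim plus a single boundary evaluation. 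Either prove the monotonicity (differentiating the ratio and using $\phi(\Phi^{-1}(1-a))\approx a\,\Phi^{-1}(1-a)$ from \lemmaref{lemma:phi-of-Phi} makes this a short computation) or replace it with the explicit two-sided bounds on $\Phi^{-1}$ you mention at the end. Neither fix changes the structure of your argument.
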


\begin{proof}
From \lemmaref{lemma:derivative_inverse}, we have that 
\begin{align*}
	 g'(\alpha) = \Phi^{-1}(1-\alpha) \text{ and } g''(\alpha) = \frac{-1}{\phi(\Phi^{-1}(1-\alpha))}.
\end{align*}
By Taylor's theorem, there exists a value $c \in [0,1]$ such that:
\begin{align*}
	g(\alpha + \Delta) - g(\alpha) &= g'(\alpha) \Delta + \frac{1}{2} \Delta^2  g''(\alpha + c \Delta) \\ 
	&= \Delta \Phi^{-1}(1-\alpha) - \frac{1}{2} \Delta^2 \frac{1}{\phi(\Phi^{-1}(1-\alpha - c \Delta))} \\ 
	& \leq \Delta \Phi^{-1}(1-\alpha).
\end{align*}
To prove the lower bound on $g(\alpha + \Delta) - g(\alpha)$ it suffices to establish that 
\begin{align}
\equationlabel{eq:2nd_order_bound}
	\frac{1}{2} \Delta^2 \frac{1}{\phi(\Phi^{-1}(1-\alpha - c \Delta))} \leq \frac{1}{2} \Delta \Phi^{-1}(1-\alpha).
\end{align}
Or equivalently, that $	\Delta  \leq  \Phi^{-1}(1-\alpha) \phi(\Phi^{-1}(1-\alpha - c \Delta))
$. To show this, we know that by \lemmaref{lemma:phi-of-Phi}, for $\alpha +  \Delta < .15$,
\begin{align*}
	\phi(\Phi^{-1}(1-\alpha - c \Delta)) \geq (\alpha + c \Delta) \Phi^{-1}(1- \alpha - c  \Delta) \geq \sqrt{2} \alpha. 
\end{align*}
The second inequality follows from the fact that $\Phi^{-1}(1-a) \geq \sqrt{2}$ for all $a < .05$ and that $c\Delta > 0$. Therefore, in order for, $$\Delta  \leq  \Phi^{-1}(1-\alpha) \phi(\Phi^{-1}(1-\alpha - c \Delta)),$$
it suffices for $\Delta \leq 4 \alpha$. This ensures that the inequality in  \equationref{eq:2nd_order_bound} is true and concludes the proof of the lower bound.
\end{proof}

\subsection{Supporting Technical Lemmas}

\begin{lemma}[Inverse Mills Ratio]
\lemmalabel{lemma:mills_ratio}
Let $z\sim \cN(\mu, \sigma^2)$, then 
\begin{align*}
	\E[z \mid z > a] = \mu + \sigma \frac{\phi(a)}{1 - \Phi(a)}.
\end{align*}
\end{lemma}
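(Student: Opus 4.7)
The proof plan is a short, direct derivation from the definition of conditional expectation, reducing to the standard normal via an affine change of variables and then evaluating a truncated first moment using the antiderivative identity $\frac{d}{dy}[-\phi(y)] = y\,\phi(y)$. First I would write
\[
\E[z \mid z > a] \;=\; \frac{\E[z\,\ind\{z > a\}]}{\Pr\{z > a\}},
\]
and substitute $y = (z-\mu)/\sigma$, so that $y \sim \cN(0,1)$, $z = \mu + \sigma y$, and the event $\{z > a\}$ becomes $\{y > c\}$ with $c = (a-\mu)/\sigma$. The denominator is then immediately $1 - \Phi(c)$. For the numerator, expanding gives
\[
\E[z\,\ind\{z > a\}] \;=\; \mu\,\Pr\{y > c\} \;+\; \sigma \int_c^\infty y\,\phi(y)\,dy,
\]
and the truncated first moment evaluates in closed form via the antiderivative identity above: $\int_c^\infty y\,\phi(y)\,dy = [-\phi(y)]_c^\infty = \phi(c)$. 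Dividing by $1 - \Phi(c)$ yields $\E[z \mid z > a] = \mu + \sigma\,\phi(c)/(1-\Phi(c))$, which is the claimed formula (with the understanding that the $\phi$ and $\Phi$ in the statement are evaluated at the standardized argument $c = (a-\mu)/\sigma$; this is precisely the form in which the lemma is used in the proof of \propref{prop:value_f_linear}, where $\mu=0$, $\sigma = \norm{\beta_\obs}$, and the arguments appear as $\phi\bigl(\Phi^{-1}(1-\alpha)\norm{\beta_\obs}/\norm{\beta_\obs}\bigr)$).

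There is no genuine obstacle here: the only two ingredients are the definition of conditional expectation and the identity $(-\phi)' = y\phi$, both of which are elementary. The one small care point is bookkeeping around the normalization convention — the statement as literally written drops the standardization inside $\phi(\cdot)$ and $\Phi(\cdot)$, but the change of variables forces the argument $(a-\mu)/\sigma$ to appear, which matches the downstream usage. I would also briefly justify swapping integration and the indicator (standard, since $y\phi(y)$ is integrable on $[c,\infty)$) so that the computation is rigorous.
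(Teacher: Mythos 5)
Your proof is correct and complete. The paper itself does not prove this lemma --- it simply cites \cite{johnson1970continuous} as a standard reference --- so your elementary derivation (write the conditional expectation as $\E[z\,\ind\{z>a\}]/\Pr\{z>a\}$, standardize via $y=(z-\mu)/\sigma$, and evaluate the truncated first moment with the identity $\int_c^\infty y\,\phi(y)\,dy=\phi(c)$) supplies exactly the argument the paper omits. You are also right to flag the notational slip: as literally stated the lemma should read $\phi\bigl((a-\mu)/\sigma\bigr)$ and $\Phi\bigl((a-\mu)/\sigma\bigr)$ rather than $\phi(a)$ and $\Phi(a)$, and your observation that the downstream application in the proof of \propref{prop:value_f_linear} uses the standardized argument $\Phi^{-1}(1-\alpha)\norm{\beta_\obs}/\norm{\beta_\obs}$ confirms that the standardized form is the intended one. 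No gaps.
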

\begin{proof}
This is a well known property of Gaussians, see e.g \cite{johnson1970continuous}.
\end{proof}

\begin{lemma}
\lemmalabel{lemma:derivative_inverse}
\begin{align*}
\frac{\partial}{\partial \alpha} \Phi^{-1}(1-\alpha) &=-\frac{1}{\phi(\Phi^{-1}(1-\alpha))}  \\ 
\frac{\partial}{\partial \alpha} \phi(\Phi^{-1}(1-\alpha)) &= \Phi^{-1}(1-\alpha)
\end{align*}
\end{lemma}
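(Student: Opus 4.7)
The plan is to establish both derivative identities as direct applications of the chain rule, together with the two elementary facts that $\Phi'(z) = \phi(z)$ (by definition of the CDF) and $\phi'(z) = -z\,\phi(z)$ (by direct differentiation of the Gaussian density).

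For the first identity, I would begin from the defining relation $\Phi\bigl(\Phi^{-1}(1-\alpha)\bigr) = 1-\alpha$, which holds for all $\alpha \in (0,1)$ since $\Phi$ is a strictly increasing bijection onto $(0,1)$. Differentiating both sides in $\alpha$ and applying the chain rule on the left yields
\[
\phi\bigl(\Phi^{-1}(1-\alpha)\bigr)\cdot \frac{\partial}{\partial \alpha}\Phi^{-1}(1-\alpha) \;=\; -1,
\]
and solving for the derivative gives the claimed expression $-1/\phi(\Phi^{-1}(1-\alpha))$. The only subtlety is remembering the minus sign contributed by differentiating $1-\alpha$.

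For the second identity, I would apply the chain rule to the composition $\alpha \mapsto \phi\bigl(\Phi^{-1}(1-\alpha)\bigr)$, obtaining
\[
\frac{\partial}{\partial \alpha}\phi\bigl(\Phi^{-1}(1-\alpha)\bigr) \;=\; \phi'\bigl(\Phi^{-1}(1-\alpha)\bigr)\cdot \frac{\partial}{\partial \alpha}\Phi^{-1}(1-\alpha).
\]
Substituting $\phi'(z) = -z\,\phi(z)$ with $z = \Phi^{-1}(1-\alpha)$, and then using the first identity to replace $\partial_\alpha \Phi^{-1}(1-\alpha)$ by $-1/\phi(\Phi^{-1}(1-\alpha))$, the two factors of $\phi(\Phi^{-1}(1-\alpha))$ cancel and the two minus signs combine to yield $+\Phi^{-1}(1-\alpha)$, as claimed.

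There is no genuine obstacle here: the entire argument is a two-step chain rule computation. The only care required is bookkeeping of signs (the $-1$ from $1-\alpha$ and the $-z$ from $\phi'(z)$) and noting that since $\phi > 0$ on $\mathbb{R}$, the reciprocal $1/\phi(\Phi^{-1}(1-\alpha))$ is well-defined for every $\alpha \in (0,1)$, so both identities hold on this whole interval.
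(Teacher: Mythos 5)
Your proof is correct and follows essentially the same route as the paper: both identities come from the inverse function theorem (you differentiate the defining relation $\Phi(\Phi^{-1}(1-\alpha)) = 1-\alpha$ directly, while the paper first states the general formula for $(f^{-1})'$, but these are the same computation), followed by the chain rule with $\phi'(z) = -z\,\phi(z)$ for the second identity. No gaps; the sign bookkeeping and the cancellation of $\phi(\Phi^{-1}(1-\alpha))$ are handled exactly as in the paper's argument.
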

\begin{proof}
These identities are also standard, we include a proof for the sake of completeness. By definition of inverse functions, 
\begin{align*}
	\frac{\partial}{\partial x} f^{-1}(f(x)) = \frac{\partial}{\partial x} x = 1,
\end{align*}
and by the chain rule, 
\begin{align*}
	\frac{\partial}{\partial x} f^{-1}(f(x)) = f'(f^{-1}(x))(f^{-1})'(x).
\end{align*}
Combining these two we get that: 
\begin{align*}
	\frac{\partial}{\partial x} f^{-1}(x) = \frac{1}{f'(f(x))} \text{ and } \frac{\partial}{\partial x} f^{-1}(1 - x) = - \frac{1}{f'(f(x))}.
\end{align*}
Since $\Phi' = \phi$ ($\Phi$ is the CDF and $\phi$ is a PDF), then we can use the second identity above to conclude that:
\begin{align*}
\frac{\partial}{\partial \alpha} \Phi^{-1}(1-\alpha) &=-\frac{1}{\phi(\Phi^{-1}(1-\alpha))}.
\end{align*}
For the second calculation, we use the fact that $\phi'(x)=-x\phi(x)$ as well as the result from the first part. By the chain rule, 
\begin{align*}
	\frac{\partial}{\partial \alpha} \phi(\Phi^{-1}(1-\alpha)) &= - \Phi^{-1}(1-\alpha) \phi(\Phi^{-1}(1-\alpha)) \left(\frac{\partial}{\partial \alpha} \Phi^{-1}(1-\alpha) \right) \\
	& = \Phi^{-1}(1-\alpha).
\end{align*}
\end{proof}

\begin{lemma}
\lemmalabel{lemma:gaussian_tails}
Let $z \sim \cN(0,1)$, then, for all $t > 0$: 
\begin{align*}
	\frac{\phi(t)}{t}(1 - t^{-2})\leq \Pr(z \geq t) \leq \frac{\phi(t)}{t}.
\end{align*}
\end{lemma}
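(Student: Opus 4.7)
The plan is to prove both inequalities using the identity $\phi'(s) = -s\phi(s)$, which lets us integrate $\phi$ against powers of $1/s$ via integration by parts. Throughout, write $\Pr(z \geq t) = \int_t^\infty \phi(s)\,ds$ with $\phi(s) = \frac{1}{\sqrt{2\pi}} e^{-s^2/2}$.

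For the upper bound, the key observation is that for $s \geq t > 0$ we have $1 \leq s/t$, so $\phi(s) \leq \frac{s}{t}\phi(s)$. Integrating from $t$ to $\infty$ and using $\int_t^\infty s\phi(s)\,ds = [-\phi(s)]_t^\infty = \phi(t)$ gives $\Pr(z \geq t) \leq \phi(t)/t$ immediately.

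For the lower bound, I would apply integration by parts to the identity $\phi(s) = -\phi'(s)/s$, writing
\begin{align*}
\Pr(z \geq t) = \int_t^\infty \frac{1}{s} \cdot s\phi(s)\,ds = \left[-\frac{\phi(s)}{s}\right]_t^\infty - \int_t^\infty \frac{\phi(s)}{s^2}\,ds = \frac{\phi(t)}{t} - \int_t^\infty \frac{\phi(s)}{s^2}\,ds.
\end{align*}
So it suffices to show $\int_t^\infty \phi(s)/s^2 \,ds \leq \phi(t)/t^3$. The cleanest way is to reuse the upper bound we just proved: since $s \geq t$ implies $1/s^2 \leq 1/t^2$,
\begin{align*}
\int_t^\infty \frac{\phi(s)}{s^2}\,ds \leq \frac{1}{t^2}\int_t^\infty \phi(s)\,ds \leq \frac{1}{t^2}\cdot \frac{\phi(t)}{t} = \frac{\phi(t)}{t^3}.
\end{align*}
Substituting back gives $\Pr(z \geq t) \geq \phi(t)/t - \phi(t)/t^3 = (\phi(t)/t)(1 - t^{-2})$, which is the desired lower bound.

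No step here is really an obstacle; the only subtlety is remembering that the integration-by-parts boundary term at infinity vanishes because $\phi(s)/s \to 0$, and that no hypothesis like $t \geq 1$ is needed since the lower bound is vacuous when $t \leq 1$ (the right-hand side is nonpositive). I would present the two inequalities in the order above, with the upper bound first so it can be recycled inside the lower-bound argument.
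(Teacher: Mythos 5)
Your proof is correct and complete. The paper itself does not prove this lemma---it simply cites an external writeup (Waggoner) for the standard Gaussian tail bounds---so your argument supplies a self-contained derivation where the paper defers to a reference. Both inequalities check out: the upper bound follows from $\phi(s) \leq (s/t)\phi(s)$ for $s \geq t$ together with $\int_t^\infty s\phi(s)\,ds = \phi(t)$, and the lower bound from one integration by parts (using $\phi'(s) = -s\phi(s)$) followed by recycling the upper bound to control the remainder term $\int_t^\infty \phi(s)/s^2\,ds \leq \phi(t)/t^3$. Your closing remarks are also the right ones to make: the boundary term $\phi(s)/s \to 0$ at infinity, and no restriction to $t \geq 1$ is needed since the lower bound is vacuous for $t \leq 1$. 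This is the classical Mills-ratio argument (sometimes attributed to Feller), and it is exactly the kind of proof the cited source contains; including it would make the paper's appendix self-contained at essentially no cost.
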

\begin{proof}
These bounds appear in numerous sources, this particular result is drawn from a writeup from Bo Waggoner \citep{waggoner}.
\end{proof}

\begin{lemma}
\lemmalabel{lemma:phi-of-Phi}
For all $\alpha < .15$,
\begin{align*}
\alpha	 \Phi^{-1}(1-\alpha)\leq \phi(\Phi^{-1}(1-\alpha)) \leq  \Phi^{-1}(1-\alpha) \alpha (1 + f(\alpha)),
\end{align*}
where $f(\alpha) = \Phi^{-1}(1-\alpha)^2 / (\Phi^{-1}(1-\alpha)^2 - 1) - 1$ is o(1) and less than 1 for all $\alpha < .05$.
\end{lemma}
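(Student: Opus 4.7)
The plan is to derive both inequalities as an immediate consequence of the standard Gaussian tail bounds stated in \lemmaref{lemma:gaussian_tails}, applied at the particular point $t = \Phi^{-1}(1-\alpha)$. At this point, by definition of the quantile function, $\Pr(z \geq t) = 1 - \Phi(t) = \alpha$, so the lemma becomes a sandwich inequality involving $\alpha$, $\phi(t)$, and $t$.

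First, I would substitute $t = \Phi^{-1}(1-\alpha)$ into the upper bound $\Pr(z \geq t) \leq \phi(t)/t$ from \lemmaref{lemma:gaussian_tails}. Rearranging yields $\alpha t \leq \phi(t)$, which is exactly the desired lower bound $\alpha \Phi^{-1}(1-\alpha) \leq \phi(\Phi^{-1}(1-\alpha))$. Next, I would apply the lower bound $(\phi(t)/t)(1-t^{-2}) \leq \Pr(z \geq t)$ at the same point, giving $\phi(t)(1-t^{-2}) \leq \alpha t$. For this rearrangement to be meaningful I need $1 - t^{-2} > 0$, i.e.\ $t > 1$; since $\Phi^{-1}(1-\alpha) > \Phi^{-1}(0.85) > 1$ for all $\alpha < 0.15$, this is satisfied under the stated hypothesis. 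Dividing through gives
\begin{align*}
\phi(t) \leq \alpha t \cdot \frac{1}{1-t^{-2}} = \alpha t \cdot \frac{t^2}{t^2-1} = \alpha t \left(1 + \frac{1}{t^2-1}\right),
\end{align*}
and recognizing that $t^2/(t^2-1) - 1 = 1/(t^2-1) = f(\alpha)$ in the notation of the lemma yields the upper bound.

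The only small remaining item is verifying the side remark that $f(\alpha) < 1$ for $\alpha < 0.05$: this reduces to checking $\Phi^{-1}(1-\alpha)^2 > 2$, i.e.\ $\Phi^{-1}(1-\alpha) > \sqrt{2} \approx 1.414$, which holds since $\Phi^{-1}(0.95) \approx 1.645$. The fact that $f(\alpha) = o(1)$ as $\alpha \to 0$ follows since $\Phi^{-1}(1-\alpha) \to \infty$ as $\alpha \to 0$, making $1/(\Phi^{-1}(1-\alpha)^2 - 1) \to 0$.

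There is no real obstacle here: the entire proof is a routine substitution into the Gaussian tail bounds, followed by numerical verification of the two threshold conditions on $\alpha$. The only mild subtlety is keeping track of which direction of inequality in \lemmaref{lemma:gaussian_tails} produces which direction of the sandwich bound in the lemma, since the substitution $\Pr(z\geq t) = \alpha$ effectively inverts the roles.
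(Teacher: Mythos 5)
Your proposal is correct and follows exactly the paper's argument: substitute $t = \Phi^{-1}(1-\alpha)$ into the tail bounds of \lemmaref{lemma:gaussian_tails}, use $\Pr(z \geq t) = \alpha$, and rearrange. Your additional checks that $t > 1$ (so the rearrangement is valid) and that $f(\alpha) < 1$ for $\alpha < 0.05$ are slightly more explicit than the paper's one-line ``rearranging'' remark, but the route is identical.
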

\begin{proof}
The main idea is to use \lemmaref{lemma:gaussian_tails} and set $t = \Phi^{-1}(1-\alpha)$. By definition of the quantile function, if $z$ is a standard normal:
\begin{align*}
	\Pr(z \geq \Phi^{-1}(1-\alpha)) = \alpha.
\end{align*}
From our assumption on $\alpha$, $t = \Phi^{-1}(1-\alpha) > 0$ and hence applying \lemmaref{lemma:gaussian_tails} we get that, 
\begin{align*}
	\frac{\phi(\Phi^{-1}(1-\alpha))}{\Phi^{-1}(1-\alpha)} \left(1 - \frac{1}{\Phi^{-1}(1-\alpha)^2} \right)\leq \alpha \leq \frac{\phi(\Phi^{-1}(1-\alpha))}{\Phi^{-1}(1-\alpha)}. 
\end{align*}
The final statement follows from rearranging these inequalities.
\end{proof}

\begin{proposition}
\proplabel{prop:Phi-inverse}
\begin{align*}
	\Phi^{-1}(1-\alpha)  = \Theta(\sqrt{\log(1/\alpha)}) \text{ as } \alpha \rightarrow 0
\end{align*}
\end{proposition}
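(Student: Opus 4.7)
The plan is to set $t = \Phi^{-1}(1-\alpha)$ and then invoke the Gaussian tail bound in \lemmaref{lemma:gaussian_tails}, which the paper has already proved, to sandwich $\alpha$ in terms of $t$. By definition of the quantile function, $\Pr(z \geq t) = \alpha$ for $z \sim \cN(0,1)$, and as $\alpha \to 0$ we have $t \to \infty$, so in particular we may assume $t \geq 2$. Plugging in,
\begin{equation*}
  \frac{\phi(t)}{t}\Paren{1 - \frac{1}{t^2}} \;\leq\; \alpha \;\leq\; \frac{\phi(t)}{t}.
\end{equation*}

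For the upper bound on $t$ (giving $t = O(\sqrt{\log(1/\alpha)})$), I would take logs of the lower estimate $\alpha \geq \tfrac{3}{4}\phi(t)/t$ (using $1 - t^{-2} \geq 3/4$ for $t \geq 2$), which yields
\begin{equation*}
  \tfrac{1}{2} t^2 \;\leq\; \log(1/\alpha) + \log\tfrac{3}{4\sqrt{2\pi}} - \log t.
\end{equation*}
Since $\log t \geq 0$ can only help, this gives $t^2 \leq 2\log(1/\alpha) + O(1)$, and in particular $t \leq C \sqrt{\log(1/\alpha)}$ for an absolute constant $C$ and all sufficiently small $\alpha$.

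For the lower bound on $t$ (giving $t = \Omega(\sqrt{\log(1/\alpha)})$), I would take logs of the upper estimate $\alpha \leq \phi(t)/t$:
\begin{equation*}
  \tfrac{1}{2} t^2 \;\geq\; \log(1/\alpha) - \log(t\sqrt{2\pi}).
\end{equation*}
The only subtlety is the $\log t$ term on the right, which is where I expect the main (minor) obstacle: we need to verify it does not swallow the $\log(1/\alpha)$ term. But we can bootstrap using the already-established upper bound $t \leq C\sqrt{\log(1/\alpha)}$, which gives $\log t \leq \tfrac{1}{2}\log\log(1/\alpha) + O(1) = o(\log(1/\alpha))$. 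Substituting, $\tfrac{1}{2} t^2 \geq \log(1/\alpha) - o(\log(1/\alpha))$, so $t \geq c\sqrt{\log(1/\alpha)}$ for some constant $c > 0$ and all sufficiently small $\alpha$.

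Combining the two bounds gives $t = \Phi^{-1}(1-\alpha) = \Theta(\sqrt{\log(1/\alpha)})$ as $\alpha \to 0$, as desired. The entire argument is one application of \lemmaref{lemma:gaussian_tails} plus careful log-manipulation; no new machinery is required.
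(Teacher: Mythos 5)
Your overall strategy is exactly the paper's: set $t = \Phi^{-1}(1-\alpha)$, sandwich $\alpha$ via the Gaussian tail bounds of \lemmaref{lemma:gaussian_tails}, take logs, and show the $t^2$ term dominates. The one genuine issue is that you have paired each tail estimate with the wrong bound on $t$, and as a result both of your displayed inequalities point the wrong way as consequences of their stated premises. Since $\phi(t)/t$ is strictly decreasing, the \emph{upper} tail estimate $\alpha \leq \phi(t)/t$ is the one that caps $t$ from above: taking logs of it yields $\tfrac12 t^2 \leq \log(1/\alpha) - \log(t\sqrt{2\pi})$, which is (up to the constant) your \emph{second} display. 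Symmetrically, the \emph{lower} estimate $\alpha \geq \tfrac{3}{4}\phi(t)/t$ yields $\tfrac12 t^2 \geq \log(1/\alpha) + \log\tfrac{3}{4\sqrt{2\pi}} - \log t$, i.e.\ your \emph{first} display with the inequality reversed, and it is this one that pushes $t$ up from below. (Quick sanity check at $\alpha \approx 0.023$, $t=2$: $\phi(2)/2 \approx 0.027 \geq \alpha$, and driving $\phi(s)/s$ down to $\alpha$ requires $s \approx 2.08 > t$, so the upper tail estimate indeed gives $t \leq s$.)

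The good news is that the error is purely one of labels: each displayed inequality, after adjusting the additive constant, is the correct consequence of the \emph{other} premise, and the reasoning you attach to each display is the right reasoning for that display --- dropping the harmless $-\log t$ term to get $t = O(\sqrt{\log(1/\alpha)})$, and bootstrapping $\log t = O(\log\log(1/\alpha)) = o(\log(1/\alpha))$ to get $t = \Omega(\sqrt{\log(1/\alpha)})$. Swap which estimate feeds which display and the proof goes through verbatim. I will add that your bootstrap step is actually more careful than the paper's own argument, which reduces to the same two transcendental equations and then simply asserts that for small $\alpha$ ``the $t^2$ term dominates'' without explicitly controlling the $\log t$ term; your version makes that step rigorous at no extra cost.
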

\begin{proof}
From the definition of the quantile function, we have that
\begin{align*}
	\Phi^{-1}(1-\alpha) = \argmax_{t} \Pr(z \geq t) \leq \alpha.
\end{align*}
From the \lemmaref{lemma:gaussian_tails}, for $t \geq \sqrt{2}$,
\begin{align*}
	\frac{\phi(t)}{2t}  \leq \Pr[z \geq t] \leq \frac{\phi(t)}{t}.
\end{align*}  
Therefore, $\Phi^{-1}(1-\alpha) \geq t_1$ where $t_1$ solves,
\begin{align*}
	\frac{\phi(t_1)}{t_1} = \alpha,
\end{align*}
and $\Phi^{-1}(1-\alpha) \leq t_2$ where $t_2$ solves, 
\begin{align*}
	\frac{\phi(t_2)}{2t_2} = \alpha.
\end{align*}
Using the definition of $\phi$, it suffices to solve for the values of $t$ that solve the equation,
\begin{align*}
	\frac{1}{\sqrt{2\pi}} \exp(-t^2 / 2) \frac{1}{c_0 t} = \alpha,
\end{align*}
for $c_0 \in \{1,2\}$. Moving everything into logs, this becomes
\begin{align*}
	\frac{t^2}{2} - \log(\frac{1}{\sqrt{2\pi}}) - \log(\frac{1}{c t}) =  \log(1/\alpha).
\end{align*}
For small values of $\alpha$, the $t^2$ term dominates and we get that 
\begin{align*}
 t \approx \sqrt{\log(1/\alpha)}.
\end{align*}
\end{proof}

\section{Supporting Arguments for Probit Regression Model}

\subsection{Value Function for Probit Regression: Proof of \lemmaref{lemma:opt_policy_probit}}
Using the lemma regarding the functional form for the optimal targeting policy (\lemmaref{lemma:opt_policy_probit}), and the definition of $w_i$ from the probit model, we can expand out the value of the optimal policy as follows:
\begin{align}
  \E[w_i \pi^*_\obs(x_\obs)] = 
	& = \E[\1\{x_\obs^\top \beta_\obs +x_{\uobs}^\top \beta_{\uobs} + \mu > 0\} \cdot \1 \{x_\obs^\top \beta_\obs + \mu \geq \mu + \norm{\beta_\obs} \Phi^{-1}(1-\alpha) ] \notag\\
	& = \E[\1\{x_\obs^\top \beta_\obs + x_{\uobs}^\top \beta_{\uobs} + \mu > 0\} \cdot \1 \{x_\obs^\top \beta_\obs  \geq  \norm{\beta_\obs} \Phi^{-1}(1-\alpha)]. \equationlabel{eq:first_decomp_discrete}
\end{align}
Using the fact that the random variables $x_\obs^\top \beta_\obs \sim \cN(0, \norm{\beta_\obs}^2)$ and $x_{\uobs}^\top \beta_{\uobs} \sim \cN(0, \norm{\beta_\uobs}^2)$  are independent and Gaussian, we can rewrite these as, 
\begin{align*}
	x_\obs^\top \beta_\obs = \norm{\beta_\obs} z_\obs \text{ and } x_{\uobs}^\top \beta_{\uobs} = \norm{\beta_{\uobs}} z_{\uobs},
\end{align*}
where $z_\obs, z_{\uobs} \sim \cN(0,1)$ are i.i.d. Adopting this notation, we can rewrite the previous expression for the value function in \equationref{eq:first_decomp_discrete} as: 
\begin{align*}
 &= \E[\1\{\norm{\beta_\obs} z_\obs + \norm{\beta_{\uobs}} z_{\uobs}
 + \mu > 0\} \cdot \1 \{\norm{\beta_\obs} z_\obs  \geq  \norm{\beta_\obs} \Phi^{-1}(1-\alpha) ]\\ & = \E[\1\{\norm{\beta_\obs} z_\obs + \norm{\beta_{\uobs}} z_{\uobs}
 + \mu > 0\} \cdot \1 \{ z_\obs  \geq   \Phi^{-1}(1-\alpha) ]  \\
 &= \Pr[\1\{\norm{\beta_\obs} z_\obs + \norm{\beta_{\uobs}} z_{\uobs}
 + \mu > 0, \; z_\obs  \geq   \Phi^{-1}(1-\alpha)] \\ 
 & = \Pr[z_{\uobs} \geq \frac{- \norm{\beta_\obs} z_\obs - \mu}{\norm{\beta_{\uobs}}}, \; z_\obs \geq \Phi^{-1}(1-\alpha)] \\ 
 & = \int_{\Phi^{-1}(1-\alpha)}^\infty \int_{\frac{- \norm{\beta_\obs} z_\obs - \mu}{\norm{\beta_{\uobs}}}}^\infty \phi(z_{\uobs}) \phi(z_\obs) dz_{\uobs} dz_{\obs} \\ 
 & =  \int_{\Phi^{-1}(1-\alpha)}^\infty \left(1 - \Phi\left(\frac{- \norm{\beta_\obs} z_\obs - \mu}{\norm{\beta_{\uobs}}}\right)\right) \phi(z_\obs).
\end{align*}
Here, we've again used $\phi$ denote the pdf for a standard Gaussian. Using the identity, $1 - \Phi(-a) = \Phi(a)$, this last expression is equal to the following integral: 
\begin{align*}
	\int_{\Phi^{-1}(1-\alpha)}^\infty \Phi\left(\frac{\norm{\beta_\obs} z_\obs + \mu}{\norm{\beta_{\uobs}}}\right) \phi(z_\obs) dz_{\obs}.
\end{align*}
The exact statement follows from substituting $\norm{\beta_\obs} = \gamma_\obs \norm{\beta}$ and $\norm{\beta_\uobs} = \gamma_{\uobs} \norm{\beta}$.

\subsection{Prediction Access Ratio for Probit Regression: Proof of \theoremref{theorem:par_probit}}
Recall from the definition of the probit model that,
\begin{align*}
	\Pr[\langle x, \beta\rangle  + \mu > 0] = b \in (0,1),
\end{align*}
and hence, $\mu \norm{\beta}^{-1} = \Phi^{-1}(b) = - \Phi^{-1}(1-b)$.

\paragraph{Quantifying Improvements by Expanding Access.} By Taylor's theorem, there exists some $c \in (0,1)$ such that:
\begin{align*}
\Vprob(\alpha + \Delta_\alpha, \gamma_\obs) - \Vprob(\alpha , \gamma_\obs) = \Delta_\alpha \cdot \frac{\partial}{\partial \alpha} \Vprob(\alpha + c \Delta_\alpha, \gamma_\obs).
\end{align*}
The last term is the derivative of the value function evaluated at the point $(\alpha + c \Delta_\alpha, \gamma_\obs)$.
From, \lemmaref{lemma:discrete_alpha_derivative}, we know that 
\begin{align*}
\frac{\partial}{\partial \alpha} \Vprob(\alpha + c \Delta_\alpha, \gamma_\obs) = \Phi\left(\frac{\gamma_\obs \Phi^{-1}(1-\alpha - c \Delta_\alpha) + \mu \norm{\beta}^{-1}}{\gamma_{\uobs}}  \right).
\end{align*}
Using the identity, $\mu \norm{\beta}^{-1} = - \Phi^{-1}(1-b)$,  we can rewrite this as 
\begin{align}
\label{eq:intermediate_phi}
\Phi\left(\frac{\gamma_\obs \Phi^{-1}(1-\alpha - c \Delta_\alpha) + \mu \norm{\beta}^{-1}}{\gamma_{\uobs}}  \right)  = \Phi\left( \frac{1}{\gamma_{\uobs}}\big(\gamma_\obs\Phi^{-1}(1- \alpha - c\Delta_\alpha) - \Phi^{-1}(1-b)\big)\right). 
\end{align}
Assuming that $\Delta_\alpha \leq \alpha$ and that $\alpha$ is small enough so that $\gamma_\obs \Phi^{-1}(1-2\alpha) \geq \Phi^{-1}(1-b)$, then,
\begin{align*}
	\gamma_\obs\Phi^{-1}(1- \alpha - c\Delta_\alpha) \geq \gamma_\obs \Phi^{-1}(1-2\alpha) \geq \Phi^{-1}(1-b),
\end{align*}
and we have that the expression in \eqref{eq:intermediate_phi} is of the form $\Phi(x)$ for some $x \geq 0$. Therefore, by properties of the Gaussian CDF, 
\begin{align*}
\frac{\partial}{\partial \alpha} \Vprob(\alpha + c \Delta_\alpha, \gamma_\obs) \in [1/2, 1].
\end{align*}
Hence, 
\begin{align}
\label{eq:probit_acces_bounds}
	\frac{1}{2} \Delta_\alpha \leq \Vprob(\alpha + \Delta_\alpha, \gamma_\obs) - \Vprob(\alpha , \gamma_\obs) \leq \Delta_\alpha.
\end{align}

\paragraph{Quantifying Improvements via Prediction.} We use the same strategy as in the previous part. By Taylor's theorem, there exists some $c \in (0,1)$ such that:
\begin{align*}
\Vprob(\alpha, \gamma_\obs + \Delta_{r^2}) - \Vprob(\alpha , \gamma_\obs) = \Delta_{r^2} \cdot \frac{\partial}{\partial \gamma_\obs} \Vprob(\alpha, \gamma_\obs + c \Delta_{r^2}).
\end{align*}
Now, let $\gamma_{\obs,c} = \gamma_\obs + c \Delta_{r^2}$ and $\gamma_{\uobs,c} = \sqrt{1 - \gamma_{\obs,c}^2}$. Then, by \lemmaref{lemma:discrete_derivative_gamma},  
\begin{align*}
\frac{\partial}{\partial \gamma_\obs} \Vprob(\alpha, \gamma_\obs + c \Delta_{r^2}) &= 	\frac{1}{\gamma_{\uobs, c}} \phi\left(\frac{\mu}{\norm{\beta}} \right) \phi\left(\frac{\Phi^{-1}(1-\alpha) + \mu \norm{\beta}^{-1} \gamma_{\obs,c}}{\gamma_{\uobs,c}}\right) \\ 
& = \underbrace{\frac{1}{\gamma_{\uobs, c}} \phi\left(\frac{\mu}{\norm{\beta}} \right)}_{T_1} \underbrace{\phi\left(\frac{\Phi^{-1}(1-\alpha) - \Phi^{-1}(1-b) \gamma_{\obs,c}}{\gamma_{\uobs,c}}\right)}_{T_2}.
\end{align*}
We start by analyzing bounds on $T_2$. For any $\epsilon_1 > 0$, there exists a value $t_1(\epsilon_1, \gamma_t)$, depending on $\epsilon_1$ and $\gamma_t$, such that for all $\Delta_{r^2} \leq t_1(\epsilon_1, \gamma_t)$, we have that $.5 \gamma_{\uobs} \leq (1 -\eps_1) \gamma_{\uobs} \leq \gamma_{\uobs, c} \leq \gamma_{\uobs}.$
Assuming this condition holds, since $\gamma_{\obs, c} \Phi^{-1}(1-b) >0$, we get that: 
\begin{align*}
	\frac{\Phi^{-1}(1-\alpha) - \Phi^{-1}(1-b) \gamma_{\obs,c}}{\gamma_{\uobs,c}} \leq \Phi^{-1}(1-\alpha)  \frac{1}{(1- \eps_1) \gamma_{\uobs}}.
\end{align*}
Furthermore, since $\gamma_{\uobs,c} \leq \gamma_{\uobs}$ and $\gamma_{\obs,c} \leq 1$, for any fixed $\eps_2$, there exists a value $t_2(b, \gamma_t)$, depending on $\gamma_t$ and $b$, such that for all $\alpha \leq t_2(b, \gamma_t, \eps_2)$:
\begin{align*}
	\frac{\Phi^{-1}(1-\alpha) - \Phi^{-1}(1-b) \gamma_{\obs,c}}{\gamma_{\uobs,c}} \geq 	\frac{\Phi^{-1}(1-\alpha) - \Phi^{-1}(1-b) }{\gamma_{\uobs,c}} \geq 	\frac{\Phi^{-1}(1-\alpha)}{\gamma_{\uobs}} (1- \epsilon_2).
\end{align*}
Using these last two lines, and the fact that $\phi(x') \leq \phi(x)$ for $0 < x \leq x'$, we get that 
\begin{align*}
	 \phi\left(\frac{1}{(1-\epsilon_3)} \frac{\Phi^{-1}(1-\alpha)}{\gamma_\uobs}\right) \leq  \phi\left(\frac{\Phi^{-1}(1-\alpha) - \Phi^{-1}(1-b) \gamma_{\obs,c}}{\gamma_{\uobs,c}}\right) \leq \phi\left(\frac{\Phi^{-1}(1-\alpha)}{\gamma_{\uobs}} (1- \epsilon_3)\right).
\end{align*}
where $\eps_{3} = \max\{\eps_1, \eps_2\}$. Now, by \lemmaref{lemma:k-phi-of-Phi}, for $\alpha$ smaller than some value $t_3(\alpha, \eps_3) > 0$,
\begin{align*}
T_2 \leq \phi\left(\frac{\Phi^{-1}(1-\alpha)}{\gamma_{\uobs}} (1- \epsilon_3)\right) &\leq  \frac{1}{\sqrt{2\pi}}\left( 1.01 \sqrt{2\pi}  \alpha \Phi^{-1}(1-\alpha)\right)^{\frac{1}{\gamma_{\uobs}^2}(1- \epsilon_3)^2}. 
\end{align*}
And,
\begin{align*}
T_2\geq \phi\left(\frac{1}{(1-\epsilon_3)} \frac{\Phi^{-1}(1-\alpha)}{\gamma_\uobs}\right) & \geq   \frac{1}{\sqrt{2\pi}}\left(\sqrt{2\pi}  \alpha \Phi^{-1}(1-\alpha)\right)^{\frac{(1- \epsilon_3)^{-2}}{\gamma_{\uobs}^2}} 
 \\
 & = \frac{1}{\sqrt{2\pi}}\left(\sqrt{2\pi}  \alpha \Phi^{-1}(1-\alpha)\right)^{\frac{1}{\gamma_{\uobs}^2} (1 + \eps_4)^2}, 
\end{align*}
where we have rewritten the last expression in a more convenient form by letting $\eps_4 = \eps_3 / (1-\eps_3)$.

Analyzing $T_1$ is simple. Recall that $\mu \norm{\beta}^{-1} = \Phi^{-1}(1-b)$ where $b = \Prob{w_i >0}$. Hence,
\begin{align*}
	\phi\left(\frac{\mu}{\norm{\beta}}\right) = \phi(-\Phi^{-1}(1-b)) = \phi(\Phi^{-1}(1-b)).
\end{align*}
Then for all $b < .15$, \lemmaref{lemma:phi-of-Phi} ensures that:
\begin{align*}
	 b \Phi^{-1}(1-b)\leq \phi\left(\frac{\mu}{\norm{\beta}}\right) \leq 2b \Phi^{-1}(1-b).
\end{align*} 
Furthermore, by our initial calculation, we can set $\eps_1 < .5$ so that $.5 \gamma_{\uobs} \leq \gamma_{\uobs, c} \leq \gamma_{\uobs}.$ and hence: 
\begin{align*}
	\frac{1}{\gamma_\uobs} \leq \frac{1}{\gamma_{\uobs,c}} \leq \frac{2}{\gamma_{\uobs}}.
\end{align*}
This implies that $T_1$ satisfies the bounds: 
\begin{align*}
  b \Phi^{-1}(1-b) \leq 	T_1 \leq 4 \frac{1}{\gamma_{\uobs}} b \Phi^{-1}(1-b).
\end{align*}
Combining our bounds for $T_1$ and $T_2$, we get that for appropriately small $b, \alpha, \Delta_{r^2}$, 
\begin{align*}
	\Vprob(\alpha, \gamma_\obs + \Delta_{r^2}) - \Vprob(\alpha , \gamma_\obs) &\leq  \Delta_{r^2} \frac{2}{\gamma_{\uobs}} \sqrt{\frac{2}{\pi}} \cdot b \Phi^{-1}(1-b) \cdot \left(1.01\sqrt{2\pi} \alpha \Phi^{-1}(1-\alpha)\right)^{\frac{1}{\gamma_{\uobs}^2}(1- \epsilon_3)^2}\\
	\Vprob(\alpha, \gamma_\obs + \Delta_{r^2}) - \Vprob(\alpha , \gamma_\obs)  &\geq  \Delta_{r^2} \frac{1}{\gamma_{\uobs}} \frac{1}{\sqrt{2\pi}} \cdot b \Phi^{-1}(1-b)\left(\sqrt{2\pi}  \alpha \Phi^{-1}(1-\alpha)\right)^{\frac{1}{\gamma_{\uobs}^2} (1 + \eps_4)^2}.
\end{align*}
The final statement comes from combining these inequalities with those in \eqref{eq:probit_acces_bounds} and simplifying the constants. 

\subsection{Supporting Technical Lemmas}

\begin{lemma}[Leibniz's Rule]
\lemmalabel{lemma:leibniz}
\begin{align*}
    \frac{d}{dx} \int_{a(x)}^{b(x)}f(x,t)\,dt = f(x, b(x))\cdot \frac{d}{dx}b(x) - f(x, a(x))\cdot \frac{d}{dx}a(x) +
    \int_{a(x)}^{b(x)}\frac{\partial}{\partial x}f(x,t)\,dt 
\end{align*}
\end{lemma}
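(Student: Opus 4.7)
The plan is to reduce the identity to three standard facts by introducing the auxiliary three-variable function
\[
F(x, u, v) \;=\; \int_u^v f(x, t)\, dt,
\]
and then observing that the quantity of interest is simply $\tfrac{d}{dx} F(x, a(x), b(x))$. Under the usual regularity hypotheses (continuity of $f$ and $\partial_x f$ on an open set containing the trace of $(x, t)$ with $a(x) \le t \le b(x)$, and differentiability of $a, b$), the multivariable chain rule gives
\[
\frac{d}{dx} F(x, a(x), b(x)) \;=\; F_x(x, a(x), b(x)) \;+\; F_u(x, a(x), b(x)) \cdot a'(x) \;+\; F_v(x, a(x), b(x)) \cdot b'(x).
\]

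Next I would compute each partial in isolation. The $u$ and $v$ derivatives are immediate from the Fundamental Theorem of Calculus: $F_v(x, u, v) = f(x, v)$ and $F_u(x, u, v) = -f(x, u)$. The more delicate partial is
\[
F_x(x, u, v) \;=\; \int_u^v \frac{\partial}{\partial x} f(x, t)\, dt,
\]
which is the content of the standard ``differentiation under the integral sign'' theorem. To justify this step rigorously, I would write the difference quotient $[F(x+h, u, v) - F(x, u, v)]/h$ as the integral over $t \in [u, v]$ of $[f(x+h, t) - f(x, t)]/h$, apply the mean value theorem to bound the integrand by $\sup_{\xi} |\partial_x f(\xi, t)|$ on a compact rectangle, and pass to the limit $h \to 0$ by the dominated convergence theorem (or uniform continuity of $\partial_x f$ on a compact set).

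Substituting the three computed partials into the chain rule formula gives
\[
\frac{d}{dx} F(x, a(x), b(x)) \;=\; \int_{a(x)}^{b(x)} \frac{\partial}{\partial x} f(x, t)\, dt \;-\; f(x, a(x))\, a'(x) \;+\; f(x, b(x))\, b'(x),
\]
which is precisely the stated identity. The main obstacle is justifying the interchange of limit and integral in the computation of $F_x$; everything else is purely bookkeeping and sign tracking. Since the lemma is used in the paper only as a mechanical tool in computing derivatives of value functions defined via integrals against Gaussian densities (which are infinitely smooth with rapidly decaying derivatives), the required regularity hypotheses are trivially satisfied in all applications.
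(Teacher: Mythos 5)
Your proof is correct and is the standard textbook derivation; the paper itself states this lemma without any proof, treating it as a known calculus fact, so there is nothing to compare against on the paper's side. One small point worth flagging: in the paper's actual application (Lemma~\ref{lem:discrete_alpha_derivative}) the upper limit is $+\infty$, so the ``compact rectangle'' justification of differentiation under the integral sign does not literally apply; you need the dominated-convergence version you mention as an alternative, with the Gaussian factor $\phi(z_\obs)$ serving as the integrable dominating function on the unbounded domain. With that reading your argument covers the use cases and is complete.
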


\begin{lemma}
\lemmalabel{lemma:discrete_alpha_derivative}
\begin{align*}
	\frac{\partial}{\partial \alpha} \Vprob(\alpha, \gamma_\obs) = \Phi\left(\frac{\gamma_\obs \Phi^{-1}(1-\alpha) + \mu \norm{\beta}^{-1}}{\gamma_{\uobs}}  \right)
\end{align*}
\end{lemma}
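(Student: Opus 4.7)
The plan is to differentiate the closed-form integral expression for $\Vprob$ established in \propref{prop:value_f_probit} using Leibniz's rule (\lemmaref{lemma:leibniz}). Recall that
\begin{align*}
    \Vprob(\alpha, \gamma_\obs) = \int_{\Phi^{-1}(1-\alpha)}^\infty \Phi\!\left(\frac{\gamma_\obs z_\obs + \mu\norm{\beta}^{-1}}{\gamma_\uobs}\right) \phi(z_\obs)\, dz_\obs.
\end{align*}
Since the integrand carries no explicit $\alpha$ dependence and the upper limit is the constant $+\infty$, only the lower-limit contribution of Leibniz's rule survives: writing $a(\alpha) = \Phi^{-1}(1-\alpha)$ and letting $f(z_\obs)$ denote the integrand, we obtain $\partial_\alpha \Vprob(\alpha,\gamma_\obs) = -f(a(\alpha))\, a'(\alpha)$.

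Next I would simplify this using \lemmaref{lemma:derivative_inverse}, which gives $a'(\alpha) = -1/\phi(\Phi^{-1}(1-\alpha))$. Evaluating $f$ at $z_\obs = \Phi^{-1}(1-\alpha)$ produces a factor of $\phi(\Phi^{-1}(1-\alpha))$ multiplying $\Phi\!\left(\frac{\gamma_\obs \Phi^{-1}(1-\alpha) + \mu\norm{\beta}^{-1}}{\gamma_\uobs}\right)$. The two $\phi(\Phi^{-1}(1-\alpha))$ factors cancel, and the two minus signs (one from the lower-limit contribution in Leibniz, one from differentiating the inverse CDF) combine to a plus. What remains is precisely the claimed identity.

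There is essentially no substantive obstacle; the heavy lifting, namely reducing $\Vprob$ to a single one-dimensional integral over the observable-score variable $z_\obs$, was already carried out in \propref{prop:value_f_probit}. The only care required here is sign bookkeeping and verifying that the two $\phi(\Phi^{-1}(1-\alpha))$ factors really do cancel. The lemma is best viewed as a clean encapsulation of the one nontrivial differentiation needed later for the Taylor expansions in the proof of \theoremref{theorem:par_probit}.
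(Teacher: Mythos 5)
Your proposal is correct and follows exactly the paper's argument: apply Leibniz's rule to the integral form of $\Vprob$ from \propref{prop:value_f_probit}, observe that only the lower-limit term is nonzero, substitute $\frac{\partial}{\partial\alpha}\Phi^{-1}(1-\alpha) = -1/\phi(\Phi^{-1}(1-\alpha))$ from \lemmaref{lemma:derivative_inverse}, and cancel the $\phi(\Phi^{-1}(1-\alpha))$ factors with the signs combining to a plus. Nothing is missing.
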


\begin{proof}
By Leibniz's Rule (\lemmaref{lemma:leibniz}), the derivative has only one nonzero term: 
\begin{align*}
	\frac{\partial}{\partial \alpha} \int_{\Phi^{-1}(1-\alpha)}^\infty \Phi\left(\frac{\gamma_\obs z_\obs + \mu \norm{\beta}^{-1}}{\gamma_\uobs}\right) \phi(z_\obs) dz_{\obs} &= 
- \Phi\left(\frac{\gamma_\obs \Phi^{-1}(1-\alpha) + \mu \norm{\beta}^{-1}}{\gamma_{\uobs}}  \right) \phi(\Phi^{-1}(1-\alpha)) \left ( \frac{\partial}{\partial \alpha} \Phi^{-1}(1-\alpha)\right) 
\end{align*}
To finish, we can apply \lemmaref{lemma:derivative_inverse}, to evaluate, 
\begin{align*}
	\frac{\partial}{\partial \alpha} \Phi^{-1}(1-\alpha) = -(\phi(\Phi^{-1}(1-\alpha)))^{-1},
\end{align*}
and then simplify. 
\end{proof}

\begin{lemma}
\lemmalabel{lemma:discrete_derivative_gamma}
\begin{align*}
	\frac{\partial}{\partial \gamma_{\obs}}\Vprob(\alpha, \gamma_\obs) = 	\frac{1}{\gamma_\uobs} \phi\left(\frac{\mu}{\norm{\beta}} \right) \phi\left(\frac{\Phi^{-1}(1-\alpha) + \mu \norm{\beta}^{-1} \gamma_{\obs}}{\gamma_{\uobs}}\right)
\end{align*}
\end{lemma}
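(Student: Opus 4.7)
The plan is to differentiate under the integral sign and then simplify the resulting Gaussian integrand via completion of the square. Since the limits of integration in the definition of $\Vprob(\alpha,\gamma_\obs)$ from \propref{prop:value_f_probit} depend only on $\alpha$ (not on $\gamma_\obs$), Leibniz's rule (\lemmaref{lemma:leibniz}) gives
\begin{equation*}
\frac{\partial}{\partial \gamma_\obs}\Vprob(\alpha,\gamma_\obs) \;=\; \int_{\Phi^{-1}(1-\alpha)}^\infty \phi(z_\obs)\,\frac{\partial}{\partial \gamma_\obs}\Phi\!\left(\frac{\gamma_\obs z_\obs + c}{\gamma_\uobs}\right)\,dz_\obs,
\end{equation*}
where I abbreviate $c = \mu\norm{\beta}^{-1}$ and use $\gamma_\uobs = \sqrt{1-\gamma_\obs^2}$. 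Writing $h(\gamma_\obs,z_\obs) = (\gamma_\obs z_\obs + c)/\gamma_\uobs$, the chain rule together with $\partial \gamma_\uobs/\partial \gamma_\obs = -\gamma_\obs/\gamma_\uobs$ yields, after a line of algebra, $\partial h/\partial \gamma_\obs = (z_\obs + \gamma_\obs c)/\gamma_\uobs^3$.

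Next I would massage the product $\phi(h)\,\phi(z_\obs)$ into something that can be integrated explicitly. Expanding the exponent and using $\gamma_\obs^2 + \gamma_\uobs^2 = 1$, one obtains
\begin{equation*}
h^2 + z_\obs^2 \;=\; \frac{z_\obs^2 + 2\gamma_\obs c\,z_\obs + c^2}{\gamma_\uobs^2} \;=\; \frac{(z_\obs + \gamma_\obs c)^2}{\gamma_\uobs^2} + c^2,
\end{equation*}
so that $\phi(h)\,\phi(z_\obs) = \phi(c)\,\phi\!\left((z_\obs + \gamma_\obs c)/\gamma_\uobs\right)$. Substituting back into the integral,
\begin{equation*}
\frac{\partial}{\partial \gamma_\obs}\Vprob(\alpha,\gamma_\obs) \;=\; \frac{\phi(c)}{\gamma_\uobs^3}\int_{\Phi^{-1}(1-\alpha)}^\infty (z_\obs + \gamma_\obs c)\,\phi\!\left(\frac{z_\obs + \gamma_\obs c}{\gamma_\uobs}\right) dz_\obs.
\end{equation*}

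Finally, the change of variables $u = (z_\obs + \gamma_\obs c)/\gamma_\uobs$, with $dz_\obs = \gamma_\uobs\,du$ and lower limit $u_0 = (\Phi^{-1}(1-\alpha) + \gamma_\obs c)/\gamma_\uobs$, reduces the problem to the elementary identity $\int_{u_0}^\infty u\,\phi(u)\,du = \phi(u_0)$, which follows from $u\phi(u) = -\phi'(u)$. Putting everything together produces exactly the claimed expression $\gamma_\uobs^{-1}\phi(\mu/\norm{\beta})\,\phi\!\left((\Phi^{-1}(1-\alpha) + \gamma_\obs \mu\norm{\beta}^{-1})/\gamma_\uobs\right)$. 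The only delicate step is keeping track of the factor $\gamma_\uobs$ from the chain rule alongside the one introduced by the change of variables; everything else is routine Gaussian algebra.
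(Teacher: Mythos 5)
Your proposal is correct and follows essentially the same route as the paper: differentiation under the integral via Leibniz's rule, the chain-rule factor $(z_\obs+\gamma_\obs c)/\gamma_\uobs^3$, and the complete-the-square identity $\phi(h)\phi(z_\obs)=\phi(c)\,\phi\bigl((z_\obs+\gamma_\obs c)/\gamma_\uobs\bigr)$, which is exactly the paper's Lemma~\ref{lem:pdf_product}. The only (cosmetic) difference is that you evaluate the remaining integral in one step via the substitution $u=(z_\obs+\gamma_\obs c)/\gamma_\uobs$ and $u\phi(u)=-\phi'(u)$, whereas the paper splits it into two antiderivatives and recombines; both give the same answer.
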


\begin{proof}
Starting with the chain rule, we have that,
\begin{align*}
	\frac{\partial}{\partial \gamma_{\obs}} 	\int_{\Phi^{-1}(1-\alpha)}^\infty \Phi\left(\frac{\gamma_{\obs} z_\obs + \mu \norm{\beta}^{-1}}{\gamma_{\uobs}}\right) \phi(z_\obs) dz_{\obs},
\end{align*}
is equal to:
\begin{align}
\label{eq:chain_rule}
 	\int_{\Phi^{-1}(1-\alpha)}^\infty  \phi\left(\frac{\gamma_{\obs} z_\obs + \mu \norm{\beta^{-1}}}{\gamma_{\uobs}}\right) \phi(z_\obs) \left( z_\obs \cdot   \frac{\partial}{\partial \gamma_{\obs}}  \left(\frac{\gamma_{\obs}}{\gamma_{\uobs}} \right) + \frac{\partial}{\partial \gamma_{\obs}}  \left( \frac{\mu \norm{\beta}^{-1}}{\gamma_{\uobs}}\right)\right).
\end{align}
Now, we can calculate the remaining derivatives explicitly,
\begin{align*}
	\frac{\partial}{\partial \gamma_{\obs}} \left(\frac{\gamma_{\obs}}{\gamma_{\uobs}} \right)&= \frac{\partial}{\partial \gamma_{\obs}} \left(\frac{\gamma_{\obs}}{\sqrt{1 - \gamma_{\obs}^2}} \right) = \frac{1}{(1 - \gamma_\obs^2)^{3/2}} = \frac{1}{\gamma_t^3} \\ 
	\frac{\partial}{\partial \gamma_{\obs}} \left(\frac{\mu \norm{\beta}^{-1}}{\gamma_{\uobs}} \right)&= \frac{\partial}{\partial \gamma_{\obs}} \left(\frac{\mu \norm{\beta}^{-1}}{\sqrt{1 - \gamma_{\obs}^2}} \right) = \frac{\mu \norm{\beta}^{-1} \gamma_\obs}{(1 - \gamma_\obs^2)^{3/2}} = \frac{\mu \norm{\beta}^{-1} \gamma_\obs}{\gamma_t^3},
\end{align*}
and we can rewrite the product of pdfs as: 
\begin{align*}
\phi(\frac{\gamma_{\obs} z_\obs + \mu \norm{\beta}^{-1}}{\gamma_{\uobs}}) \phi(z_\obs) = \phi\left( \frac{\mu}{\norm{\beta}} \right) \phi\left(\frac{z_\obs + \mu \norm{\beta}^{-1} \gamma_{\obs}}{\gamma_{\uobs}}\right).
\end{align*}

Going back to the initial expression in \eqref{eq:chain_rule} and substituting in these last three identities, we have that the derivative can be written as: 
\begin{align}
\label{eq:missing_integrals}
	 \frac{1}{\gamma_t^3} \phi\left( \frac{\mu}{\norm{\beta}} \right) 	\int_{\Phi^{-1}(1-\alpha)}^\infty  
	 \phi\left(\frac{z_\obs + \mu \norm{\beta}^{-1} \gamma_{\obs}}{\gamma_{\uobs}}\right)	
		 \left( z_\obs  + \mu \norm{\beta}^{-1} \gamma_\obs\right)dz_{\obs}.
\end{align}
These integrals we can indeed evaluate: 
\begin{align*}
	\int_{\Phi^{-1}(1-\alpha)}^\infty  
	 \phi\left(\frac{z_\obs + \mu \norm{\beta}^{-1} \gamma_{\obs}}{\gamma_{\uobs}}\right)	
		  dz_{\obs}& =   \gamma_\obs \cdot \Phi\left(\frac{z_\obs + \mu \norm{\beta}^{-1} \gamma_{\obs}}{\gamma_{\uobs}}\right)\Bigg|_{\Phi^{-1}(1-\alpha)}^\infty  \\
		  & = \gamma_{\uobs}\left( 1 - \Phi\left(\frac{\Phi^{-1}(1-\alpha) + \mu \norm{\beta}^{-1} \gamma_{\obs}}{\gamma_{\uobs}}\right)  \right).
\end{align*}
And,
\begin{align*}
\int_{\Phi^{-1}(1-\alpha)}^\infty  
	 \phi\left(\frac{z_\obs + \mu \norm{\beta}^{-1} \gamma_{\obs}}{\gamma_{\uobs}}\right)	
		  z_\obs dz_{\obs} &= - \gamma_\uobs^2 \left[ 	 \phi\left(\frac{z_\obs + \mu \norm{\beta}^{-1} \gamma_{\obs}}{\gamma_{\uobs}}\right)	+ \frac{\mu \norm{\beta}^{-1}\gamma_\obs}{\gamma_{\uobs}} \Phi\left(\frac{z_\obs + \mu \norm{\beta}^{-1} \gamma_{\obs}}{\gamma_{\uobs}}\right)  \right] \Bigg|_{\Phi^{-1}(1-\alpha)}^\infty ,
		  \end{align*}
is equal to:
		  \begin{align*}
 - \gamma_\uobs^2 \left[ - \phi\left(\frac{\Phi^{-1}(1-\alpha) + \mu \norm{\beta}^{-1} \gamma_{\obs}}{\gamma_{\uobs}}\right) + \frac{\mu \norm{\beta}^{-1}\gamma_\obs}{\gamma_{\uobs}} \left( 1 - \Phi\left(\frac{\Phi^{-1}(1-\alpha) + \mu \norm{\beta}^{-1} \gamma_{\obs}}{\gamma_{\uobs}}\right)   \right)\right].
\end{align*}
Plugging in the solutions to the integrals back into \eqref{eq:missing_integrals}, we get that the final expression is: 
\begin{align*}
	\frac{1}{\gamma_\uobs} \phi\left(\frac{\mu}{\norm{\beta}} \right) \phi\left(\frac{\Phi^{-1}(1-\alpha) + \mu \norm{\beta}^{-1} \gamma_{\obs}}{\gamma_{\uobs}}\right).
\end{align*}
\end{proof}

\begin{lemma}
Assume that $\gamma_\obs^2 + \gamma_{\uobs}^2 = 1$ and that $\gamma_{\obs}, \gamma_{\uobs}$ and both in $(0,1)$. Then, 
\lemmalabel{lemma:pdf_product}
\begin{align*}
\phi(\frac{\gamma_{\obs} z_\obs + \mu \norm{\beta}^{-1}}{\gamma_{\uobs}}) \phi(z_\obs) = \phi\left( \frac{\mu}{\norm{\beta}} \right) \phi\left(\frac{z_\obs + \mu \norm{\beta}^{-1} \gamma_{\obs}}{\gamma_{\uobs}}\right).
\end{align*}
\end{lemma}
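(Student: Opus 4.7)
The statement is an algebraic identity about Gaussian densities, so my plan is to substitute the definition of $\phi$ on each side and verify that the exponents agree. Writing $m = \mu \norm{\beta}^{-1}$ for brevity, both sides are of the form $(2\pi)^{-1}\exp(-Q/2)$ for a quadratic $Q$ in $z_\obs$ and $m$, and the whole identity reduces to checking that the two quadratics coincide under the constraint $\gamma_\obs^2 + \gamma_\uobs^2 = 1$.

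Concretely, I would first compute the exponent on the left-hand side:
\begin{align*}
\frac{(\gamma_\obs z_\obs + m)^2}{\gamma_\uobs^2} + z_\obs^2
  = \frac{\gamma_\obs^2 z_\obs^2 + 2\gamma_\obs z_\obs m + m^2 + \gamma_\uobs^2 z_\obs^2}{\gamma_\uobs^2}
  = \frac{z_\obs^2 + 2\gamma_\obs z_\obs m + m^2}{\gamma_\uobs^2},
\end{align*}
where the last equality uses $\gamma_\obs^2 + \gamma_\uobs^2 = 1$ to collapse the coefficient of $z_\obs^2$. Then I would compute the exponent on the right-hand side:
\begin{align*}
m^2 + \frac{(z_\obs + m\gamma_\obs)^2}{\gamma_\uobs^2}
  = \frac{m^2 \gamma_\uobs^2 + z_\obs^2 + 2 z_\obs m \gamma_\obs + m^2 \gamma_\obs^2}{\gamma_\uobs^2}
  = \frac{m^2 + z_\obs^2 + 2 z_\obs m \gamma_\obs}{\gamma_\uobs^2},
\end{align*}
again via $\gamma_\obs^2 + \gamma_\uobs^2 = 1$. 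The two expressions agree, so the exponents match term-by-term, and since both sides share the same $1/(2\pi)$ prefactor, the identity follows.

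There is really no obstacle here beyond a short completion-of-the-square calculation; the key conceptual point is that the cross term $2\gamma_\obs z_\obs m / \gamma_\uobs^2$ is the same on both sides, while the pure $z_\obs^2$ and $m^2$ terms each get rescaled by $\gamma_\obs^2 + \gamma_\uobs^2 = 1$. No generality is lost by the constraints $\gamma_\obs, \gamma_\uobs \in (0,1)$; they are only needed so that every denominator $\gamma_\uobs$ is nonzero. In the paper this lemma is invoked precisely in the step of the proof of \lemmaref{lemma:discrete_derivative_gamma} where one rewrites a product of two Gaussian densities, one in $z_\obs$ and one in the affine function $(\gamma_\obs z_\obs + m)/\gamma_\uobs$, as a Gaussian in $z_\obs$ alone (with shifted mean and scaled variance), times a constant depending only on $m$.
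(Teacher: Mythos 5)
Your proof is correct and is essentially the same argument as the paper's: both reduce the identity to a direct comparison of the Gaussian exponents using $\gamma_\obs^2 + \gamma_\uobs^2 = 1$ (the paper completes the square on the left-hand side, while you expand both sides and match the quadratics, which is the same calculation organized slightly differently). Your closing remark about where the lemma is used is also accurate.
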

\begin{proof}
The main idea is to expand out the exponents and complete the square:
\begin{align*}
	\phi(\frac{\gamma_{\obs} z_\obs + \mu \norm{\beta}^{-1}}{\gamma_{\uobs}}) \phi(z_\obs) & = \frac{1}{2\pi} \exp\left( -\frac{(\gamma_{\obs} z_\obs + \mu \norm{\beta}^{-1})^2}{2\gamma_{\uobs}^2}  - \frac{z_\obs^2}{2}\right) \\ 
	& =  \frac{1}{2\pi} \exp\left( -\frac{((\gamma_{\obs}^2 + \gamma_{\uobs}^2) z_\obs^2 + \mu^2 \norm{\beta}^{-2} +2\mu \norm{\beta}^{-1} \gamma_{\obs} z_{\obs} )}{2\gamma_{\uobs}^2} \right) \\ 
	& = \frac{1}{2\pi} \exp\left( -\frac{\mu^2 \norm{\beta}^{-2}}{2\gamma_{\uobs}^2}\right) \exp\left(  -\frac{(z_\obs^2  +2\mu \norm{\beta}^{-1} \gamma_{\obs} z_{\obs} \pm \mu^2  \norm{\beta}^{-1}\gamma_{\obs}^2 )}{2\sigma_{\uobs}^2} \right) \\ 
	& = \frac{1}{2\pi}  \exp\left( -\frac{\mu^2 \norm{\beta}^{-2}}{2\gamma_{\uobs}^2}\right)  \exp\left(  \frac{-(z_\obs +  \mu \norm{\beta}^{-1} \gamma_{\obs})^2 + \mu^2 \norm{\beta}^{-2} \gamma_{\obs}^2 )}{2\gamma_{\uobs}^2} \right) \\ 
	& = \frac{1}{2\pi}  \exp\left( -\frac{\mu^2 \norm{\beta}^{-2}}{2\gamma_{\uobs}^2}\right) \exp \left( - \frac{1}{2}\left( \frac{z_\obs + \mu \norm{\beta}^{-1} \gamma_{\obs}}{\gamma_{\uobs}} \right)^2 \right)
 \exp\left( \frac{\mu^2 \norm{\beta}^{-2}\gamma_{\obs}^2}{2\gamma_{\uobs}^2} \right).
\end{align*}
In the third line, we used the fact that $1 = \gamma_{\uobs}^2 + \gamma_{\obs}^2$. Using this same identity, we can further simplify two of the remaining terms as:
\begin{align*}
 \exp\left( -\frac{\mu^2 \norm{\beta}^{-2}}{2\gamma_{\uobs}^2}\right) \exp\left( \frac{\mu^2 \norm{\beta}^{-2}\gamma_{\obs}^2}{2\gamma_{\uobs}^2} \right) = \exp \left(\frac{-\mu \norm{\beta}^{-2} (1 - \gamma_\obs^2)}{\gamma_{\uobs}^2} \right) = \exp\left( -\frac{\mu^2}{\norm{\beta}^2}\right).
\end{align*}
Hence, 
\begin{align*}
\phi(\frac{\gamma_{\obs} z_\obs + \mu \norm{\beta}^{-1}}{\gamma_{\uobs}}) \phi(z_\obs) = \phi\left( \frac{\mu}{\norm{\beta}} \right) \phi\left(\frac{z_\obs + \mu \norm{\beta}^{-1} \gamma_{\obs}}{\gamma_{\uobs}}\right).
\end{align*}
\end{proof}

\begin{lemma}
\lemmalabel{lemma:k-phi-of-Phi}
For any $k, \epsilon > 0$ there exists a value $t(\eps) > 0$, that depends only $\eps$, such that for all $\alpha \leq t(\eps)$, the followings inequalities hold: 
\begin{align*}
 \frac{1}{\sqrt{2\pi}} (\sqrt{2\pi} \alpha)^{k^2} (\Phi^{-1}(1-\alpha))^{k^2}
 \leq	\phi(k \cdot \Phi^{-1}(1-\alpha)) \leq \frac{1}{\sqrt{2\pi}} \left( (1+ \epsilon)  \cdot \sqrt{2\pi} \alpha \Phi^{-1}(1-\alpha) \right)^{k^2}
\end{align*}
\end{lemma}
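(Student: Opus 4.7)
The plan is to reduce the claim to \lemmaref{lemma:phi-of-Phi} via the multiplicative structure of the Gaussian density. The key observation is the algebraic identity
\begin{align*}
\phi(kT) = \frac{1}{\sqrt{2\pi}} \exp\!\left(-\frac{k^2 T^2}{2}\right) = \frac{1}{\sqrt{2\pi}} \left(\exp\!\left(-\frac{T^2}{2}\right)\right)^{k^2} = \frac{1}{\sqrt{2\pi}}\bigl(\sqrt{2\pi}\,\phi(T)\bigr)^{k^2},
\end{align*}
where I take $T = \Phi^{-1}(1-\alpha) > 0$ (which holds whenever $\alpha < 1/2$). This turns the question of controlling $\phi(k\Phi^{-1}(1-\alpha))$ into the question of controlling $\phi(\Phi^{-1}(1-\alpha))$, raised to the $k^2$ power.

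Next I apply \lemmaref{lemma:phi-of-Phi}, which says that for $\alpha < 0.15$,
\begin{align*}
\alpha\,\Phi^{-1}(1-\alpha) \;\leq\; \phi(\Phi^{-1}(1-\alpha)) \;\leq\; \alpha\,\Phi^{-1}(1-\alpha)\,(1 + f(\alpha)),
\end{align*}
where $f(\alpha) = \Phi^{-1}(1-\alpha)^2/(\Phi^{-1}(1-\alpha)^2 - 1) - 1 = o(1)$ as $\alpha \to 0$. Substituting the lower bound into the identity above gives the lower bound of the lemma verbatim. Substituting the upper bound yields
\begin{align*}
\phi(k\,\Phi^{-1}(1-\alpha)) \;\leq\; \frac{1}{\sqrt{2\pi}}\bigl(\sqrt{2\pi}\,\alpha\,\Phi^{-1}(1-\alpha)\bigr)^{k^2}\,(1 + f(\alpha))^{k^2}.
\end{align*}

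To finish the upper bound, I choose $t(\epsilon)$ so that $f(\alpha) \leq \epsilon$ for all $\alpha \leq t(\epsilon)$; this is possible because $f(\alpha) \to 0$ as $\alpha \to 0$ and $f$ is continuous and monotone in a neighborhood of zero. Crucially, the condition $f(\alpha) \leq \epsilon$ is independent of $k$, so $t(\epsilon)$ depends only on $\epsilon$. Then $(1+f(\alpha))^{k^2} \leq (1+\epsilon)^{k^2}$, and absorbing this factor inside the $k^2$-th power yields the stated upper bound $\frac{1}{\sqrt{2\pi}}\bigl((1+\epsilon)\sqrt{2\pi}\,\alpha\,\Phi^{-1}(1-\alpha)\bigr)^{k^2}$.

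There is no genuine obstacle here; the whole lemma is a one-line consequence of the Gaussian density identity together with the Mills-type bounds already recorded as \lemmaref{lemma:phi-of-Phi}. The only point requiring any care is emphasizing that the absorption of the $(1+f(\alpha))^{k^2}$ factor into $(1+\epsilon)^{k^2}$ is uniform in $k$, so $t(\epsilon)$ can be taken independent of $k$, as asserted.
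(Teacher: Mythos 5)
Your proposal is correct and follows essentially the same route as the paper: both factor $\phi(k\,\Phi^{-1}(1-\alpha))$ as a constant times $\phi(\Phi^{-1}(1-\alpha))^{k^2}$ and then invoke \lemmaref{lemma:phi-of-Phi}, choosing $\alpha$ small enough that $f(\alpha)\leq\epsilon$. Your explicit remark that the threshold $t(\epsilon)$ is uniform in $k$ is a small but welcome clarification that the paper leaves implicit.
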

\begin{proof}
The proof is just a direct calculation that follows from expanding out the definition of the Gaussian PDF $\phi(\cdot)$, and then applying \lemmaref{lemma:phi-of-Phi}:
\begin{align*}
	\phi(k \cdot \Phi^{-1}(1-\alpha))  &= \frac{1}{\sqrt{2\pi}} \exp \left( \frac{-k^2 (\Phi^{-1}(1-\alpha))^2}{2} \right) \\
	&= (2\pi)^{\frac{k^2 -1}{2}} \left[\frac{1}{\sqrt{2\pi}} \exp \left( \frac{- (\Phi^{-1}(1-\alpha))^2}{2} \right) \right]^{k^2} \\
	& =  (2\pi)^{\frac{k^2 -1}{2}} \left[\phi(\Phi^{-1}(1-\alpha)) \right]^{k^2}.
\end{align*}
Moving onto the second part, for $\alpha$ small enough, we have that by \lemmaref{lemma:phi-of-Phi}, 
\begin{align*}
	\alpha \Phi^{-1}(1-\alpha) \leq \phi(\Phi^{-1}(1-\alpha)) \leq \alpha \Phi^{-1}(1-\alpha) (1 + \eps). 
\end{align*}
Therefore, 
\begin{align*}
(2\pi)^{\frac{k^2 -1}{2}} \left[\alpha \cdot  \Phi^{-1}(1-\alpha) \right]^{k^2}	\leq \phi(k \cdot \Phi^{-1}(1-\alpha)) \leq (2\pi)^{\frac{k^2 -1}{2}} \left[ (1+ \eps) \alpha \cdot  \Phi^{-1}(1-\alpha) \right]^{k^2}.
\end{align*}
The precise inequality then follows from simplifying this expression above.
\end{proof}
\section{Simulation Details}

For the visualizations in \figureref{fig:par_linear} and \figureref{fig:par_probit}, we compute the prediction access ratios numerically. Using our closed form expressions regarding the value functions $\Vlin$ and $\Vprob$ from \propositionref{prop:value_f_linear} and \propositionref{prop:value_f_probit} we compute the cost benefit ratios with $\Delta_\alpha = \Delta_{r^2} = .01$ and $\mu = 1$, $\beta=10$ for the linear case. For the probit case, we set $b=.1$ and $\Delta_\alpha = \Delta_{r^2} = 1e-3$.

\end{document}